\journal{}
\def\ps@pprintTitle{%
 \let\@oddhead\@empty
 \let\@evenhead\@empty
 \def\@oddfoot{\hfill\thepage}%
 \let\@evenfoot\@oddfoot}
\providecommand{\doi}[1]{\href{https://doi.org/#1}{doi:#1}}
\renewcommand{\doi}[1]{%
 \href{https://doi.org/#1}{\nolinkurl{doi:#1}}%
}
\newcolumntype{C}[1]{>{\centering\arraybackslash}p{#1}} 
\numberwithin{equation}{section}
\theoremstyle{plain}
\newtheorem{theorem}{Theorem}[section]
\newtheorem{proposition}[theorem]{Proposition}
\newtheorem{lemma}[theorem]{Lemma}
\theoremstyle{definition}
\newtheorem{remark}[theorem]{Remark}
\newcommand{\N}{\mathbb{N}}
\newcommand{\R}{\mathbb{R}}
\newcommand{\PP}{\mathsf{P}} 
\newcommand{\EE}{\mathsf{E}} 
\newcommand{\Bias}{\mathsf{Bias}} 
\newcommand{\Var}{\mathsf{Var}} 
\newcommand{\Cov}{\mathsf{Cov}} 
\newcommand{\bb}[1]{\boldsymbol{#1}}
\newcommand{\OO}{\mathcal{O}}
\newcommand{\oo}{\mathrm{o}}
\newcommand{\rd}{\mathrm{d}}
\newcommand{\ind}{\mathds{1}}
\newcommand{\e}{\varepsilon}
\newcommand{\tr}{\mathrm{tr}}
\newcommand{\etr}{\mathrm{etr}}
\newcommand{\vecc}{\mathrm{vec}}
\newcommand{\vecp}{\mathrm{vecp}}
\newcommand{\argmin}{\operatorname{argmin}}
\newcommand{\argmax}{\operatorname{argmax}}
\begin{document}

\begin{frontmatter}

\title{Wishart kernel density estimation for strongly mixing time series \\on the cone of positive definite matrices}

\author[a1]{L\'eo Belzile}
\author[a2]{Christian Genest}
\author[a3]{Fr\'ed\'eric Ouimet\corref{mycorrespondingauthor}}
\author[a4]{Donald Richards}

\address[a1]{HEC Montr\'eal, Montr\'eal, QC, Canada}
\address[a2]{McGill University, Montr\'eal, QC, Canada}
\address[a3]{Universit\'e du Qu\'ebec \`a Trois-Rivi\`eres, Trois-Rivi\`eres, QC, Canada}
\address[a4]{Penn State University, University Park, PA, USA\vspace{-5mm}}

\cortext[mycorrespondingauthor]{Corresponding author. Email address: frederic.ouimet2@uqtr.ca}

\begin{abstract}
A Wishart kernel density estimator (KDE) is introduced for density estimation in the cone of positive definite matrices. The estimator is boundary-aware and mitigates the boundary bias suffered by conventional KDEs, while remaining simple to implement. Its mean squared error, uniform strong consistency on expanding compact sets, and asymptotic normality are established under the Lebesgue measure and suitable mixing conditions. This work represents the first study of density estimation for dependent data on this space under any metric. For independent observations, an asymptotic upper bound on the mean absolute error is also derived. A simulation study compares the performance of the Wishart KDE with that of the log-Gaussian KDE, another boundary-aware estimator based on the matrix-variate lognormal distribution proposed by Schwartzman [\textit{Int. Stat. Rev.}, 2016, 84(3), 456--486], and with the naive Gaussian KDE on the ambient Euclidean space. When estimating the stationary marginal density of a Wishart autoregressive process for several autoregressive coefficient matrices and innovation covariance matrices, the Wishart KDE exhibits the best overall accuracy and stability. The practical utility of the Wishart KDE is illustrated by estimating the marginal density of a one-year time series of realized covariance matrices computed from 5-minute intra-day returns on Amazon Corp.\ shares and on the Standard \& Poor's 500 exchange-traded fund. All code is publicly available via the \textsf{R} package \texttt{ksm} to facilitate implementation of the method and reproducibility of the findings.
\end{abstract}

\begin{keyword} 
Asymmetric kernel, boundary bias, density estimation, lognormal distribution, realized covariance, smoothing, strong mixing, time series, Wishart autoregressive process, Wishart kernel
\MSC[2020]{Primary: 62G07 Secondary: 60F05, 62E20, 62G05, 62G20, 62H12, 62P05}
\end{keyword}

\end{frontmatter}

\thispagestyle{empty}

\section{Introduction}\label{sec:intro}

In modern statistical applications, many data objects are naturally represented as (symmetric) positive definite matrices. For instance, in diffusion tensor imaging (DTI), each voxel of a brain scan is associated with a $3\times 3$ positive definite matrix that models the local diffusion of water molecules within biological tissue \citep{doi:10.1016/j.nec.2010.12.004}. These matrices provide critical insights into the anisotropic diffusion of water, revealing structural information about white-matter tracts in the brain and enabling the study of neurological disorders, connectivity, and tissue microstructure. In astrophysics, measurements of the Stokes parameters for polarized light are accompanied by a positive definite covariance matrix that quantifies uncertainty and correlations \citep{doi:10.1051/0004-6361/201322271}, a structure essential for understanding the physical properties of light-emitting sources such as stars and galaxies. Likewise, in financial econometrics, realized covariance matrices track the joint variability among multiple assets based on intraday returns, providing a key input for portfolio optimization, risk management, and the modeling of market dynamics \citep{MR2328413}. Beyond these fields, positive definite matrices also arise in radar signal processing and microwave engineering \citep{MR3100414}, and various other areas where covariance-like structures are pivotal to understanding multivariate dependencies.

Despite their prevalence, random positive definite matrices pose unique challenges for analysis compared to random vectors, as their structure depends heavily on the perspective one adopts. Two equally rich and important points of view are briefly described below.

On the one hand, the set of all $d\times d$ real symmetric positive definite matrices,
\[
\mathcal{S}_{++}^d = \left\{M\in \mathcal{S}^d : \forall_{\bb{x} \neq \bb{0}_d} \; \bb{x}^{\top} M \bb{x} > 0 \right\},
\]
is a convex cone in the vector space $\mathcal{S}^d$ of $d\times d$ real symmetric matrices. This perspective arises naturally when equipping the latter with the \emph{Euclidean metric}, defined, for every $U,V\in \mathcal{S}^d$, via the Frobenius norm, viz.
\[
d_E(U,V) = \|U - V\|_F = \sqrt{\tr\{(U - V)^2\}}.
\]
Under this metric, the set $\mathcal{S}_{++}^d$, although not a vector space, retains certain linear-like properties of the surrounding Hilbert space $(\mathcal{S}^d, \|\cdot\|_F) \cong (\R^{d(d + 1)/2}, \|\cdot\|_2)$, making it computationally straightforward to adapt classical statistical methods. Moreover, this view emphasizes the additive structure of positive definite matrices, which is particularly useful in applications where positive definite matrices arise from summations (e.g., realized covariance matrices in finance). However, this perspective largely ignores the inherent curvature of the space and can lead to inaccuracies in applications where the assumed additive structure does not align with the observed data.

\smallskip
On the other hand, the cone $\mathcal{S}_{++}^d$ can be viewed as a Riemannian manifold, reflecting the intrinsic geometry of positive definite matrices. A manifold is a topological space which is locally homeomorphic to Euclidean space, with a Riemannian manifold distinguished by a smoothly varying inner product on its tangent spaces. The tangent space at each point $M\in \mathcal{S}_{++}^d$ is $\mathcal{S}^d$, and a commonly chosen inner product is $\langle U,V \rangle_M = \tr(M^{-1} U M^{-1} V)$ for $U,V\in \mathcal{S}^d$. The corresponding {\it affine-invariant metric} is defined, for every $M,N\in \mathcal{S}_{++}^d$, by
\[
d_A(M,N) = \|\log(M^{-1/2} N M^{-1/2})\|_F,
\]
where `$\log$' is the matrix logarithm and $M^{1/2}$ is the symmetric square root of $M$. This metric makes $\mathcal{S}_{++}^d$ a non-compact Riemannian symmetric space with non-positive curvature. It also admits a closed-form geodesic: the unique geodesic joining $M,N\in \mathcal{S}_{++}^d$ is given, for every $ t\in [0,1]$, by
\[
\gamma(t) = M^{1/2} \exp\{t \, \log(M^{-1/2} N M^{-1/2})\} M^{1/2}.
\]
This makes certain computations intrinsic to $\mathcal{S}_{++}^d$, such as interpolation and averaging. For example, the geodesic midpoint $M \# N = M^{1/2} (M^{-1/2} N M^{-1/2})^{1/2} M^{1/2}$ stays in the cone and serves as the affine-invariant geometric mean \citep{MR3580425}.

The present paper addresses the problem of density estimation for dependent positive definite matrix data under the Euclidean metric, as detailed in Section~\ref{sec:density.estimation}. Before describing the setting and outlining the main contributions, the rationale for choosing the Euclidean metric over non-Euclidean alternatives is explained in Section~\ref{sec:E.vs.NE} through two key application areas: DTI (Section~\ref{sec:E.vs.NE.DTI}) and time series of realized covariance matrices in finance (Section~\ref{sec:E.vs.NE.finance}).

\subsection{Euclidean versus non-Euclidean metrics}\label{sec:E.vs.NE}

\subsubsection{For DTI data}\label{sec:E.vs.NE.DTI}

The Euclidean metric originally was the standard choice for analyzing DTI data, primarily due to its simplicity and compatibility with classical statistical methods. However, the cone $\mathcal{S}_{++}^3$, in which diffusion tensors reside, has an intrinsic non-Euclidean geometry, so the affine-invariant metric $d_A$ was proposed \citep{inria-00070756,inria-00070755,MR2283616,doi:10.1007/s11263-005-3222-z,doi:10.1016/j.sigpro.2005.12.018}. It ensures unique geodesics that remain in $\mathcal{S}_{++}^d$, linear evolution of the log-determinants along geodesics ($|\gamma(t)| = |M|^{1-t} |N|^t, ~t\in [0,1]$), and fundamental geometric properties such as invariance under congruence transformations $X\mapsto G X G^{\top}$ for any invertible $G$ (covering rotations, scalings, and shears), and invariance under the inversion map $X\mapsto X^{-1}$ \citep{MR3580425}. In contrast, the Euclidean metric lacks these invariances; although convex combinations (e.g., simple averaging or convex-weight interpolation) preserve positive definiteness because $\mathcal{S}_{++}^d$ is convex, operations that involve negative weights or extrapolation (e.g., certain filtering and some resampling schemes in registration) can leave the cone and produce tensors that are no longer positive definite \citep{inria-00070756,inria-00070755,doi:10.1007/s11263-005-3222-z,doi:10.1016/j.sigpro.2005.12.018}. Despite these theoretical advantages, the affine-invariant approach is computationally more demanding, as it requires geodesic computations involving matrix logarithms and exponentials.

This challenge motivated the development of the log-Euclidean metric \citep{doi:10.1002/mrm.20965}, which simplifies computations by mapping $\mathcal{S}_{++}^3$ to the vector space $\mathcal{S}^3$ via the matrix logarithm and then applying Euclidean operations there. The log-Euclidean metric retains many of the affine-invariant metric's desirable properties while being significantly less computationally intensive \citep{MR3580425}. Nevertheless, it still entails matrix logarithms and exponentials and is typically more costly computationally than working with the Euclidean metric, which avoids them altogether.

Recent insights and simulations have rekindled interest in the Euclidean metric for certain DTI analyses; see, e.g., \citet{doi:10.1002/mrm.21229,doi:10.1016/j.neuroimage.2009.10.071,doi:10.1007/978-3-642-27343-8_17} or \cite{doi:10.1007/978-3-319-24553-9_20,doi:10.1016/j.neuroimage.2016.05.040}. For instance, thermal acquisition noise often yields approximately normal, rather than log-normal, distributions for apparent diffusion coefficients and their linear combinations \citep{doi:10.1016/j.neuroimage.2009.10.071}, which aligns naturally with arithmetic averaging under the Euclidean metric. Moreover, trace preservation under the Euclidean metric is sometimes argued to have stronger physical justification than the determinant-preserving emphasis of the log-Euclidean and affine-invariant metrics, and simulations show that this trace-based view can reduce noise-induced bias \citep{doi:10.1007/978-3-642-27343-8_17}. Although the log-Euclidean and affine-invariant metrics mitigate the ``swelling effect'' of the Euclidean metric, where averaging can inflate determinants, smoothing or pre-segmentation often reduces such artifacts in practice \citep{doi:10.1016/j.neuroimage.2009.10.071}. Consequently, even though there is no consensus on metric selection \citep{doi:10.1016/j.neuroimage.2016.05.040}, the Euclidean metric remains both practical and theoretically well-founded for many DTI settings, especially when the data are close to normally distributed \citep{doi:10.1016/j.neuroimage.2009.10.071}.

\subsubsection{For time series of realized covariance matrices in finance}\label{sec:E.vs.NE.finance}

In nondegenerate settings, realized covariance matrices are positive definite matrices computed from finite sums of outer products of intraday asset returns. These matrices quantify the joint variability of asset returns and naturally retain the additive structure of the underlying covariation process. This makes the Euclidean metric particularly appealing: additions, averages and local perturbations are interpreted directly in the ambient vector space of symmetric matrices. Additionally, the cone $\mathcal{S}_{++}^d$ is convex, ensuring that positive definiteness is preserved when averaging.

The suitability of the Euclidean metric becomes particularly apparent for the strongly mixing time series of realized covariance matrices considered in Appendix~\ref{app:application}. At the level of their high-frequency construction, \citet{doi:10.1111/j.1468-0262.2004.00515.x} showed, in a fixed-span asymptotic regime where the number of intraday returns increases, that realized covariation converges to the integrated covariance matrix and that the properly normalized estimation error has a mixed Gaussian limit. In the multivariate case, the covariance structure of this limit is governed by integrated products of the entries of the spot covariance matrix. Because these limit theorems are formulated after vectorizing the matrix entries and yield Gaussian fluctuations in the ambient linear coordinates, the additive structure emphasized by the Euclidean metric is both practically and theoretically well-founded. This viewpoint is also consistent with recent nonparametric inference on $\mathcal{S}_{++}^d$: \citet{arXiv:2603.13935v1} developed a two-sample test for symmetric positive definite matrix distributions based on the integrated squared difference between two Wishart kernel density estimators (KDEs), with a closed-form overlap representation that avoids numerical integration. In financial applications, such a procedure is naturally suited to comparing the distribution of realized covariance matrices across two time periods or market regimes. A natural next step is therefore to adapt this two-sample discrepancy to change-point analysis for strongly mixing time series of realized covariance matrices, by scanning over candidate split points and testing whether the distribution before and after the split has changed. This extension would provide additional motivation and practical utility for choosing the Euclidean metric in volatility modeling, and its theoretical and computational development is left for future work.

\subsection{Density estimation on the cone \texorpdfstring{$\mathcal{S}_{++}^d$}{S++d}}\label{sec:density.estimation}

\subsubsection{The boundary bias problem}

The cone $\mathcal{S}_{++}^d$ has a boundary consisting of singular nonnegative definite matrices (i.e., symmetric matrices with at least one zero eigenvalue). This boundary poses significant challenges for density estimation due to the spill-over effect of fixed kernels in classical kernel density estimation (KDE) on Euclidean spaces, commonly referred to as the {\it boundary bias problem}. Indeed, standard symmetric, translation-invariant kernels with a fixed bandwidth are not adaptive to the position of the estimation point. Consequently, when the estimation point is near the boundary, a substantial portion of the kernel's mass extends beyond the support of the target density. This spill-over introduces systematic biases in the estimated density values.

For example, if a traditional KDE is constructed on $\mathcal{S}_{++}^d$ using a symmetric matrix-variate normal distribution as the kernel \citep[p.~71]{MR1738933}, the kernel's mass would spill into regions corresponding to nonpositive definite matrices when estimating near the boundary of $\mathcal{S}_{++}^d$. This is analogous to a Gaussian KDE on $(0,\infty)$ assigning mass to negative values when smoothing near zero.

The structural complexity of the boundary, defined by eigenvalue degeneracies, further complicates correction techniques. Standard boundary kernel methods, such as kernel reflection or boundary-specific designs \citep[Section~4.3]{MR3822372}, are not only impractical in $\mathcal{S}_{++}^d$ but also have yet to be implemented.

\subsubsection{Proposed solutions}

To circumvent boundary issues, the most natural approach to density estimation on $\mathcal{S}_{++}^d$ relies on a transformation, whereby a matrix logarithm is applied to the data. The transformed data are then smoothed in $\mathcal{S}^d \cong \R^{d(d + 1)/2}$ using a conventional multivariate Gaussian kernel, and the results are transformed back to $\mathcal{S}_{++}^d$ via the matrix exponential map (with the usual Jacobian adjustment). This method is equivalent to the use of a classical KDE based on the matrix-variate lognormal distribution on $\mathcal{S}_{++}^d$ defined by \citet{MR3580425}, which generalizes the well-known lognormal distribution on $(0,\infty)$. Henceforth, this estimator is called the log-Gaussian KDE.

One of the goals of the present paper is to show that, in some situations, a better alternative approach consists in working directly on $\mathcal{S}_{++}^d$ by selecting a Wishart kernel with parameters that adapt locally to each estimation point. The resulting estimator is called the {\it Wishart KDE}, formally defined below in Section~\ref{sec:definitions}.

\subsubsection{Asymmetric kernel estimators}\label{sec:asymmetric.kernels}

The Wishart KDE is part of the larger class of asymmetric kernel estimators, which are known to address the boundary bias problem of traditional KDEs due to their naturally adaptive kernel shape. Unlike other methods of boundary correction, such as the reflection method \citep{MR797636,MR1097362} or boundary kernels \citep{MR564251,MR816088,MR1130920,MR1422417,MR1649872,MR1752313}, these estimators are particularly simple to implement and can restore the usual interior bias order near the boundary. Moreover, they remain nonnegative over the entire support of the target density, a desirable property that many boundary‑corrected estimators lack.

The first asymmetric kernel estimator was introduced by \citet{doi:10.2307/2347365}, who used a Dirichlet kernel to smooth compositional data on the simplex. \citet{MR1718494} was the first to study the univariate case (the beta kernel estimator) theoretically. The Wishart KDE introduced herein can be viewed as a matrix-variate generalization of the gamma kernel estimator on $(0,\infty)$ proposed by \citet{MR1794247}. Some theoretical properties of the gamma kernel estimator for strongly mixing observations have been established by \citet{MR2454617,MR2756423}; their time-series results provide a primary motivation for the method developed in this article. In the independent and identically distributed (iid) setting, the literature on gamma kernel estimators covers asymptotic properties \citep{MR1794247,MR2179543,MR2206532,MR2454617,MR2568128,MR2756423,MR2595129,MR2801351,MR3304359,doi:10.1080/10485252.2025.2505639,MR4899357}, bias-reduction techniques \citep{MR3843043,MR3384258}, and Bayesian bandwidth selection \citep{MR4516901,MR4415422}. Refer to \citet{MR3821525} or \cite{MR4319409} for surveys of the broader literature on asymmetric kernel estimators. A unifying perspective on parts of this theory is provided by the notion of associated kernels; see, e.g., \citet{MR3760293,Kokonendji_Some_2021}, \cite{MR4859217} or \cite{MR5025557}.

\subsubsection{Overview of the literature for density estimation on \texorpdfstring{$\mathcal{S}_{++}^d$}{S++d}}

This is the first article to address the theoretical foundations of density estimation for dependent positive definite matrix data under any metric. In the iid setting and under the Euclidean metric, the only works identifiable are \citet[Section~3.1]{MR4358612} and the preprint of \citet[Sections~4~and~5]{arXiv:2009.01983v3}, wherein both parties independently introduced a mean-centered version of the Wishart KDE that is numerically underperforming compared to the mode-centered version defined in Section~\ref{sec:definitions}; see \eqref{eq:Wishart.KDE} and \eqref{eq:Wishart.KDE.mode}.

Given the sparsity of the literature on this subject, here is a brief overview of related works pertaining to density estimation under non-Euclidean geometries for iid samples and closely related deconvolution settings. \citet{inria-00632882v1,MR2838725} gave an early general treatment of the deconvolution problem on $\mathcal{S}_{++}^d$, where observations are corrupted by intrinsic errors, with Wishart errors treated as an important special case. Their estimator is based on the Helgason--Fourier transform and its inversion, with a spectral cut-off that reduces, in the no-noise case, to a sinc-type kernel. \citet{MR3012414} later established minimax rates for a related Helgason--Fourier estimator in Wishart mixture density estimation. In the classical density estimation setting (i.e., without errors), \citet{MR4172886,MR4886483} extended this Fourier-analytic framework to Riemannian symmetric spaces of non-compact type, including $\mathcal{S}_{++}^d$ under the affine-invariant metric, building on the deconvolution solution on the Poincaré upper half-plane given by \citet{MR2676895} and on earlier compact-manifold work such as the Fourier expansion estimator of \citet{MR1056339} and the geodesic-kernel estimator of \citet{MR2179289}. Classical density estimation was also considered by \citet{MR3606419}, who developed kernel estimators on spaces of Gaussian distributions and symmetric positive definite matrices under Fisher and $2$-Wasserstein geometries. This tangent-space and volume-correction approach was later generalized to non-compact symmetric spaces by \citet{MR4522875} through log-Gaussian distributions obtained by pushing Euclidean distributions forward by the exponential map. These works are formulated with respect to intrinsic or non-Euclidean geometries, reference measures, and loss functions that differ from those associated with the ambient Euclidean metric on $\mathcal{S}^d$; hence they are not directly comparable to the Euclidean-metric KDEs considered here and are not pursued further.

\subsection{Outline of the present article}\label{sec:outline}

The rest of this paper is organized as follows. Section~\ref{sec:definitions} introduces preliminary definitions and notation. Section~\ref{sec:main.results} presents the main results: under suitable mixing conditions for dependent data, asymptotics for the mean squared error (MSE) are derived together with the MSE-optimal bandwidth $b$; uniform strong consistency is established on growing compact sets; and asymptotic normality is obtained. For iid observations, an asymptotic upper bound on the mean absolute error (MAE) is also given. Proofs of the main results are collected in Section~\ref{sec:proofs}, with auxiliary technical lemmas deferred to Section~\ref{sec:tech.lemmas}.

Appendix~\ref{app:simulations} reports a simulation study comparing the Wishart KDE with the log-Gaussian KDE and with the Gaussian KDE on the ambient Euclidean space, when estimating the stationary marginal density of a Wishart autoregressive (WAR) process. The results show that the Wishart KDE has the best overall accuracy and stability across the selected autoregressive coefficient matrices and innovation covariance matrices, while the Gaussian KDE clearly underperforms as the innovation covariance matrix becomes closer to being singular, in line with the well-known boundary-bias problem for classical symmetric kernels on constrained supports.

Appendix~\ref{app:application} illustrates the applicability of the method in finance by estimating the marginal density of a one-year time series of realized covariance matrices, based on $5$-minute intraday returns, between the price of Amazon Corporation's shares (ticker symbol: AMZN) and the Standard \& Poor's 500 exchange-traded fund (ticker symbol: SPY). For ease of reference, a list of the abbreviations used herein appears in Appendix~\ref{app:abbreviations}. Reproducibility information, including links to all \textsf{R} code \citep{Rlang} and the \texttt{ksm} package implementation, is provided in Appendix~\ref{app:reproducibility}.

\section{Definitions and notation}\label{sec:definitions}

Given any integer $d\in \N = \{1,2,\ldots\}$, let $\mathcal{S}^d$ and $\mathcal{S}_{++}^d$ denote the spaces of $d\times d$ real matrices that are symmetric and (symmetric) positive definite, respectively. Let $\tr(\cdot)$ stand for the trace operator, $\etr(\cdot) = \exp\{\tr(\cdot)\}$ the exponential trace, and $|\cdot|$ the determinant.

The multivariate gamma function $\Gamma_d$ can be equivalently defined, for every $\alpha > (d-1)/2$, by
\begin{equation}\label{eq:multivariate.gamma}
\Gamma_d(\alpha) = \int_{\mathcal{S}_{++}^d} \etr(- X) \, |X|^{\alpha - (d + 1)/2} \, \rd X = \pi^{d(d-1)/4} \prod_{i=1}^d \Gamma\{\alpha - (i - 1)/2\},
\end{equation}
where $ \, \rd X$ denotes the Lebesgue measure on $\mathcal{S}_{++}^d$ \citep[p.~61]{MR652932}, and $\Gamma_1(\cdot)$ reduces to the ordinary gamma function $\Gamma(\cdot)$.

The probability density function of the Wishart distribution with degrees-of-freedom parameter $\nu\in (d-1,\infty)$ and scale matrix $\Sigma\in \mathcal{S}_{++}^d$ is defined, for every $X\in \mathcal{S}_{++}^d$ relative to $ \, \rd X$, by
\begin{equation}\label{eq:Wishart.kernel}
K_{\nu,\Sigma}(X) = \frac{|X|^{\nu/2 - (d + 1)/2} \, \etr(-\Sigma^{-1} X / 2)}{|2 \Sigma|^{\nu/2} \Gamma_d(\nu/2)}.
\end{equation}
If a $d\times d$ random matrix $\mathfrak{X}$ follows this distribution, one writes $\mathfrak{X}\sim \mathrm{Wishart}_d(\nu,\Sigma)$.

Let $\vecp(\cdot)$ denote the vectorization operator that stacks the columns of the upper triangular portion of a $d\times d$ symmetric matrix on top of each other, viz.
\[
\vecp(X) = (X_{11}, X_{12}, X_{22}, \ldots, X_{1d}, \ldots, X_{dd})^{\top}.
\]
For $\mathfrak{X}\sim \mathrm{Wishart}_d(\nu,\Sigma)$, the expectation and covariance matrix of the vector $\vecp(\mathfrak{X})$ are
\[
\EE\{\vecp(\mathfrak{X})\} = \nu \, \vecp(\Sigma), \quad \Var\{\vecp(\mathfrak{X})\} = 2\nu \, B_d^{\top} \Sigma^{\otimes 2} B_d,
\]
where $\otimes$ denotes the Kronecker product, and $B_d$ is the $d^{2} \times \{d(d + 1)/2\}$ transition matrix between $\vecp(\cdot)$ and the usual vectorization operator $\vecc(\cdot)$; see, e.g., \citet[p.~11]{MR1738933}.

Let $\mathfrak{X}_1,\ldots,\mathfrak{X}_n$ be a sequence of possibly dependent $d\times d$ random positive definite matrices, each of which has density $f$ which is supported on $\mathcal{S}_{++}^d$. For example, the sequence could correspond to the first $n$ components of an $\mathcal{S}_{++}^d$-valued stationary stochastic process $(\mathfrak{X}_t)_{t\in \N}$, where the dependence between the components is typically restricted by a strong mixing condition or some variation. The iid case $\mathfrak{X}_1,\ldots,\mathfrak{X}_n$ is included as a special case. The density $f$ is unknown and referred to as the target density. For a given bandwidth parameter $b\in (0,\infty)$, the Wishart KDE for $f$ is defined, for every $S\in \mathcal{S}_{++}^d$, by
\begin{equation}\label{eq:Wishart.KDE}
\hat{f}_{n,b}^{\,\mathrm{W}}(S) = \frac{1}{n} \sum_{t=1}^n K_{\nu(b,d), b S}(\mathfrak{X}_t),
\end{equation}
where $\nu(b,d) = 1/b + d + 1$ for brevity. As mentioned in Section~\ref{sec:asymmetric.kernels}, this estimator is a matrix-variate extension of the gamma KDE on $(0,\infty)$ introduced by \citet{MR1794247}.

The Wishart KDE offers an adaptive approach to density estimation on $\mathcal{S}_{++}^d$, addressing boundary bias by locally adjusting the Wishart kernel's degrees-of-freedom and scale parameters (and hence its shape) at each estimation point $S$. This adaptability prevents the spill-over effect observed with fixed kernels, where mass extends beyond the support near the boundary, and ensures that $\smash{\hat{f}_{n,b}^{\,\mathrm{W}}}$ remains confined to $\mathcal{S}_{++}^d$. The effectiveness of the Wishart kernel in mitigating boundary effects is demonstrated in Remark~\ref{rem:boundary.bias}, which shows that the pointwise bias is uniformly negligible for any bounded region, regardless of proximity to the boundary.

The choice of the degrees-of-freedom and scale parameters for the Wishart kernel is motivated by the following considerations. If $\mathfrak{W}_{b,S}\sim \mathrm{Wishart}_d(1/b + d + 1, b S)$, then
\begin{equation}\label{eq:Wishart.KDE.mode}
\mathsf{Mode}(\mathfrak{W}_{b,S}) = S,
\end{equation}
as proved in Lemma~\ref{lem:Wishart.kernel.global.bound} of Section~\ref{sec:tech.lemmas}, indicating that $X\mapsto K_{\nu(b,d), b S}(X)$ achieves its maximum at the estimation point $X = S$. Additionally, one has
\begin{equation}\label{eq:expectation.covariance}
\begin{aligned}
\EE(\mathfrak{W}_{b,S}) &= S + b (d + 1) S, \\
\EE\{\vecp(\mathfrak{W}_{b,S} - S) \vecp(\mathfrak{W}_{b,S} - S)^{\top}\} &= 2 b B_d^{\top} S^{\otimes 2} B_d + \oo(b B_d^{\top} S^{\otimes 2} B_d).
\end{aligned}
\end{equation}
Thus, as the bandwidth parameter $b$ approaches $0$, the expectation converges to the estimation point~$S$, while the Frobenius norm of the covariance matrix tends to $0$, resulting in the Wishart kernel concentrating increasingly around $S$.

Throughout the paper, the following notational conventions are adopted. The notation $u = \OO(v)$ means that $\limsup |u / v| \leq C < \infty$ as $n\to \infty$ or $b\downarrow 0$, depending on the context. The positive constant $C$ may depend on the target density $f$, the dimension $d$, the strong mixing exponent $\beta$, and any internal constant in a given proof, but on no other variables unless explicitly written as a subscript. A common occurrence is a local dependence of the asymptotics on a given point $S\in \mathcal{S}_{++}^d$, in which case one writes $u = \OO_S(v)$. The alternative notation $u \ll v$ is also used to mean $u = \OO(v)$. If both $u \ll v$ and $u \gg v$ hold, one writes $u \asymp v$. Similarly, the notation $u = \oo(v)$ means that $\lim |u / v| = 0$ as $n\to \infty$ or $b\downarrow 0$. The subscripts specify the parameters on which the convergence rate may depend. The symbol $\rightsquigarrow$ denotes convergence in distribution. The bandwidth parameter $b = b(n)$ is always implicitly a function of the number of observations, except in Section~\ref{sec:tech.lemmas}.

Given that $(\mathcal{S}^d, \|\cdot\|_F)$ is a $d(d + 1)/2$-dimensional Hilbert space, all notions of continuity and differentiability on $\mathcal{S}^d$ or $\mathcal{S}_{++}^d$ are induced from their counterparts on $\R^{d(d + 1)/2}$. For any $V\in \mathcal{S}^d$,~let
\[
\lambda_1(V) \geq \dots \geq \lambda_d(V)
\]
be its eigenvalues placed in descending order. For any $d,k\in \N$, the shorthand notations
\[
r(d) = \frac{d(d + 1)}{2}, \quad [k] = \{1,\ldots,k\},
\]
will be used frequently.

\section{Main results}\label{sec:main.results}

This section is divided into two subsections: results where some form of dependence (mixing condition) is assumed between the observations (Section~\ref{sec:results.mixing}) and results that are specific to the case of iid observations (Section~\ref{sec:results.iid}). The results in Section~\ref{sec:results.mixing} do not exclude the iid case.

For every result in this section, the following common assumption is made:
\begin{enumerate}[label=(A)]\setlength{\itemsep}{0em}
\item The target density $f$ and its first- and second-order partial derivatives are uniformly continuous and bounded on $\mathcal{S}_{++}^d$. \label{ass:A}
\end{enumerate}

\subsection{Results in the mixing case}\label{sec:results.mixing}

The first theorem investigates the asymptotics of the MSE of $\smash{\hat{f}_{n,b}^{\,\mathrm{W}}}$ under the assumption that the local dependence function has uniformly bounded $L^p$ norm and under a notion of dependence slightly weaker than strong mixing, called $2$-$\alpha$-mixing by \citet{MR1640691}. A similar asymptotic result was proved under strong mixing for the (boundary-modified) gamma KDE ($d = 1$) by \citet[Proposition~1]{MR2756423}.

\begin{theorem}[Mean squared error]\label{thm:MSE}\addcontentsline{toc}{subsection}{Theorem~\ref{thm:MSE}}
Suppose that Assumption~\ref{ass:A} holds. Assume that the local dependence function $g_{t,t'} = f_{\mathfrak{X}_t,\mathfrak{X}_{t'}} - f_{\mathfrak{X}_t} f_{\mathfrak{X}_{t'}}$ satisfies
\begin{equation}\label{eq:ass.2}
G_p \equiv \sup_{|t' - t|\geq 1} \|g_{t,t'}\|_p < \infty,
\end{equation}
for some real $p\in (2,\infty)$, where $\smash{\|g_{t,t'}\|_p = \{\int_{\mathcal{S}_{++}^d \times \mathcal{S}_{++}^d} |g_{t,t'}(X,X')|^p \, \rd X \, \rd X'\}^{1/p}}$. Also, assume that the process $(\mathfrak{X}_t)_{t\in \N}$ is $2$-$\alpha$-mixing, i.e., for every integer $k\in \N$,
\begin{equation}\label{eq:ass.3}
\alpha^{(2)}(k) = \sup_{t\in \N} \alpha\{\sigma(\mathfrak{X}_t), \sigma(\mathfrak{X}_{t+k})\} \leq C k^{-\beta},
\end{equation}
for some constants $C\in (0,\infty)$ and $\beta > 2 (p-1)/(p-2)$. Moreover, for every $S\in \mathcal{S}_{++}^d$, define
\begin{equation}\label{eq:g.psi}
g(S) = (d + 1) \nabla f(S)^{\top} \vecp(S) + \nabla^{\otimes 2} f(S)^{\top} \vecc(B_d^{\top} S^{\otimes 2} B_d), \quad
\psi(S) = \frac{|S|^{-(d + 1)/2}}{2^{r(d) + d/2} \pi^{r(d)/2}},
\end{equation}
where $\nabla$ denotes the gradient vector with respect to the elements of the upper triangular portion of $S$ in the same order as $\vecp(\cdot)$, $\nabla^{\otimes 2}$ is the vectorized Hessian, and $\vecc(\cdot)$ is the vectorization operator that converts a matrix into a vector by stacking the columns of the matrix on top of one another. Then, one has, as $n\to \infty$ and $b=b(n)\downarrow 0$,
\[
\begin{aligned}
\mathrm{MSE}\{\hat{f}_{n,b}^{\,\mathrm{W}}(S)\}
&\equiv \EE\big[\{\hat{f}_{n,b}^{\,\mathrm{W}}(S) - f(S)\}^2\big] \\
&= n^{-1} b^{-r(d)/2} \psi(S) f(S) + b^2 g^2(S) + \oo_S\{n^{-1} b^{-r(d)/2}\} + \oo_S(b^2).
\end{aligned}
\]
In particular, if $f(S)g^2(S)\in (0,\infty)$, the asymptotically optimal choice of $b$, with respect to $\mathrm{MSE}$,~is
\[
b_n^{\star}(S) = n^{-2/\{r(d) + 4\}} \left\{\frac{r(d)}{4} \times \frac{\psi(S) f(S)}{g^2(S)}\right\}^{2/\{r(d) + 4\}},
\]
with
\[
\mathrm{MSE}\{\hat{f}_{n,b_n^{\star}}^{\,\mathrm{W}}(S)\}
= n^{-4/\{r(d) + 4\}} \left[\frac{1 + r(d)/4}{\{r(d)/4\}^{r(d)/\{r(d) + 4\}}}\right] \frac{\{\psi(S) f(S)\}^{4 / \{r(d) + 4\}}}{\{g^2(S)\}^{-r(d) / \{r(d) + 4\}}} + \oo_S\big[n^{-4/\{r(d) + 4\}}\big].
\]
\end{theorem}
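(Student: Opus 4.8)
The plan is to decompose $\mathrm{MSE}\{\hat f^{\,\mathrm W}_{n,b}(S)\} = \{\EE\hat f^{\,\mathrm W}_{n,b}(S) - f(S)\}^2 + \Var\{\hat f^{\,\mathrm W}_{n,b}(S)\}$ and to treat the squared bias and the variance separately. For the bias, note that since every $\mathfrak X_t$ has marginal density $f$, one has $\EE\hat f^{\,\mathrm W}_{n,b}(S) = \int_{\mathcal S_{++}^d} K_{\nu(b,d),bS}(X)f(X)\,\rd X = \EE\{f(\mathfrak W_{b,S})\}$ with $\mathfrak W_{b,S}\sim\mathrm{Wishart}_d(1/b+d+1,bS)$. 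I would Taylor-expand $f$ to second order around $S$ in the $\vecp(\cdot)$ coordinates and take expectations using \eqref{eq:expectation.covariance}: the linear term produces $b(d+1)\nabla f(S)^{\top}\vecp(S)$ and the quadratic term produces $b\,\nabla^{\otimes2}f(S)^{\top}\vecc(B_d^{\top}S^{\otimes2}B_d)$ after rewriting $\tr\{\nabla^{\otimes2}f(S)\,B_d^{\top}S^{\otimes2}B_d\}$ as a vectorized inner product, so that $\EE\{f(\mathfrak W_{b,S})\}-f(S)=b\,g(S)+\oo_S(b)$ with $g$ as in \eqref{eq:g.psi}. The Taylor remainder is controlled by splitting on a ball $\{\|\mathfrak W_{b,S}-S\|_F\le\delta\}$: on the ball one uses the uniform continuity of $\nabla^{\otimes2}f$ from Assumption~\ref{ass:A}, and off the ball the boundedness of $\nabla^{\otimes2}f$ together with polynomial moment/tail bounds for $\mathfrak W_{b,S}$ (which concentrates at $S$ as $b\to0$, cf.\ Lemma~\ref{lem:Wishart.kernel.global.bound}). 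Squaring gives $\{\EE\hat f^{\,\mathrm W}_{n,b}(S)-f(S)\}^2=b^2g^2(S)+\oo_S(b^2)$.

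For the variance, write $\Var\{\hat f^{\,\mathrm W}_{n,b}(S)\} = n^{-1}\Var\{K_{\nu(b,d),bS}(\mathfrak X_1)\} + 2n^{-2}\sum_{1\le t<t'\le n}\Cov\{K_{\nu(b,d),bS}(\mathfrak X_t),K_{\nu(b,d),bS}(\mathfrak X_{t'})\}$. In the diagonal term, $(\EE K_{\nu(b,d),bS}(\mathfrak X_1))^2=\OO_S(1)$ is negligible, so the leading piece is $n^{-1}\int K_{\nu(b,d),bS}^2(X)f(X)\,\rd X$. Applying \eqref{eq:multivariate.gamma} after a rescaling expresses $\int K_{\nu(b,d),bS}^2(X)\,\rd X$ in closed form as a ratio of multivariate gamma functions times powers of $b$ and $|S|$, and a Stirling-type expansion of $\Gamma_d$ as $1/b\to\infty$ reduces it to $b^{-r(d)/2}\psi(S)\{1+\oo(1)\}$. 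Since $K_{\nu(b,d),bS}^2/\int K_{\nu(b,d),bS}^2$ is, up to normalization, the $\mathrm{Wishart}_d(2\nu(b,d)-(d+1),bS/2)$ density, which has mode $S$ and concentrates at $S$ as $b\to0$, the continuity of $f$ yields $\int K_{\nu(b,d),bS}^2(X)f(X)\,\rd X = f(S)\,b^{-r(d)/2}\psi(S)+\oo_S(b^{-r(d)/2})$; hence the diagonal term is $n^{-1}b^{-r(d)/2}\psi(S)f(S)+\oo_S(n^{-1}b^{-r(d)/2})$.

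For the off-diagonal sum, I would fix a window $c_n\to\infty$ and split the pairs into near ($0<|t-t'|\le c_n$) and far ($|t-t'|>c_n$). For a near pair, $\Cov\{K(\mathfrak X_t),K(\mathfrak X_{t'})\}=\int\!\!\int K(X)K(X')g_{t,t'}(X,X')\,\rd X\,\rd X'$, and Hölder's inequality with exponents $p$ and $q=p/(p-1)$ gives $|\Cov|\le G_p\,(\int K_{\nu(b,d),bS}^q)^{2/q}=\OO_S(b^{-r(d)/p})$, using the same gamma-function computation (now $\int K_{\nu(b,d),bS}^q=\OO_S(b^{-r(d)(q-1)/2})$); so the near-pair contribution to the variance is $\OO_S(n^{-1}c_nb^{-r(d)/p})$. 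For a far pair with $|t-t'|=k$, the bounded-variable covariance inequality under strong mixing, together with $\|K_{\nu(b,d),bS}\|_\infty=\OO_S(b^{-r(d)/2})$ (again from Lemma~\ref{lem:Wishart.kernel.global.bound}) and \eqref{eq:ass.3}, gives $|\Cov|=\OO_S(b^{-r(d)}k^{-\beta})$, so (as $\beta>1$) the far-pair contribution is $\OO_S(n^{-1}b^{-r(d)}c_n^{1-\beta})$. Both are $\oo_S(n^{-1}b^{-r(d)/2})$ as soon as an integer $c_n$ can be chosen with $b^{-r(d)/\{2(\beta-1)\}}\ll c_n\ll b^{-r(d)(1/2-1/p)}$, and this window is nonempty (and eventually contains an integer, since $b=b(n)\to0$) precisely because $\beta>2(p-1)/(p-2)$.

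Collecting the squared bias, the diagonal variance, and the negligible off-diagonal sum gives the stated MSE expansion. When $g^2(S)\in(0,\infty)$, setting to zero the derivative of $b\mapsto n^{-1}b^{-r(d)/2}\psi(S)f(S)+b^2g^2(S)$ solves for $b_n^{\star}(S)$, and substituting it back — writing $A=\psi(S)f(S)$, $B=g^2(S)$, collecting powers, and using $1+r(d)/4=\{r(d)+4\}/4$ — produces the displayed optimal MSE. I expect the main obstacle to be the off-diagonal covariance control: one must pin down the precise $b$-exponents of the $L^q$ and $L^\infty$ norms of the Wishart kernel (which rests on sharp $\Gamma_d$ asymptotics) and then verify that the near/far window balance closes under exactly the hypothesized relation between $\beta$ and $p$; the companion delicate points are the $\Gamma_d$-ratio asymptotics underlying $\psi(S)$ and the uniform-on-compacts bookkeeping of all the $\oo_S$ remainders.
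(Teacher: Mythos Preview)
Your proposal is correct and follows essentially the same route as the paper: the same bias/variance split, the same second-order Taylor expansion with a ball/off-ball remainder control, the same $L^2$-norm computation for the diagonal variance via $\Gamma_d$ asymptotics, and the same near/far covariance split combining H\"older with $\|K\|_q$ and Billingsley's inequality with $\|K\|_\infty$. The only cosmetic difference is that the paper picks the explicit cutoff $\lfloor b^{-r(d)/(q\beta)}\rfloor$ rather than your window $b^{-r(d)/\{2(\beta-1)\}}\ll c_n\ll b^{-r(d)(1/2-1/p)}$, but both reduce to the same condition $\beta>2(p-1)/(p-2)$.
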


\begin{remark}[Boundary behavior of the bias]\label{rem:boundary.bias}
Fix $S\in \mathcal{S}_{++}^d$. The proof of Theorem~\ref{thm:MSE} yields the pointwise expansion:
\[
\Bias\{\hat{f}_{n,b}^{\,\mathrm{W}}(S)\} = b \, g(S) + \oo\{b \, \tr(B_d^{\top} S^{\otimes 2} B_d)\}.
\]
This expression clarifies how the geometry of the cone influences the leading bias through the linear factor $\vecp(S)$ and the quadratic factor $B_d^{\top} S^{\otimes 2} B_d$. These factors shrink to zero along paths for which $\|S\|_F\to 0$, but need not shrink along paths approaching nonzero boundary points.

This formula yields several useful observations. First consider a radial approach. If $S=\tau S_0$ with $S_0\in \mathcal{S}_{++}^d$ fixed and $\tau \downarrow 0$, then $g(S)=\OO(\tau)$ and $\tr(B_d^{\top} S^{\otimes 2} B_d)=\OO(\tau^2)$. Hence
\[
\Bias\{\hat{f}_{n,b}^{\,\mathrm{W}}(S)\}= \OO(b \, \tau) + \oo(b \, \tau^2).
\]
In particular, along the canonical choice $\tau\asymp b$, one gets $\Bias\{\hat{f}_{n,b}^{\,\mathrm{W}}(S)\} = \OO(b^2)$, which is smaller than the interior order $\OO(b)$.

Next, consider a directional approach. Suppose $S$ approaches a rank-$(d-r)$ limit ($r\in [d]$) in the sense that
\[
S = V_0 \, \mathrm{diag}(\tau_1,\ldots,\tau_r,\lambda_{r+1},\ldots,\lambda_d) \, V_0^{\top},
\]
for $\tau_j \downarrow 0$, $\lambda_{r+1},\ldots,\lambda_d\in (0,\infty)$ fixed, and a given eigenbasis as columns in $V_0$. Then
\[
\begin{aligned}
|g(S)|
&= \OO(\|S\|_F) = \OO\left(\sum_{i=1}^r \tau_i + \sum_{i=r+1}^d \lambda_i\right) = \OO\left(\sum_{i=1}^r \tau_i\right) + \OO(1), \\
\tr(B_d^{\top} S^{\otimes 2} B_d)
&= \OO\{\tr(S^{\otimes 2})\} = \OO\left\{\left(\sum_{i=1}^r \tau_i + \sum_{i=r+1}^d \lambda_i\right)^2\right\} = \OO\bigg(\sum_{j=1}^r \tau_j^2\bigg) + \OO(1),
\end{aligned}
\]
where the last $\OO(1)$ absorbs the contribution of the non-shrinking block and the cross terms. Hence,
\[
\Bias\{\hat{f}_{n,b}^{\,\mathrm{W}}(S)\} = \OO\left\{b \left(\sum_{i=1}^r \tau_i + \sum_{i=r+1}^d \lambda_i\right)\right\} + \oo\left\{b \left(\sum_{i=1}^r \tau_i + \sum_{i=r+1}^d \lambda_i\right)^2\right\}.
\]
Thus, whenever the collapsing directions dominate ($r = d$), the boundary bias is no worse than in the interior and can be markedly smaller. If at least one eigenvalue of $S$ stays bounded away from~$0$, the leading term reverts to the interior order $\OO(b)$.

Overall, the estimator enjoys a form of boundary adaptivity: along approaches to the boundary for which the Frobenius mass of $S^{\otimes 2}$ vanishes, the leading coefficient $g(S)$ shrinks and the bias recedes accordingly. In contrast, classical symmetric kernels that ignore the positivity constraint typically allocate non-negligible mass outside the support, which yields a boundary bias that does not vanish with the sample size; see \citet[Section~2]{MR1293239}.
\end{remark}

\begin{remark}[Boundary behavior of the variance]\label{rem:boundary.variance}
Although $\smash{\hat{f}_{n,b}^{\,\mathrm{W}}}$ enjoys reduced bias near the boundary, this benefit entails a compensating rise in variance. This phenomenon is a direct consequence of the adaptive mechanism of the Wishart kernel on $\mathcal{S}_{++}^d$. To avoid the spill-over that plagues symmetric kernels, the kernel reshapes according to the evaluation point $S$, allocating all its mass inside the cone. As $S$ drifts toward $\partial\mathcal{S}_{++}^d$ (i.e., as some eigenvalues of $S$ vanish), the kernel becomes increasingly skewed and sharply concentrated in the cone-admissible directions. This necessary adaptation keeps the bias small, but by creating a more ``peaked'' local window it effectively reduces the number of observations that contribute at $S$, thereby increasing sampling variability.

Formally, from the proof of Theorem~\ref{thm:MSE}, one has
\[
\Var\{\hat{f}_{n,b}^{\,\mathrm{W}}(S)\} = n^{-1} b^{-r(d)/2} \psi(S) f(S) + \oo_S(n^{-1} b^{-r(d)/2}).
\]
The factor $\psi(S)$ captures the geometry-induced inflation: as $S$ approaches $\partial\mathcal{S}_{++}^d$ and $|S|\downarrow 0$, one has $\psi(S)\propto |S|^{-(d + 1)/2}\uparrow \infty$, quantifying the increased variability near the boundary. In words, boundary adaptivity reduces bias because the kernel conforms to the support, yet this very conforming shape concentrates the window and shrinks the effective local sample size, which inflates variance. This bias-variance trade-off is typical for asymmetric kernel estimators \citep[see, e.g.,][]{MR1794247,MR2756423,MR4939549} and also appears in related constructions such as Bernstein estimators on constrained supports \citep[see, e.g.,][]{MR2925964,doi:10.1515/stat-2022-0111}.
\end{remark}

The pointwise strong consistency property of the gamma kernel estimator ($d = 1$) in the strong mixing setting was established by \citet[Proposition~3.3]{MR2454617}, though this result did not extend to uniform strong consistency over a growing sequence of compact sets. In the multivariate iid setting, \citet[Theorem~2]{MR2568128} established uniform strong consistency for certain product kernel estimators on fixed compact sets. Theorem~\ref{thm:unif.conv} shows that, under strict stationarity of the process $(\mathfrak{X}_t)_{t\in \N}$ and a geometric strong mixing condition, the Wishart KDE achieves uniform strong consistency over an expanding sequence of compact sets, covering $\mathcal{S}_{++}^d$ in the limit. This result broadens the scope of the findings reported by \citet[Proposition~3.3]{MR2454617} for the unmodified gamma kernel estimator in the one-dimensional case. The buffer regions between the boundary of $\mathcal{S}_{++}^d$ and the expanding compact sets facilitate control over the partial derivatives of the Wishart kernel $K_{\nu(b,d), b S}$ with respect to entries of $S$; see Lemma~\ref{lem:difference.Wishart.kernels} and its proof.

\begin{theorem}[Uniform strong consistency]\label{thm:unif.conv}\addcontentsline{toc}{subsection}{Theorem~\ref{thm:unif.conv}}
Suppose that Assumption~\ref{ass:A} holds and \eqref{eq:ass.2} holds for some real $p\in (2,\infty)$. Next, assume that the process $(\mathfrak{X}_t)_{t\in \N}$ is strictly stationary and geometrically strongly mixing, i.e., for every integer $k\in \N$,
\begin{equation}\label{eq:ass.4}
\alpha(k) = \sup_{t\in \N} \alpha\{\sigma(\mathfrak{X}_s, \, s\leq t), \sigma(\mathfrak{X}_s, \, s\geq t + k)\} \leq C_1 \rho^k,
\end{equation}
for some constants $C_1\in (0,\infty)$ and $\rho\in [0,1)$. Furthermore, choose a constant $\gamma\in (0,\infty)$ and a sequence $\delta_n\downarrow 0$ which satisfies $1 \leq \delta_n^{-1} \ll (\log n)^{(1 + \gamma)/2}$ and
\begin{equation}\label{eq:ass.5}
\sum_{n=1}^{\infty} \sum_{t=1}^n \PP\{\lambda_1(\mathfrak{X}_t) > \delta_n^{-1}\} < \infty.
\end{equation}
Let
\[
\mathcal{S}_{++}^d(\delta_n) = \{S\in \mathcal{S}_{++}^d : \delta_n \leq \lambda_d(S) \leq \lambda_1(S) \leq \delta_n^{-1}\}.
\]
Then, one has, as $n\to \infty$,
\begin{equation}\label{eq:univ.conv.rate}
\sup_{S\in \mathcal{S}_{++}^d(\delta_n)} \big|\hat{f}_{n,b_n}^{\,\mathrm{W}}(S) - f(S)\big| \ll \delta_n^{-r(d)} (\log n)^{(1 + \gamma)/2} n^{-2/\{r(d) + 4\}}, \quad \text{a.s.},
\end{equation}
using an MSE-optimal bandwidth $b_n \asymp n^{-2/\{r(d) + 4\}}$.
\end{theorem}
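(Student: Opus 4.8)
The plan is to combine the bias--variance analysis already underpinning Theorem~\ref{thm:MSE} with a discretization (chaining) argument over $\mathcal{S}_{++}^d(\delta_n)$, controlled by an exponential inequality for geometrically strongly mixing sequences, in the spirit of \citet{MR2756423,MR2454617,MR2568128}. Write
\[
\hat f^{\,\mathrm W}_{n,b_n}(S)-f(S)=\big\{\hat f^{\,\mathrm W}_{n,b_n}(S)-\EE\hat f^{\,\mathrm W}_{n,b_n}(S)\big\}+\big\{\EE\hat f^{\,\mathrm W}_{n,b_n}(S)-f(S)\big\}
\]
and bound both pieces uniformly over $S\in\mathcal{S}_{++}^d(\delta_n)$. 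For the bias, a uniform second-order Taylor expansion of $f$ (legitimate under Assumption~\ref{ass:A}) combined with the moment identities \eqref{eq:expectation.covariance}, whose right-hand sides are controlled by powers of $\lambda_1(S)\le\delta_n^{-1}$ on $\mathcal{S}_{++}^d(\delta_n)$, gives $\sup_{S\in\mathcal{S}_{++}^d(\delta_n)}|\EE\hat f^{\,\mathrm W}_{n,b_n}(S)-f(S)|\ll b_n\sup_S|g(S)|+\oo(b_n\delta_n^{-2})$, and $|g(S)|\ll\|S\|_F+\|S\|_F^2\ll\delta_n^{-2}$ on that set by Assumption~\ref{ass:A}. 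Since $b_n\asymp n^{-2/\{r(d)+4\}}$ and $\delta_n^{-1}\ll(\log n)^{(1+\gamma)/2}$, this is $\ll\delta_n^{-r(d)}(\log n)^{(1+\gamma)/2}n^{-2/\{r(d)+4\}}$ --- for $d\ge2$ because $r(d)\ge3>2$ and $\delta_n^{-1}\ge1$, and for $d=1$ because the hypothesis $\delta_n^{-1}\ll(\log n)^{(1+\gamma)/2}$ is then exactly what is needed --- so the bias is absorbed into the asserted rate.

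For the fluctuation term, first truncate. By \eqref{eq:ass.5} and the Borel--Cantelli lemma, almost surely $\max_{1\le t\le n}\lambda_1(\mathfrak{X}_t)\le\delta_n^{-1}$ for all large $n$, and moreover $\EE[K_{\nu(b_n,d),b_nS}(\mathfrak{X}_1)\ind\{\lambda_1(\mathfrak{X}_1)>\delta_n^{-1}\}]\ll\|K_{\nu(b_n,d),b_nS}\|_\infty\,\PP\{\lambda_1(\mathfrak{X}_1)>\delta_n^{-1}\}$ is negligible relative to the target rate, so it suffices to bound $\sup_{S\in\mathcal{S}_{++}^d(\delta_n)}|\tfrac1n\sum_{t=1}^n\{\bar K_t(S)-\EE\bar K_t(S)\}|$ with $\bar K_t(S):=K_{\nu(b_n,d),b_nS}(\mathfrak{X}_t)\ind\{\lambda_1(\mathfrak{X}_t)\le\delta_n^{-1}\}$. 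These summands are $\sigma(\mathfrak{X}_t)$-measurable (so \eqref{eq:ass.4} transfers verbatim) and uniformly bounded: by Lemma~\ref{lem:Wishart.kernel.global.bound}, $|\bar K_t(S)|\le M_n:=Cb_n^{-r(d)/2}\delta_n^{-r(d)}$ on $\mathcal{S}_{++}^d(\delta_n)$. Restricting to $\{\lambda_1\le\delta_n^{-1}\}$ is precisely what makes the $S$-derivatives of the kernel controllable: Lemma~\ref{lem:difference.Wishart.kernels}, which exploits the buffer between $\mathcal{S}_{++}^d(\delta_n)$ and $\partial\mathcal{S}_{++}^d$, yields a Lipschitz-in-$S$ bound $\sup_{\lambda_1(X)\le\delta_n^{-1}}\|\nabla_SK_{\nu(b_n,d),b_nS}(X)\|\le L_n$, with $L_n$ polynomial in $b_n^{-1}$ and $\delta_n^{-1}$ (of order $b_n^{-1-r(d)/2}\delta_n^{-r(d)-3}$), uniformly over $S$ in a slight enlargement of $\mathcal{S}_{++}^d(\delta_n)$; both $M_n$ and $L_n$ are polynomial in $n$ up to polylogarithmic factors in view of $\delta_n^{-1}\ll(\log n)^{(1+\gamma)/2}$.

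Now cover $\mathcal{S}_{++}^d(\delta_n)$ by $N_n\ll(\delta_n^{-1}\eta_n^{-1})^{r(d)}$ Frobenius balls of radius $\eta_n\asymp n^{-1}(\log n)^{-C_0}$, $C_0$ large; then $N_n\ll n^{r(d)}(\log n)^{C_1}$, while the oscillation bound $\sup_{\|S-S'\|_F\le\eta_n}|\hat f^{\,\mathrm W}_{n,b_n}(S)-\hat f^{\,\mathrm W}_{n,b_n}(S')|\ll\eta_nL_n$ (and its analogue for expectations) is $\ll\delta_n^{-r(d)}(\log n)^{(1+\gamma)/2}n^{-2/\{r(d)+4\}}$. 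At each grid centre $S_j$, strict stationarity and the variance computation behind Theorem~\ref{thm:MSE} --- whose off-diagonal covariances stay negligible under only the geometric mixing \eqref{eq:ass.4}, since the bivariate densities are bounded while the near-diagonal segment of the covariance series has $\OO(\log n)$ terms and $b_n^{-r(d)/2}$ grows polynomially --- give $\Var\{\sum_{t=1}^n\bar K_t(S_j)\}\ll nb_n^{-r(d)/2}\psi(S_j)f(S_j)\ll nM_n$. Together with $|\bar K_t(S_j)|\le M_n$, a Bernstein-type exponential inequality for geometrically $\alpha$-mixing sequences with block length $\asymp\log n$ (making the residual coupling term $\ll n^{-(r(d)+2)}$) yields, with $\varepsilon_n:=C\delta_n^{-r(d)}(\log n)^{(1+\gamma)/2}n^{-2/\{r(d)+4\}}$, $C$ large, and using $n\varepsilon_n^2/M_n\gg\delta_n^{-r(d)}(\log n)^{1+\gamma}\ge(\log n)^{1+\gamma}$ together with $\varepsilon_n\to0$,
\[
\PP\!\Big(\sup_{1\le j\le N_n}\Big|\tfrac1n{\textstyle\sum_{t=1}^n}\{\bar K_t(S_j)-\EE\bar K_t(S_j)\}\Big|>\varepsilon_n\Big)\ll N_n\Big[\exp\!\Big(-\frac{c\,n\varepsilon_n^2}{M_n(1+\varepsilon_n)}\Big)+n^{-(r(d)+2)}\Big]\ll n^{r(d)}(\log n)^{C_1}\big[\exp\{-c'(\log n)^{1+\gamma}\}+n^{-(r(d)+2)}\big],
\]
which is summable in $n$ because $\exp\{-c'(\log n)^{1+\gamma}\}$ decays faster than any power of $n$ --- here $\gamma>0$ is essential, as it lets this super-polynomial factor beat the polynomial count $N_n$. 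Borel--Cantelli then closes the argument.

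The main obstacle is the uniform, quantitative control of the Wishart kernel and of its partial derivatives in $S$ on the truncated event $\{\lambda_1\le\delta_n^{-1}\}$, with constants that are merely polynomial in $\delta_n^{-1}$ and $b_n^{-1}$ --- precisely the content of Lemma~\ref{lem:difference.Wishart.kernels} and the reason the expanding sets are kept a positive distance from $\partial\mathcal{S}_{++}^d$ --- together with the calibration of the mesh size $\eta_n$, the mixing block length, and the deviation threshold $\varepsilon_n$ so that the discretization error is negligible while the factor $\exp\{-c(\log n)^{1+\gamma}\}$, which exists exactly because of the $(\log n)^{(1+\gamma)/2}$ inflation built into $\varepsilon_n$, defeats the polynomial number of grid points. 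A secondary technical point is verifying that the off-diagonal covariances of $\sum_t\bar K_t(S_j)$ remain negligible under only the geometric strong mixing \eqref{eq:ass.4}, i.e., without the $L^p$ control \eqref{eq:ass.2} used in Theorem~\ref{thm:MSE}. The bias step is routine given Theorem~\ref{thm:MSE}, and the truncation step is an immediate consequence of \eqref{eq:ass.5} and Borel--Cantelli.
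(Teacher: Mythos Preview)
Your proposal is correct and follows essentially the same route as the paper: split into bias plus fluctuation, control the bias uniformly via the Taylor argument from Theorem~\ref{thm:MSE}, truncate via \eqref{eq:ass.5} and Borel--Cantelli to work on $\{\lambda_1(\mathfrak{X}_t)\le\delta_n^{-1}\}$, discretize $\mathcal{S}_{++}^d(\delta_n)$ and bound the oscillation through Lemma~\ref{lem:difference.Wishart.kernels}, then apply a Bernstein-type inequality for mixing sequences so that the exponent is $\gg(\log n)^{1+\gamma}$ and defeats the polynomial covering number. The only differences are tactical---the paper invokes Bosq's Theorem~1.3(2) with a specific block count $q_n\asymp(\log n)^{(1+\gamma)/2}n^{1/2}b_n^{-r(d)/4}$ and calibrates the mesh $w_n$ so that the discretization error is exactly $\e_n$, whereas you use a logarithmic block length and a cruder mesh $\eta_n\asymp n^{-1}$---and the covariance issue you flag as a ``secondary technical point'' is handled in the paper by the same hand-wave back to the proof of Theorem~\ref{thm:MSE}.
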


\begin{remark}
The condition \eqref{eq:ass.5} is quite mild. As shown in Lemma~\ref{lem:eigenvalue.exponential.bounds}, it is satisfied, for example, if the observations are identically distributed as $\mathfrak{X}_1,\ldots,\mathfrak{X}_n\sim \mathrm{Wishart}_d(\nu,\Sigma)$ and $\delta_n^{-1} \asymp (\log n)^{(1 + \gamma)/2}$, for any $\gamma\in (1,\infty)$, $\nu\in (d-1,\infty)$, and $\Sigma\in \mathcal{S}_{++}^d$.
\end{remark}

\begin{remark}
The rate of convergence in \eqref{eq:univ.conv.rate} is optimal up to a logarithmic factor; cf.~\citet[Theorem~2.2]{MR1640691}. Sharpening this rate would require a more refined bandwidth choice of the form $b_n \asymp L_n n^{-2/{r(d) + 4}}$, where $L_n$ denotes a suitable logarithmic factor. Determining the optimal specification of $L_n$ remains an open problem in the present setting.
\end{remark}

The next result shows that under strict stationarity of the process $(\mathfrak{X}_t)_{t\in \N}$, and uniformly bounded supremum norms for the local dependence function and the fourth-order marginal density functions of the process, the Wishart KDE $\hat{f}_{n,b}^{\,\mathrm{W}}(S)$ is asymptotically normal. The proof follows the big/small blocks argument implemented by \citet[Theorem~2.3]{MR1640691} for classical multivariate KDEs and by \citet[Proposition~2]{MR2756423} in the context of the (boundary-modified) gamma KDE for positive time series. The proof reduces to a verification of Lyapunov's condition on the big-block part of the estimator, after a proper normalization.

\begin{theorem}[Asymptotic normality]\label{thm:asymp.norm}\addcontentsline{toc}{subsection}{Theorem~\ref{thm:asymp.norm}}
Suppose that Assumption~\ref{ass:A} holds. Further, assume that the process $(\mathfrak{X}_t)_{t\in \N}$ is strictly stationary, with marginal distributions satisfying
\begin{equation}\label{eq:ass.6}
\sup_{t_1 < t_2} \|f_{\mathfrak{X}_{t_1},\mathfrak{X}_{t_2}} - f_{\mathfrak{X}_{t_1}} f_{\mathfrak{X}_{t_2}}\|_{\infty} <\infty, \quad \sup_{t_1 < t_2 < t_3 < t_4} \|f_{\mathfrak{X}_{t_1},\mathfrak{X}_{t_2},\mathfrak{X}_{t_3},\mathfrak{X}_{t_4}}\|_{\infty} < \infty,
\end{equation}
and strongly mixing, i.e., for every integer $k\in \N$,
\begin{equation}\label{eq:ass.7}
\alpha(k) = \sup_{t\in \N} \alpha\{\sigma(\mathfrak{X}_s, \, s\leq t), \sigma(\mathfrak{X}_s, \, s\geq t + k)\} \leq C_2 k^{-\beta},
\end{equation}
for some constants $C_2\in (0,\infty)$ and $\beta > \beta(d) \equiv \{3 r(d) + 14\} / \{6 r(d) + 8\}$. Let $S\in \mathcal{S}_{++}^d$ be such that $f(S)\in (0,\infty)$. If the MSE-optimal bandwidth $b_n = n^{-2/\{r(d) + 4\}}$ is chosen as $n\to \infty$, then one has
\[
n^{1/2} b_n^{r(d)/4} \frac{\{\hat{f}_{n,b_n}^{\,\mathrm{W}}(S) - f(S) - b_n g(S)\}}{\sqrt{\psi(S) f(S)}} \rightsquigarrow \mathcal{N}(0,1).
\]
\end{theorem}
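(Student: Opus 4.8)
The plan is to adapt the classical big-blocks/small-blocks argument of \citet[Theorem~2.3]{MR1640691} to the Wishart KDE, exactly as was done by \citet[Proposition~2]{MR2756423} for the gamma KDE in dimension one. Write $Z_t = K_{\nu(b_n,d),b_n S}(\mathfrak{X}_t) - \EE\{K_{\nu(b_n,d),b_n S}(\mathfrak{X}_t)\}$, so that $\smash{\hat{f}^{\,\mathrm{W}}_{n,b_n}(S) - \EE\{\hat{f}^{\,\mathrm{W}}_{n,b_n}(S)\}} = n^{-1}\sum_{t=1}^n Z_t$. From the proof of Theorem~\ref{thm:MSE} (invoked here), the bias is $\EE\{\hat{f}^{\,\mathrm{W}}_{n,b_n}(S)\} - f(S) = b_n g(S) + \oo_S(b_n)$, which matches the recentering term $n^{-1/2} b_n^{1-r(d)/4} g(S)$ because $n^{-1/2} b_n^{1-r(d)/4} = b_n$ under the choice $b_n = n^{-2/\{r(d)+4\}}$; likewise $\Var\{\hat f^{\,\mathrm W}_{n,b_n}(S)\} = n^{-1} b_n^{-r(d)/2}\psi(S) f(S)(1 + \oo_S(1))$. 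So it suffices to prove the CLT for $\sigma_n^{-1}\sum_{t=1}^n Z_t$ with $\sigma_n^2 = \Var(\sum_t Z_t) \asymp n b_n^{-r(d)/2}$.

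First I would record the single-kernel moment bounds that drive everything: from Lemma~\ref{lem:Wishart.kernel.global.bound}-type estimates and a Laplace/Watson expansion of $\int K_{\nu(b,d),bS}(x)^m f(x)\,\rd x$ around the mode $x=S$, one gets $\EE\{K_{\nu(b_n,d),b_nS}(\mathfrak{X}_1)^m\} \asymp_S b_n^{-(m-1)r(d)/2}$ for each fixed integer $m\geq 1$ (the $m=2$ case is what produces $\psi(S) f(S)$). Hence $\EE|Z_t|^m \ll_S b_n^{-(m-1)r(d)/2}$. Next, partition $[n]$ into $k_n$ big blocks of length $p_n$ alternating with $k_n$ small blocks of length $q_n$ (plus a remainder), with $p_n\to\infty$, $q_n/p_n\to 0$, $k_n q_n / n \to 0$, and $k_n\alpha(q_n)\to 0$; because $\alpha(k)\ll k^{-\beta}$, a standard choice is $p_n \asymp n/k_n$ with $k_n$ a small power of $n$ tuned to the constraints, the admissible window being governed by $\beta > \beta(d) = \{3r(d)+14\}/\{6r(d)+8\}$. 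Using the covariance bound $\sup_{t_1<t_2}\|f_{\mathfrak{X}_{t_1},\mathfrak{X}_{t_2}} - f_{\mathfrak{X}_{t_1}}f_{\mathfrak{X}_{t_2}}\|_\infty < \infty$ from \eqref{eq:ass.6} for near-diagonal pairs and the strong-mixing covariance inequality for far pairs (with $\EE|Z_t|^{2+\eta}$ controlled as above for a suitable small $\eta>0$), one shows $\sum_{1\le t<t'\le n}|\Cov(Z_t,Z_{t'})| = \oo(n b_n^{-r(d)/2})$, so $\sigma_n^2 = n\Var(Z_1)(1+\oo(1)) \asymp n b_n^{-r(d)/2}$, and that the small-blocks sum is asymptotically negligible in $L^2$ after normalization by $\sigma_n$.

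Then I would apply the Volkonskii--Rozanov/Bradley coupling lemma to replace the $k_n$ big-block sums $U_{n,j}$ by independent copies $\tilde U_{n,j}$ with the same marginals, at a total cost $\ll k_n\alpha(q_n) \to 0$, and verify Lyapunov's condition $\sigma_n^{-(2+\eta)}\sum_{j=1}^{k_n}\EE|\tilde U_{n,j}|^{2+\eta}\to 0$ for the independent array. Here $\EE|U_{n,j}|^{2+\eta}$ is bounded via a Rosenthal-type inequality for strongly mixing sequences (e.g.\ from \citealp{MR1640691}), yielding a bound of order roughly $p_n b_n^{-(1+\eta)r(d)/2}$ plus a cross-term contribution controlled by the fourth-order density bound in \eqref{eq:ass.6} and the mixing rate; dividing by $\sigma_n^{2+\eta}\asymp (n b_n^{-r(d)/2})^{1+\eta/2}$ and using $k_n p_n \asymp n$ leaves $(k_n)^{-\eta/2} b_n^{-\eta r(d)/4}$-type factors, which tend to $0$ precisely when $k_n$ grows fast enough relative to $b_n^{-r(d)/2}$ — this is where the threshold $\beta(d)$ enters, since $k_n$ cannot grow so fast that $k_n\alpha(q_n)$ fails to vanish. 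The main obstacle, and the step requiring the most care, is this simultaneous tuning of $(p_n, q_n, k_n, \eta)$: the Lyapunov ratio wants $k_n$ large, the mixing-error term $k_n\alpha(q_n)$ wants $q_n$ large hence $p_n$ (and the block count) constrained, and the fourth-moment/quadruple-density term wants $q_n/p_n$ small; showing these constraints are jointly satisfiable under $\beta > \beta(d)$ is the crux, and it mirrors the bookkeeping in \citet[proof of Theorem~2.3]{MR1640691} with $r(d)$ playing the role of the ambient dimension. Once Lyapunov holds, the triangular-array CLT gives $\sigma_n^{-1}\sum_j \tilde U_{n,j}\rightsquigarrow\mathcal N(0,1)$; undoing the coupling, adding back the negligible small blocks and remainder, and substituting $\sigma_n^2 \sim n b_n^{-r(d)/2}\psi(S) f(S)$ together with the bias expansion yields the stated limit.
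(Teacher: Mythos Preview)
Your proposal is correct and follows essentially the same big-block/small-block strategy as the paper, which explicitly cites \citet[Theorem~2.3]{MR1640691} and \citet[Proposition~2]{MR2756423} as its template. The only notable technical difference is in the Lyapunov step: rather than a Rosenthal-type bound at exponent $2+\eta$, the paper bounds the third-moment Lyapunov ratio via Cauchy--Schwarz, $\sum_i \EE|W_{n,i}|^3 \le (\sum_i \EE W_{n,i}^4)^{1/2}(\sum_i \EE W_{n,i}^2)^{1/2}$, and controls $\EE W_{n,i}^4$ directly using the quadruple-density assumption in \eqref{eq:ass.6}; it then parametrizes the block sizes as $r\asymp n^a$, $p\asymp n^{1-a}$, $q\asymp n^c$ and solves the resulting constraints to obtain $\beta(d)$ exactly.
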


\begin{remark}
The condition $\beta > \beta(d)$ is not very restrictive. Indeed, note that
\[
\beta(1) = 17/14 \approx 1.21, \quad \beta(2) = 23/26 \approx 0.88, \quad \beta(3) = 8/11 \approx 0.72,
\]
and $\beta(d)\to 1/2$ as $d\to \infty$. In the one-dimensional case ($d = 1$), \citet{MR2756423} imposed the condition $\beta > 3/2$. A careful analysis, carried out in the proof of Theorem~\ref{thm:asymp.norm}, shows that only $\beta > \beta(1) = 17/14$ is required when using the same arguments.
\end{remark}

\subsection{Results specific to the iid case}\label{sec:results.iid}

The proposition below presents an asymptotic expression for the MAE of the Wishart KDE as well as an explicit asymptotic upper bound. An analogous result was established for classical univariate KDEs by \citet{MR955204}, and then extended to the multivariate setting by \citet{MR1118245}. In the context of asymmetric kernels, this result was first proven for the beta kernel estimator on $[0,1]$ by \citet{MR1985506}, and subsequently generalized to higher dimensions for the Dirichlet kernel estimator on the simplex by \citet{MR4319409}.

One key benefit of the MAE over the MSE, as a measure of performance, is its robustness to outliers and extreme deviations. Since the MAE penalizes errors linearly through the absolute difference $\smash{|\hat{f}_{n,b}^{\,\mathrm{W}}(S) - f(S)|}$, large errors do not disproportionately dominate the overall error measure. This contrasts with the MSE, which penalizes errors quadratically and can be excessively influenced by regions where the estimator performs poorly. Moreover, in many applications, the cost associated with an estimation error is proportional to the magnitude of the error rather than its square, making the MAE more aligned with real-world loss functions.

As pointed out by \citet[Section 2.3.2]{MR3329609}, the corresponding integrated $L^1$ risk is a dimensionless quantity and invariant to monotone changes of scale, enhancing its interpretability and practical relevance. The pointwise MAE considered below can be viewed as its local analog. A comprehensive study of kernel smoothing theory from an $L^1$ perspective is provided by \citet{MR780746}.

\begin{proposition}[Mean absolute error]\label{prop:MAE}\addcontentsline{toc}{subsection}{Proposition~\ref{prop:MAE}}
Suppose that Assumption~\ref{ass:A} holds and the observations $\mathfrak{X}_1,\ldots,\mathfrak{X}_n$ are iid. Recall the definition of $g$ and $\psi$ in \eqref{eq:g.psi}. Then, for any $S\in \mathcal{S}_{++}^d$ such that $f(S)\in (0,\infty)$, one has, as $n\to \infty$ and $b=b_n\downarrow 0$,
\begin{equation}\label{eq:MAE}
\begin{aligned}
\mathrm{MAE}\{\hat{f}_{n,b}^{\,\mathrm{W}}(S)\}
&\equiv \EE\big\{\big|\hat{f}_{n,b}^{\,\mathrm{W}}(S) - f(S)\big|\big\} \\
&= \frac{\sqrt{\psi(S) f(S)}}{n^{1/2} b^{r(d)/4}} \, \EE\left\{\left|Z - \frac{n^{1/2} b^{1+r(d)/4} \, g(S)}{\sqrt{\psi(S) f(S)}}\right|\right\} \\
&\quad+ \OO\left\{\frac{n^{-1} b^{-r(d)/2}}{|S|^{(d + 1)/2}}\right\} + \oo\left\{\frac{n^{-1/2} b^{-r(d)/4}}{|S|^{(d + 1)/4}}\right\} + \oo\{b \, \tr(B_d^{\top} S^{\otimes 2} B_d)\}.
\end{aligned}
\end{equation}
If $n^{1/2} b^{r(d)/4}\to \infty$ as $n\to \infty$ (this is the case, e.g., if $b \asymp n^{-2/\{r(d) + 4\}}$), then one has the bound
\begin{equation}\label{eq:MAE.bound}
\mathrm{MAE}\{\hat{f}_{n,b}^{\,\mathrm{W}}(S)\}
\leq \frac{\sqrt{(2/\pi) \psi(S) f(S)}}{n^{1/2} b^{r(d)/4}} + b \, |g(S)| + \oo\left\{\frac{n^{-1/2} b^{-r(d)/4}}{|S|^{(d + 1)/4}}\right\} + \oo\{b \, \tr(B_d^{\top} S^{\otimes 2} B_d)\}.
\end{equation}
\end{proposition}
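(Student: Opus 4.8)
The plan is to reduce the MAE computation to a central limit argument for the normalized estimator, exactly as in the classical $L^1$-theory of \citet{MR955204,MR1118245} and its asymmetric-kernel analogues \citep{MR1985506,MR4319409}, and then to invoke the bias/variance expansions already extracted from the proof of Theorem~\ref{thm:MSE}. Write $\Delta_n(S) = \hat{f}^{\,\mathrm{W}}_{n,b}(S) - f(S)$ and decompose it as $\Delta_n(S) = \{\hat{f}^{\,\mathrm{W}}_{n,b}(S) - \EE \hat{f}^{\,\mathrm{W}}_{n,b}(S)\} + \Bias\{\hat{f}^{\,\mathrm{W}}_{n,b}(S)\}$. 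From the proof of Theorem~\ref{thm:MSE} (see Remark~\ref{rem:boundary.bias} and Remark~\ref{rem:boundary.variance}) one has, in the iid case, $\Bias\{\hat{f}^{\,\mathrm{W}}_{n,b}(S)\} = b\,g(S) + \oo\{b\,\tr(B_d^{\top} S^{\otimes 2} B_d)\}$ and $\Var\{\hat{f}^{\,\mathrm{W}}_{n,b}(S)\} = n^{-1} b^{-r(d)/2} \psi(S) f(S)\{1 + \oo_S(1)\}$. Set $\sigma_n^2 = n^{-1} b^{-r(d)/2}\psi(S) f(S)$ and $T_n = \{\hat{f}^{\,\mathrm{W}}_{n,b}(S) - \EE\hat{f}^{\,\mathrm{W}}_{n,b}(S)\}/\sigma_n$. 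Then $\mathrm{MAE}\{\hat{f}^{\,\mathrm{W}}_{n,b}(S)\} = \sigma_n\,\EE| T_n + \sigma_n^{-1}\Bias\{\hat{f}^{\,\mathrm{W}}_{n,b}(S)\}|$, and $\sigma_n^{-1}\Bias = n^{1/2} b^{1+r(d)/4} g(S)/\sqrt{\psi(S) f(S)} + \oo(\cdot)$, which already matches the centering constant appearing in \eqref{eq:MAE}.

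The key steps, in order, are: (i) establish the CLT $T_n \rightsquigarrow \mathcal{N}(0,1)$ — this is essentially the iid specialization of Theorem~\ref{thm:asymp.norm} (or can be obtained directly from Lyapunov's condition, since $\hat{f}^{\,\mathrm{W}}_{n,b}(S)$ is an average of iid bounded-in-$L^p$ kernel terms and the Lyapunov ratio is $\OO(n^{-1/2} b^{-r(d)/4}/|S|^{(d+1)/4}) \to 0$ under $n^{1/2} b^{r(d)/4}\to\infty$); (ii) upgrade weak convergence to convergence of first absolute moments by showing $\{T_n\}$ is uniformly integrable — equivalently $\sup_n \EE T_n^2 < \infty$, which follows since $\EE T_n^2 = 1$ by construction, and then apply the standard fact that $x \mapsto |x + c|$ with convergence in distribution plus uniform integrability gives $\EE|T_n + c_n| \to \EE|Z + c|$ whenever $c_n \to c$, together with a truncation/dominated-convergence argument to handle the regime where $c_n = \sigma_n^{-1}\Bias$ may diverge; (iii) assemble the exact asymptotic identity \eqref{eq:MAE} by multiplying back through by $\sigma_n$, carefully tracking that $\sigma_n = \sqrt{\psi(S) f(S)}\,n^{-1/2} b^{-r(d)/4}\{1+\oo_S(1)\}$, so the $\oo_S(1)$ relative error in the variance contributes the $\oo\{n^{-1/2} b^{-r(d)/4}/|S|^{(d+1)/4}\}$ term, the $\OO\{n^{-1}b^{-r(d)/2}/|S|^{(d+1)/2}\}$ term comes from replacing $\EE\hat{f}^{\,\mathrm{W}}_{n,b}(S)$'s exact second-order variance expansion, and the $\oo\{b\,\tr(B_d^{\top} S^{\otimes 2} B_d)\}$ term is the bias remainder; (iv) derive \eqref{eq:MAE.bound} from \eqref{eq:MAE} by bounding $\EE|Z - c| \le \EE|Z| + |c| = \sqrt{2/\pi} + |c|$ with $Z \sim \mathcal{N}(0,1)$, and $\sqrt{\psi(S) f(S)}\,n^{-1/2} b^{-r(d)/4} \cdot |c| = b\,|g(S)|$ exactly, which produces the two displayed leading terms.

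The main obstacle is step (ii): the centering constant $c_n = n^{1/2} b^{1+r(d)/4} g(S)/\sqrt{\psi(S) f(S)}$ need not converge — it may stay bounded, tend to $0$, or diverge depending on how $b$ is chosen relative to $n$ — so one cannot simply quote a continuous-mapping-plus-uniform-integrability statement with a fixed limit. The fix is the device used by \citet{MR1985506}: write $\EE|T_n + c_n| = \EE\,h_{c_n}(T_n)$ for the $1$-Lipschitz family $h_c(x) = |x+c|$, note that $|h_{c_n}(T_n) - h_{c_n}(Z)| \le |T_n - Z|$ after a Skorokhod coupling and that $|\EE h_{c}(Z) - (\text{the quantity in }\eqref{eq:MAE})|$ is handled by the exact Gaussian identity, and then control $\EE|T_n - h_{c_n}(Z')|$ uniformly in $c_n$ by combining the CLT (Berry–Esseen gives a rate $\OO(n^{-1/2} b^{-r(d)/4}/|S|^{(d+1)/4})$) with uniform integrability of $\{T_n^2\}$; since $h_c$ is $1$-Lipschitz, the error in $\EE h_{c_n}(T_n)$ versus $\EE h_{c_n}(Z)$ is bounded by a Wasserstein-type distance between $\mathcal{L}(T_n)$ and $\mathcal{N}(0,1)$, which is $\oo(1)$ uniformly in $c_n$. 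Everything downstream is bookkeeping of the three remainder terms, which is routine given the explicit expansions already furnished by the proof of Theorem~\ref{thm:MSE}.
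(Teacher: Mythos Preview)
Your approach is essentially the paper's: the paper invokes Lemma~2 of \citet{MR760686}, which packages your Wasserstein/1-Lipschitz argument into the non-asymptotic bound $\sup_a\big|\EE|\bar\xi_n-\EE\bar\xi_n-a\sqrt{\Var\bar\xi_n}|-\sqrt{\Var\bar\xi_n}\,\EE|Z-a|\big|\ll \EE|\xi_1-\EE\xi_1|^3/\{n\Var(\xi_1)\}$, and then applies Lemma~7 of the same reference (a Lipschitz bound for $(u,v)\mapsto u\,\EE|Z-v/u|$) to replace the exact $\sqrt{\Var}$ and $\Bias$ by their leading terms. Two corrections to your bookkeeping: the $\OO\{n^{-1}b^{-r(d)/2}/|S|^{(d+1)/2}\}$ term arises from the Berry--Esseen/third-moment step (it equals $\sigma_n$ times your Berry--Esseen rate, computed via the $L^3$/$L^2$ norms of the Wishart kernel), not from any variance replacement; the variance and bias substitutions are what produce the two $\oo$-remainders. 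Also, your Skorokhod/uniform-integrability device is redundant once you have the quantitative Wasserstein bound, and relying on it alone would force $n^{1/2}b^{r(d)/4}\to\infty$, an assumption \eqref{eq:MAE} does not require.
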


\section{Proofs}\label{sec:proofs}

\subsection{Proof of Theorem~\ref{thm:MSE}}

Following \citet[p.~44]{MR1640691}, consider the following decomposition:
\begin{equation}\label{eq:MSE.proof.begin}
\begin{aligned}
\EE\big[\{\hat{f}_{n,b}^{\,\mathrm{W}}(S) - f(S)\}^2\big]
&= \big[\EE\{K_{\nu(b,d), b S}(\mathfrak{X}_1)\} - f(S)\big]^2 + n^{-1} \, \Var\{K_{\nu(b,d), b S}(\mathfrak{X}_1)\} \\
&\quad+ \frac{1}{n^2} \sum_{1 \leq |t - t'| \leq n-1} \Cov\{K_{\nu(b,d), b S}(\mathfrak{X}_t), K_{\nu(b,d), b S}(\mathfrak{X}_{t'})\} \\
&\equiv \mbox{(I)}^2 + (\mbox{II)} + \mbox{(III)}.
\end{aligned}
\end{equation}

For the bias term (I), let $\mathfrak{W}_{b,S}\sim \mathrm{Wishart}_d(1/b + d + 1, b S)$ as in Section~\ref{sec:definitions}. By a second-order Taylor expansion, one has
\[
\begin{aligned}
f(\mathfrak{W}_{b,S}) - f(S)
&= \nabla f(S)^{\top} \vecp(\mathfrak{W}_{b,S} - S) + \frac{1}{2} \nabla^{\otimes 2} f(S)^{\top} \vecp(\mathfrak{W}_{b,S} - S)^{\otimes 2} \\
&\quad+ \frac{1}{2} \{\nabla^{\otimes 2} f(\mathfrak{U}_S) - \nabla^{\otimes 2} f(S)\}^{\top} \vecp(\mathfrak{W}_{b,S} - S)^{\otimes 2},
\end{aligned}
\]
for some random matrix $\mathfrak{U}_S\in \mathcal{S}_{++}^d$ on the line segment joining $\mathfrak{W}_{b,S}$ and $S$ in $\mathcal{S}_{++}^d$. Here, the second-order Taylor expansion is applicable because the subspace $\smash{\mathcal{S}_{++}^d}$ is open and convex in $\mathcal{S}^d$. Taking expectations in the last equation, and then using \eqref{eq:g.psi}, one finds that
\[
\begin{aligned}
&\left|\EE\{K_{\nu(b,d), b S}(\mathfrak{X}_1)\} - f(S) - b g(S)\right| \\
&\quad\leq \frac{1}{2} \left|\nabla^{\otimes 2} f(S)^{\top} \left[\EE\{\vecp(\mathfrak{W}_{b,S} - S)^{\otimes 2}\} - 2 b \, \vecc(B_d^{\top} S^{\otimes 2} B_d)\right]\right| \\
&\quad+ \frac{1}{2} \EE\big[|\nabla^{\otimes 2} f(\mathfrak{U}_S) - \nabla^{\otimes 2} f(S)|^{\top} |\vecp(\mathfrak{W}_{b,S} - S)|^{\otimes 2} \ind_{\{\|\vecp(\mathfrak{W}_{b,S} - S)\|_1 \leq \delta\}}\big] \\
&\quad+ \frac{1}{2} \EE\big[|\nabla^{\otimes 2} f(\mathfrak{U}_S) - \nabla^{\otimes 2} f(S)|^{\top} |\vecp(\mathfrak{W}_{b,S} - S)|^{\otimes 2} \ind_{\{\|\vecp(\mathfrak{W}_{b,S} - S)\|_1 > \delta\}}\big] \\[1mm]
&\quad\equiv \Delta_0 + \Delta_1 + \Delta_2,
\end{aligned}
\]
for some $\delta\in (0,\infty)$ to be chosen below. Note that absolute values on vectors are taken component-wise, i.e., $|\bb{v}| = (|v_1|,\ldots,|v_n|)^{\top}$, and $\|\cdot\|_1$ denotes the $\ell^1$ norm.

By \eqref{eq:expectation.covariance} and Assumption~\ref{ass:A}, one has
\[
\Delta_0 = \oo\{b \, \tr(B_d^{\top} S^{\otimes 2} B_d)\}.
\]
By Assumption~\ref{ass:A}, given any scalar $\e \in (0,\infty)$, there exists $\delta = \delta_{\e,d}\in (0,1]$ such that
\[
\|\vecp(S' - S)\|_1 \leq \delta \quad \implies \quad |\nabla^{\otimes 2} f(S') - \nabla^{\otimes 2} f(S)| < \e,
\]
uniformly for $S,S'\in \mathcal{S}_{++}^d$. Hence,
\[
\Delta_1
\leq \frac{\e}{2} \, \EE\big\{\bb{1}_{r(d)^2}^{\top} |\vecp(\mathfrak{W}_{b,S} - S)|^{\otimes 2}\big\}
= \frac{\e}{2} \, \bb{1}_{r(d)^2}^{\top} \EE\big\{|\vecp(\mathfrak{W}_{b,S} - S)|^{\otimes 2}\big\}.
\]
Applying the Cauchy--Schwarz inequality twice, and then \eqref{eq:expectation.covariance}, one obtains
\[
\begin{aligned}
\Delta_1
&\leq \frac{\e}{2} \, \bb{1}_{r(d)^2}^{\top} \big(\big[\EE\big\{|\vecp(\mathfrak{W}_{b,S} - S)|^2\big\}\big]^{1/2}\big)^{\otimes 2}
= \frac{\e}{2} \, \big(\bb{1}_{r(d)}^{\top} \big[\EE\big\{|\vecp(\mathfrak{W}_{b,S} - S)|^2\big\}\big]^{1/2}\big)^2 \\
&\leq \frac{\e}{2} \, r(d) \, \tr\big[\EE\{\vecp(\mathfrak{W}_{b,S} - S) \vecp(\mathfrak{W}_{b,S} - S)^{\top}\}\big] \\
&\ll \e \, b \, \tr(B_d^{\top} S^{\otimes 2} B_d).
\end{aligned}
\]

Next, the term $\Delta_2$ is bounded. Under Assumption~\ref{ass:A}, the second-order partial derivatives~of~$f$ are uniformly bounded on $\mathcal{S}_{++}^d$, so
\[
\Delta_2 \ll \EE\big[\|\vecp(\mathfrak{W}_{b,S} - S)\|_1^2 \, \ind_{\{\|\vecp(\mathfrak{W}_{b,S} - S)\|_1 > \delta\}}\big].
\]
Given that $x^2 \ind_{\{x > \delta\}} \leq \delta^{-2} x^4$ for all $x\in (0,\infty)$, and $\EE(\mathfrak{W}_{b,S}) - S = b (d + 1) S$, one obtains
\[
\begin{aligned}
\Delta_2
&\ll_{\delta} \EE\big\{\|\vecp(\mathfrak{W}_{b,S} - S)\|_1^4\big\} \ll \EE\big\{\|\mathfrak{W}_{b,S} - S\|_F^4\big\} \ll \EE\big\{\|\mathfrak{W}_{b,S} - \EE(\mathfrak{W}_{b,S})\|_F^4\big\} + b^4 \|S\|_F^4.
\end{aligned}
\]
When $\nu \equiv \nu(b,d) = 1/b + (d + 1)\in \N$, one may use the representation
\[
\mathfrak{W}_{b,S} - \EE(\mathfrak{W}_{b,S}) \stackrel{\mathrm{law}}{=} \sum_{i=1}^{\nu} \mathfrak{Y}_i, \quad \mathfrak{Y}_i = (b S)^{1/2} (\bb{Z}_i \bb{Z}_i^{\top} - I_d) (b S)^{1/2}, \quad \EE(\mathfrak{Y}_i) = 0_{d\times d},
\]
where $\bb{Z}_1,\ldots,\bb{Z}_{\nu} \stackrel{\mathrm{iid}}{\sim} \mathcal{N}_d(\bb{0}_d, I_d)$. Using a fourth-moment inequality for sums of iid, mean zero random elements in a Hilbert space \citep[Eq.~(11)]{MR3431705}, followed by the submultiplicativity of the Frobenius norm, one has
\[
\begin{aligned}
\EE\{\|\mathfrak{W}_{b,S} - \EE(\mathfrak{W}_{b,S})\|_F^4\}
&\ll \nu \, \EE(\|\mathfrak{Y}_1\|_F^4) + \nu^2 \{\EE(\|\mathfrak{Y}_1\|_F^2)\}^2 \\
&\ll \nu b^4 \|S\|_F^4 + \nu^2 b^4 \|S\|_F^4 \\
&\ll b^2 \|S\|_F^4.
\end{aligned}
\]
For arbitrary $\nu > d - 1$, the exact same bound remains valid by analytic continuation, because while the derivation above relies on $\nu \in \N$, the moments of the Wishart distribution are purely polynomial in the degrees of freedom. Therefore,
\[
\Delta_2 \ll b^2 \|S\|_F^4 \ll b^2 \, \{\tr(B_d^{\top} S^{\otimes 2} B_d)\}^2 = \oo\{b \, \tr(B_d^{\top} S^{\otimes 2} B_d)\}.
\]
Putting the above estimates for $\Delta_0$, $\Delta_1$, and $\Delta_2$ together, and then letting $\e\downarrow 0$, yields
\begin{equation}\label{eq:I.estimate}
\mbox{(I)} = b g(S) + \oo\{b \, \tr(B_d^{\top} S^{\otimes 2} B_d)\}.
\end{equation}

For the variance term (II), use Assumption~\ref{ass:A} together with the estimate on the $L^2$ norm of the Wishart kernel in Lemma~\ref{lem:L.q.norm.Wishart.kernel} to deduce that
\begin{equation}\label{eq:II.estimate}
\begin{aligned}
\mbox{(II)} &= n^{-1} \, \Var\{K_{\nu(b,d), b S}(\mathfrak{X}_1)\} \\
&= n^{-1} \EE\big\{K_{\nu(b,d), b S}(\mathfrak{X}_1)^2\big\} - n^{-1} \big[\EE\{K_{\nu(b,d), b S}(\mathfrak{X}_1)\}\big]^2 \\
&= n^{-1} b^{-r(d)/2} \frac{|S|^{-(d + 1)/2}}{2^{r(d) + d/2} \pi^{r(d)/2}} \, \{f(S) + \OO(b \|\nabla f\|_{\infty})\} + \OO(n^{-1} \|f\|_{\infty}^2).
\end{aligned}
\end{equation}

Now, consider the covariance term (III). On the one hand, apply H\"older's inequality with $p\in (2,\infty)$ and $q\in (1,2)$ satisfying $1/p + 1/q = 1$, followed by the estimate on the $L^q$ norm of the Wishart kernel in Lemma~\ref{lem:L.q.norm.Wishart.kernel}, to find that, for all $|t - t'|\geq 1$,
\[
\begin{aligned}
|\Cov\{K_{\nu(b,d), b S}(\mathfrak{X}_t), K_{\nu(b,d), b S}(\mathfrak{X}_{t'})\}|
&\leq \int_{\mathcal{S}_{++}^d\times\mathcal{S}_{++}^d} K_{\nu(b,d), b S}(X) K_{\nu(b,d), b S}(X') |g_{t,t'}(X,X')| \, \rd X \, \rd X' \\
&\leq \|K_{\nu(b,d), b S}\|_q^2 \times \sup_{|t - t'|\geq 1} \|g_{t,t'}\|_p \\
&= \frac{b^{-r(d)/p} |S|^{-(d + 1)/p}}{2^{r(d)/p + d/p} q^{r(d)/q} \pi^{r(d)/p}} \, \{1 + \OO_q(b)\} \times G_p,
\end{aligned}
\]
where $G_p \equiv \sup_{|t - t'|\geq 1} \|g_{t,t'}\|_p < \infty$ by \eqref{eq:ass.2}. On the other hand, using Billingsley's inequality \citep[Corollary~1.1]{MR1640691} in conjunction with the upper bound on the supremum norm of the Wishart kernel in Lemma~\ref{lem:Wishart.kernel.global.bound} shows that
\[
\begin{aligned}
|\Cov\{K_{\nu(b,d), b S}(\mathfrak{X}_t), K_{\nu(b,d), b S}(\mathfrak{X}_{t'})\}|
&\leq 4 \, \|K_{\nu(b,d), b S}\|_{\infty}^2 \, \alpha^{(2)}(|t - t'|) \\
&\leq 4 \, \frac{b^{-r(d)} |S|^{-(d + 1)}}{2^{r(d) + d} \pi^{r(d)}} \, \{1 + \OO(b)\} \times \alpha^{(2)}(|t - t'|).
\end{aligned}
\]
Bringing the last two bounds together, and using \eqref{eq:ass.3} to control $\alpha^{(2)}(|t - t'|)$, one finds
\[
\begin{aligned}
|\mbox{(III)}|
&\ll \frac{2}{n} \sum_{k=1}^{n-1} \min\{G_p \, b^{-r(d)/p} |S|^{-(d + 1)/p}, b^{-r(d)} |S|^{-(d + 1)} C k^{-\beta}\} \\
&\ll \frac{2}{n} \left[\sum_{k=1}^{\lfloor b^{-r(d)/(q \beta)}\rfloor} G_p \, b^{-r(d)/p} |S|^{-(d + 1)/p} + b^{-r(d)} |S|^{-(d + 1)} C \sum_{k=\lfloor b^{-r(d)/(q \beta)}\rfloor + 1}^{\infty} k^{-\beta}\right].
\end{aligned}
\]
Given that, for $\beta\in (1,\infty)$ and any diverging sequence $a(n)\to \infty$,
\begin{equation}\label{eq:tail.summation}
\sum_{k=a(n)+1}^{\infty} k^{-\beta} \asymp \int_{a(n)}^{\infty} x^{-\beta} \, \rd x = \frac{a(n)^{-\beta + 1}}{\beta - 1},
\end{equation}
it follows that
\begin{equation}\label{eq:III.estimate}
\begin{aligned}
|\mbox{(III)}|
&\ll_{\beta} n^{-1} \left[b^{-r(d)/(q \beta)} b^{-r(d)/p} |S|^{-(d + 1)/p} + b^{-r(d)} |S|^{-(d + 1)} \{b^{-r(d)/(q \beta)}\}^{-\beta + 1}\right] \\
&\ll n^{-1} b^{-r(d)/2} \times b^{r(d)\{1/2 - 1/p - 1/(q\beta)\}} \{|S|^{-(d + 1)/p} + |S|^{-(d + 1)}\} \\
&= \oo\left[n^{-1} b^{-r(d)/2} \{|S|^{-(d + 1)/p} + |S|^{-(d + 1)}\}\right].
\end{aligned}
\end{equation}
The last equality holds because $1/2 - 1/p - 1/(q\beta) > 0$, this condition being equivalent to the original assumption $\beta > 2 (p-1)/(p-2)$. The condition $p\in (2,\infty)$ ensures that $\beta\in (1,\infty)$, which is necessary to get the asymptotics of the tail summation in \eqref{eq:tail.summation}. Inserting the estimates \eqref{eq:I.estimate}, \eqref{eq:II.estimate}, and \eqref{eq:III.estimate} into \eqref{eq:MSE.proof.begin} yields the desired conclusion.

\subsection{Proof of Theorem~\ref{thm:unif.conv}}

From the estimate \eqref{eq:I.estimate} on the bias term in the proof of Theorem~\ref{thm:MSE}, one has, for all $S\in \mathcal{S}_{++}^d$,
\[
\begin{aligned}
\big|\EE\{\hat{f}_{n,b_n}^{\,\mathrm{W}}(S)\} - f(S)\big|
&\ll b_n \left\{\max_{k\in \{1,\ldots,r(d)\}} \|(\nabla f)_k\|_{\infty}\right\} \sum_{k=1}^{r(d)} |\vecp(S)_k| \\
&\quad+ b_n \left\{\max_{k\in \{1,\ldots,r(d)^2\}} \|(\nabla^{\otimes 2} f)_k\|_{\infty}\right\} \sum_{k=1}^{r(d)^2} |\vecc(B_d^{\top} S^{\otimes 2} B_d)_k|.
\end{aligned}
\]
The first- and second-order partial derivatives of $f$ are bounded under Assumption~\ref{ass:A}, and
\[
\sum_{k=1}^{r(d)} |\vecp(S)_k| \leq \sqrt{r(d)} \, \|S\|_F \leq \sqrt{r(d)} \, \sqrt{d} \, \|S\|_2
\]
while
\[
\sum_{k=1}^{r(d)^2} |\vecc(B_d^{\top} S^{\otimes 2} B_d)_k| \leq r(d) \, \|B_d^{\top} S^{\otimes 2} B_d\|_F \leq r(d) \, \|B_d\|_F^2 \|S\|_F^2 \leq r(d) \, \|B_d\|_F^2 \, d \, \|S\|_2^2.
\]
Therefore, under the restriction $S\in \mathcal{S}_{++}^d(\delta_n)$ with $\delta_n\in (0,1]$, one has
\begin{equation}\label{eq:bias.rate}
\big|\EE\{\hat{f}_{n,b_n}^{\,\mathrm{W}}(S)\} - f(S)\big| \ll b_n \|S\|_2 + b_n \|S\|_2^2 \ll b_n \, \delta_n^{-2}.
\end{equation}

It remains to control the recentered estimator $\smash{\hat{f}_{n,b_n}^{\,\mathrm{W}}(S) - \EE\{\hat{f}_{n,b_n}^{\,\mathrm{W}}(S)\}}$ on $\mathcal{S}_{++}^d(\delta_n)$ using a Voronoi partition with respect to the Frobenius metric.
Fix any Frobenius mesh radius $w_n\in (0,1]$ and select a set of centers $\{S_{n,1},\ldots,S_{n,N_n}\} \subseteq \mathcal{S}_{++}^d(\delta_n)$ that forms a $w_n$-net under $\|\cdot\|_F$, i.e., for every $S\in \mathcal{S}_{++}^d(\delta_n)$, there exists $j\in [N_n]$ with $\|S - S_{n,j}\|_F \leq w_n$. The $j$th Voronoi cell is
\[
R_{n,j} = \big\{S\in \mathcal{S}_{++}^d(\delta_n) : \forall_{\ell \in [N_n]} \; \|S - S_{n,j}\|_F \leq \|S - S_{n,\ell}\|_F \big\}.
\]
By construction, any $S\in R_{n,j}$ satisfies
\begin{equation}\label{eq:width.small.regions.F}
\|S - S_{n,j}\|_F \leq w_n.
\end{equation}
Moreover, given that $\|S\|_F \leq \sqrt{d}\,\|S\|_2 \leq \sqrt{d}\,\delta_n^{-1}$ for all $S\in \mathcal{S}_{++}^d(\delta_n)$, an obvious covering-number bound is
\begin{equation}\label{eq:cover.number.bound}
N_n \ll \left( {\delta_n^{-1}}/ {w_n}\right)^{r(d)}.
\end{equation}
The partition $R_{n,1}, \dots, R_{n,N_n}$ yields
\begin{equation}\label{eq:sup.sup.representation.voronoi}
\sup_{S\in \mathcal{S}_{++}^d(\delta_n)} \big|\hat{f}_{n,b_n}^{\,\mathrm{W}}(S) - \EE\{\hat{f}_{n,b_n}^{\,\mathrm{W}}(S)\}\big|
= \max_{j\in [N_n]} \sup_{S\in R_{n,j}} \big|\hat{f}_{n,b_n}^{\,\mathrm{W}}(S) - \EE\{\hat{f}_{n,b_n}^{\,\mathrm{W}}(S)\}\big|.
\end{equation}
By a union bound, the Borel--Cantelli lemma, and assumption \eqref{eq:ass.5}, the event that
\[
E_n \equiv \bigcap_{t=1}^n \{\lambda_1(\mathfrak{X}_t) \leq \delta_n^{-1}\}
\]
fails infinitely often has probability $0$ because
\[
\sum_{n=1}^{\infty} \PP\left[\bigcup_{t=1}^n \{\lambda_1(\mathfrak{X}_t) > \delta_n^{-1}\}\right]
\leq \sum_{n=1}^{\infty} \sum_{t=1}^n \PP\{\lambda_1(\mathfrak{X}_t) > \delta_n^{-1}\} < \infty.
\]
On the event $E_n$ (for $n$ large enough), Lemma~\ref{lem:difference.Wishart.kernels} and \eqref{eq:width.small.regions.F} give, for every $j\in [N_n]$,
\[
\sup_{S\in R_{n,j}} \big|\hat{f}_{n,b_n}^{\,\mathrm{W}}(S) - \hat{f}_{n,b_n}^{\,\mathrm{W}}(S_{n,j})\big| \ll b_n^{-r(d)/2} \delta_n^{-r(d)} \times b_n^{-1} \delta_n^{-3} w_n
\]
and
\[
\sup_{S\in R_{n,j}} \big|\EE\{\hat{f}_{n,b_n}^{\,\mathrm{W}}(S)\} - \EE\{\hat{f}_{n,b_n}^{\,\mathrm{W}}(S_{n,j})\}\big|
\ll b_n^{-1} \delta_n^{-1} w_n.
\]
Using \eqref{eq:sup.sup.representation.voronoi}, it follows that
\begin{equation}\label{eq:discretization}
\begin{aligned}
\sup_{S\in \mathcal{S}_{++}^d(\delta_n)} \big|\hat{f}_{n,b_n}^{\,\mathrm{W}}(S) - \EE\{\hat{f}_{n,b_n}^{\,\mathrm{W}}(S)\}\big|
&\leq \max_{j\in [N_n]} \big|\hat{f}_{n,b_n}^{\,\mathrm{W}}(S_{n,j}) - \EE\{\hat{f}_{n,b_n}^{\,\mathrm{W}}(S_{n,j})\}\big| \\
&\quad+ \OO\{b_n^{-r(d)/2 - 1} \delta_n^{-r(d)-3} \, w_n\}.
\end{aligned}
\end{equation}
Given some sequence $\e_n\downarrow 0$ to be determined explicitly later in the proof, choose $w_n$ so that
\[
b_n^{-r(d)/2 - 1} \delta_n^{-r(d) - 3} \, w_n \asymp \e_n,
\]
which makes the discretization cost $\OO(\e_n)$ in \eqref{eq:discretization}. In view of \eqref{eq:cover.number.bound}, this choice implies
\[
N_n \ll \left\{{b_n^{-r(d)/2 - 1} \delta_n^{-r(d) - 4}}/{\e_n}\right\}^{r(d)}.
\]

To conclude, it remains to bound the maximum on the right-hand side of \eqref{eq:discretization}. For every $j\in [N_n]$, write
\[
\hat{f}_{n,b_n}^{\,\mathrm{W}}(S_{n,j}) - \EE\{\hat{f}_{n,b_n}^{\,\mathrm{W}}(S_{n,j})\} = \frac{1}{n} \sum_{t=1}^n Y_{n,j,t},
\]
where one defines, for every integer $t \in [n]$,
\[
Y_{n,j,t} = K_{\nu(b_n,d),b_n S_{n,j}}(\mathfrak{X}_t) - \EE\{K_{\nu(b_n,d),b_n S_{n,j}}(\mathfrak{X}_t)\}.
\]
By Lemma~\ref{lem:Wishart.kernel.global.bound}, note that
\begin{equation}\label{eq:bound.Y}
M_n \equiv \max_{j\in [N_n], t\in [n]} |Y_{n,j,t}| \leq \max_{j\in [N_n]} \|K_{\nu(b_n,d),b_n S_{n,j}}\|_{\infty} \ll b_n^{-r(d)/2} \delta_n^{-r(d)}.
\end{equation}
Now, using a union bound, followed by application of Theorem~1.3~(2) of \citet{MR1640691} (which follows from Bradley's lemma and Bernstein's inequality), one finds that, for any integer $q_n\in [1,n/2]$,
\begin{equation}\label{eq:unif.conv.expansion}
\begin{aligned}
&\PP\left[\max_{j\in [N_n]} \big|\hat{f}_{n,b_n}^{\,\mathrm{W}}(S_{n,j}) - \EE\{\hat{f}_{n,b_n}^{\,\mathrm{W}}(S_{n,j})\}\big| > \e_n\right] \\
&\quad\leq \sum_{j=1}^{N_n} \PP\left(\left|\sum_{t=1}^n Y_{n,j,t}\right| > n \e_n\right) \\
&\quad\leq N_n \left[4 \, \exp\left\{-\frac{\e_n^2 \, q_n}{8 v^2(q_n)}\right\} + 22 \left(1 + \frac{4 M_n}{\e_n}\right)^{1/2} q_n \, \alpha\left(\left\lfloor \frac{n}{2q_n}\right\rfloor\right)\right],
\end{aligned}
\end{equation}
where
\begin{equation}\label{eq:def.p.n.v.q.n}
p_n = \frac{n}{2 q_n}, \quad v^2(q_n) = \frac{2}{p_n^2} \sigma^2(q_n) + \frac{M_n \e_n}{2},
\end{equation}
and
\[
\begin{aligned}
\sigma^2(q_n)
&= \max_{\substack{i\in \{0,\ldots,2q_n - 1\} \\ j\in [N_n]}} \EE\Big(\left[(\lfloor i p_n\rfloor + 1 - i p_n) Y_{n,j,\lfloor i p_n\rfloor + 1} + Y_{n,j,\lfloor i p_n\rfloor + 2} + \dots \right. \Big. \\[-4mm]
&\hspace{33mm}\Big. \left. \dots + Y_{n,j,\lfloor (i+1) p_n\rfloor} + \{(i+1) p_n - \lfloor (i+1) p_n\rfloor\} Y_{n,j,\lfloor (i+1) p_n + 1\rfloor}\right]^2\Big).
\end{aligned}
\]
Using the variance estimate in \eqref{eq:II.estimate}, together with the same covariance-splitting argument as in the proof of Theorem~\ref{thm:MSE} based on \eqref{eq:ass.2} and the geometric strong mixing condition \eqref{eq:ass.4}, the covariance contributions in $\sigma^2(q_n)$ are dominated by the variance contributions. Hence,
\begin{equation}\label{eq:sigma.2.q.n}
\sigma^2(q_n) \ll p_n \, b_n^{-r(d)/2} \delta_n^{-r(d)}.
\end{equation}
Combining \eqref{eq:bound.Y}, \eqref{eq:def.p.n.v.q.n}, and \eqref{eq:sigma.2.q.n} then yields
\[
v^2(q_n) \ll n^{-1} q_n \, b_n^{-r(d)/2} \delta_n^{-r(d)} + b_n^{-r(d)/2} \delta_n^{-r(d)} \e_n.
\]
Choosing $q_n = \lfloor (\log n)^{(1 + \gamma)/2} n^{1/2} b_n^{-r(d)/4} \rfloor$ and $\e_n = \delta_n^{-r(d)} (\log n)^{(1 + \gamma)/2} n^{-1/2} b_n^{-r(d)/4}$ for some constant $\gamma\in (0,\infty)$, one finds that
\[
v^2(q_n) \ll b_n^{-r(d)/2} \delta_n^{-r(d)} \e_n, \quad \frac{\e_n^2 \, q_n}{8 v^2(q_n)} \gg (\log n)^{1 + \gamma},
\]
and
\[
\frac{4 M_n}{\e_n} \ll (\log n)^{-(1 + \gamma)/2} n^{1/2} b_n^{-r(d)/4}, \quad \frac{n}{2 q_n} \asymp (\log n)^{-(1 + \gamma)/2} n^{1/2} b_n^{r(d)/4}.
\]
Consequently, the mixing term involving $\alpha$ on the right-hand side of \eqref{eq:unif.conv.expansion} converges to $0$ at least as fast as $\rho\in [0,1)$ to the power $n^{1/\{r(d) + 4\}}$, which is much faster than the exponential term. Thus,
\[
\begin{aligned}
\PP\left[\max_{j\in [N_n]} \big|\hat{f}_{n,b_n}^{\,\mathrm{W}}(S_{n,j}) - \EE\{\hat{f}_{n,b_n}^{\,\mathrm{W}}(S_{n,j})\}\big| > \e_n\right]
&\ll N_n \exp\left\{-\frac{\e_n^2 \, q_n}{8 v^2(q_n)}\right\} \\
&\ll \left\{{b_n^{-r(d)/2 - 1} \delta_n^{-r(d) - 4}}/{\e_n}\right\}^{r(d)} \exp\{-\beta (\log n)^{1 + \gamma}\},
\end{aligned}
\]
for some constant $\beta = \beta(d)\in (0,\infty)$ which is chosen small enough. Using the above expression~for~$\e_n$ together with the fact that $|\log(b_n)| \ll \log(n)$, the factor in front of the exponential is asymptotically bounded by a finite power of $n$, so one finds, for $n$ large enough,
\[
\PP\left[\max_{j\in [N_n]} \big|\hat{f}_{n,b_n}^{\,\mathrm{W}}(S_{n,j}) - \EE\{\hat{f}_{n,b_n}^{\,\mathrm{W}}(S_{n,j})\}\big| > \e_n\right]
\ll \exp\{-(\beta/2) (\log n)^{1 + \gamma}\}.
\]
This last bound is summable in $n$, so the Borel--Cantelli lemma implies
\[
\sup_{S\in \mathcal{S}_{++}^d(\delta_n)} \big|\hat{f}_{n,b_n}^{\,\mathrm{W}}(S) - \EE\{\hat{f}_{n,b_n}^{\,\mathrm{W}}(S)\}\big| \ll \e_n, \quad \text{a.s.}
\]
Moreover, the absolute bias bound in \eqref{eq:bias.rate} is $\OO(\e_n)$ because
\[
b_n \delta_n^{-2} \ll b_n \delta_n^{-r(d)} (\log n)^{(1 + \gamma)/2} \ll \e_n,
\]
by the assumptions $1 \leq \delta_n^{-1} \ll (\log n)^{(1 + \gamma)/2}$ and $b_n \asymp n^{-2/\{r(d) + 4\}}$ in the statement of the theorem. This concludes the proof.

\subsection{Proof of Theorem~\ref{thm:asymp.norm}}

The proof follows that of \citet[Proposition~2]{MR2756423} and \citet[Theorem 2.3]{MR1640691}. The first step is to show that if $\smash{n^{1/2} b_n^{r(d)/4}}\to \infty$ as $n\to \infty$, then
\[
n^{1/2} b_n^{r(d)/4} \frac{\hat{f}_{n,b_n}^{\,\mathrm{W}}(S) - \EE\{\hat{f}_{n,b_n}^{\,\mathrm{W}}(S)\}}{\sqrt{\psi(S) f(S)}} \equiv n^{-1/2} \sum_{t=1}^n Z_{n,t} \rightsquigarrow \mathcal{N}(0,1),
\]
where, for every integer $t \in [n]$,
\[
Z_{n,t} = \frac{K_{\nu(b_n,d),b_n S}(\mathfrak{X}_t) - \EE\{K_{\nu(b_n,d),b_n S}(\mathfrak{X}_t)\}}{\sqrt{b_n^{-r(d)/2} \psi(S) f(S)}}.
\]
Consider the big/small blocks decomposition:
\[
\sum_{t=1}^n Z_{n,t} = \sum_{i=1}^r V_{n,i} + \sum_{i=1}^r V_{n,i}^{\star} + \sum_{t=r(p+q)+1}^n Z_{n,t},
\]
where $V_{n,i} = Z_{n,(i-1)(p+q) + 1} + \dots + Z_{n,i p + (i-1) q}$ and $V_{n,i}^{\star} = Z_{n,i p + (i-1) q + 1} + \dots + Z_{n,i (p+q)}$. Choosing $r \asymp n^a$, $p \asymp \smash{n^{1-a}}$, $q \asymp n^c$, $a\in (0,1)$, and $c\in (0,1-a)$, one finds
\[
n^{-1} \, \Var\left(\sum_{i=1}^r V_{n,i}^{\star}\right) \ll n^{a + c - 1}\to 0, \quad n^{-1} \, \Var\left(\sum_{t=r(p+q)+1}^n Z_{n,t}\right) \ll n^{a-1}\to 0,
\]
which means that $n^{-1/2} \sum_{t=1}^n Z_{n,t}$ has the same asymptotic distribution as $n^{-1/2} \sum_{i=1}^r V_{n,i}$.

As in \citet[pp.~55--56]{MR1640691}, it can be shown that if $c > [a + \{r(d) + 2\} / \{r(d) + 4\}]/(2\beta)$, then the $V_{n,i}$'s are asymptotically independent and that under the assumptions \eqref{eq:ass.6} and \eqref{eq:ass.7}, they can be replaced at a negligible cost by iid random variables, say $W_{n,1}, \dots, W_{n,r}$.

In particular, to verify the asymptotic normality of $\sum_{i=1}^r V_{n,i}$, it suffices to apply Lyapunov's condition to $\sum_{i=1}^r W_{n,i}$. By applying the Cauchy--Schwarz inequality twice, together with the assumption $\sup_{t_1 < t_2 < t_3 < t_4} \|f_{\mathfrak{X}_{t_1},\mathfrak{X}_{t_2},\mathfrak{X}_{t_3},\mathfrak{X}_{t_4}}\|_{\infty} < \infty$ in \eqref{eq:ass.6} and the MSE-optimal bandwidth choice $b_n = \smash{n^{-2/\{r(d) + 4\}}}$, one finds that
\[
\frac{\sum_{i=1}^r \EE(|W_{n,i}|^3)}{\{r \, \Var(W_{n,1})\}^{3/2}} \leq \frac{\sqrt{\sum_{i=1}^r \EE(|W_{n,i}|^4)} \sqrt{\sum_{i=1}^r \EE(|W_{n,i}|^2)}}{\{r \, \Var(W_{n,1})\}^{3/2}} \ll n^{4 / \{r(d) + 4\} - 3a/2}\to 0,
\]
assuming $\beta > \beta(d) \equiv \{3 r(d) + 14\} / \{6 r(d) + 8\}$; see \citet[p.~56]{MR1640691} for details.

Here, $\beta(d)$ is the smallest real for which the restrictions imposed above, i.e.,
\[
a\in (0,1), \quad 1-a > c > \frac{a + \{r(d) + 2\} / \{r(d) + 4\}}{2\beta}, \quad \frac{4}{r(d) + 4} - \frac{3a}{2} < 0,
\]
hold simultaneously for all $\beta > \beta(d)$. The first and third restrictions are equivalent to
\[
a(d) \equiv \frac{8}{3\{r(d) + 4\}} < a < 1.
\]
The possibility of choosing an appropriate constant $c$ in the second restriction is equivalent to
\[
\beta > \frac{a + \{r(d) + 2\} / \{r(d) + 4\}}{2(1 - a)} \equiv \phi(a).
\]
The function $\phi$ is easily shown to be increasing on $(0,1)$, so a necessary and sufficient condition on $\beta$ is $\beta > \phi\{a(d)\} = \beta(d)$.

By \eqref{eq:I.estimate}, used with $b = b_n$, and taking into account the fact that $S$ is fixed, one deduces that
\[
\EE\{\hat{f}_{n,b_n}^{\,\mathrm{W}}(S)\} - f(S) - b_n \, g(S) = \oo\{b_n \, \tr(B_d^{\top} S^{\otimes 2} B_d)\} = \oo_S(b_n).
\]
Therefore, because $b_n = n^{-2/\{r(d) + 4\}}$,
\[
n^{1/2} b_n^{r(d)/4} \frac{\EE\{\hat{f}_{n,b_n}^{\,\mathrm{W}}(S)\} - f(S) - b_n \, g(S)}{\sqrt{\psi(S) f(S)}} = \oo_S\{n^{1/2} b_n^{1+r(d)/4}\} = \oo_S(1).
\]
Consequently,
\[
\begin{aligned}
n^{1/2} b_n^{r(d)/4} \frac{\hat{f}_{n,b_n}^{\,\mathrm{W}}(S) - f(S) - b_n \, g(S)}{\sqrt{\psi(S) f(S)}}
&= n^{1/2} b_n^{r(d)/4} \frac{\hat{f}_{n,b_n}^{\,\mathrm{W}}(S) - \EE\{\hat{f}_{n,b_n}^{\,\mathrm{W}}(S)\}}{\sqrt{\psi(S) f(S)}} + \oo_S(1) \\
&\rightsquigarrow \mathcal{N}(0,1),
\end{aligned}
\]
by Slutsky's theorem. This concludes the proof.

\subsection{Proof of Proposition~\ref{prop:MAE}}

Let $S\in \mathcal{S}_{++}^d$ such that $f(S)\in (0,\infty)$ be given. By Lemma~2 of \citet{MR760686}, if $\xi_1,\dots,\xi_n$ is an iid sequence of random variables with finite absolute third moment, namely $\EE (|\xi_1|^3) < \infty$, then
\[
\sup_{a\in \R} \left|\EE\left\{\Big|\overline{\xi}_n - \EE[\overline{\xi}_n] - a \sqrt{\Var(\overline{\xi}_n)}\Big|\right\} - \sqrt{\Var(\overline{\xi}_n)} \, \EE(|Z - a|)\right|
\ll \frac{\EE\{|\xi_1 - \EE(\xi_1)|^3\}}{n \, \Var(\xi_1)},
\]
where $\overline{\xi}_n = (\xi_1 + \cdots + \xi_n)/n$, $a\in \R$, and $Z\sim \mathcal{N}(0,1)$. By applying this result with
\[
\xi_t = K_{\nu(b,d), b S}(\mathfrak{X}_t), \quad a = -a(S), \quad a(S) \equiv \frac{\Bias\{\hat{f}_{n,b}^{\,\mathrm{W}}(S)\}}{\sqrt{\Var\{\hat{f}_{n,b}^{\,\mathrm{W}}(S)\}}},
\]
and using the symmetry $\EE\{|Z+a(S)|\} = \EE\{|Z-a(S)|\}$, it is shown below that
\begin{equation}\label{eq:Devroye.bound.applied}
\left|\EE\left\{\big|\hat{f}_{n,b}^{\,\mathrm{W}}(S) - f(S)\big|\right\} - \sqrt{\Var\{\hat{f}_{n,b}^{\,\mathrm{W}}(S)\}} \, \EE\left\{|Z - a(S)|\right\}\right| \ll \frac{n^{-1} b^{-r(d)/2}}{|S|^{(d + 1)/2}}.
\end{equation}
Indeed, to get the last inequality, note that, as $b\downarrow 0$,
\[
\frac{\EE\big\{|\xi_1 - \EE(\xi_1)|^3\big\}}{\Var(\xi_1)} \leq \frac{4 \, \big[\EE(\xi_1^3) + \{\EE(\xi_1)\}^3\big]}{\EE(\xi_1^2) - \{\EE(\xi_1)\}^2} \leq \frac{8 \, \EE(\xi_1^3)}{\EE(\xi_1^2) - \{\EE(\xi_1)\}^2} \ll \frac{\|K_{\nu(b,d), b S}\|_3^3}{\|K_{\nu(b,d), b S}\|_2^2},
\]
and using Lemma~\ref{lem:L.q.norm.Wishart.kernel},
\[
\begin{aligned}
\frac{\|K_{\nu(b,d), b S}\|_3^3}{\|K_{\nu(b,d), b S}\|_2^2}
&= \left[\frac{b^{-r(d)} |S|^{-(d + 1)}}{2^{r(d) + d} 3^{r(d)/2} \pi^{r(d)}} \, \{1 + \OO(b)\}\right] \times \left[\frac{b^{-r(d)/2} |S|^{-(d + 1)/2}}{2^{r(d) + d/2} \pi^{r(d)/2}} \, \{1 + \OO(b)\}\right]^{-1} \\
&= \frac{b^{-r(d)/2} |S|^{-(d + 1)/2}}{2^{d/2} 3^{r(d)/2} \pi^{r(d)/2}} \, \{1 + \OO(b)\}.
\end{aligned}
\]

Now, let $w_{n,b}(S) = n^{-1/2} b^{-r(d)/4} \sqrt{\psi(S) f(S)}$, and recall the definitions of $\psi$ and $g$ from \eqref{eq:g.psi}. The triangle inequality and \eqref{eq:Devroye.bound.applied} yield
\[
\begin{aligned}
&\left|\EE\left\{\big|\hat{f}_{n,b}^{\,\mathrm{W}}(S) - f(S)\big|\right\} - w_{n,b}(S) \, \EE\left\{\left|Z - \frac{b \, g(S)}{w_{n,b}(S)}\right|\right\}\right| \\
&\quad\ll \frac{n^{-1} b^{-r(d)/2}}{|S|^{(d + 1)/2}} + \left|\sqrt{\Var\{\hat{f}_{n,b}^{\,\mathrm{W}}(S)\}} \, \EE\{|Z - a(S)|\} - w_{n,b}(S) \, \EE\left\{\left|Z - \frac{b \, g(S)}{w_{n,b}(S)}\right|\right\}\right|.
\end{aligned}
\]
It is shown in Lemma~7 of \citet{MR760686} that, for all $u,w\in (0,\infty)$ and all $v,z\in \R$,
\[
\left|u \, \EE\left(\left|Z - \frac{v}{u}\right|\right) - w \, \EE\left(\left|Z - \frac{z}{w}\right|\right)\right| \leq \sqrt{\frac{2}{\pi}} \, |u - w| + |v - z|.
\]
(Technically, \citet{MR760686} assume that $v,z\in (0,\infty)$, but the map $x\mapsto \EE\{|Z - x|\}$ is symmetric, so the above formulation follows immediately.)
Hence,
\[
\begin{aligned}
&\left|\EE\left\{\big|\hat{f}_{n,b}^{\,\mathrm{W}}(S) - f(S)\big|\right\} - w_{n,b}(S) \, \EE\left\{\left|Z - \frac{b \, g(S)}{w_{n,b}(S)}\right|\right\}\right| \\
&\quad\ll \frac{n^{-1} b^{-r(d)/2}}{|S|^{(d + 1)/2}} + \left|\sqrt{\Var\{\hat{f}_{n,b}^{\,\mathrm{W}}(S)\}} - \frac{\sqrt{\psi(S) f(S)}}{n^{1/2} \, b^{r(d)/4}}\right| + \big|\EE\{\hat{f}_{n,b}^{\,\mathrm{W}}(S)\} - f(S) - b \, g(S)\big| \\
&\quad\equiv \frac{n^{-1} b^{-r(d)/2}}{|S|^{(d + 1)/2}} + \left|\sqrt{\mbox{(II)}} - \frac{\sqrt{\psi(S) f(S)}}{n^{1/2} \, b^{r(d)/4}}\right| + |\mbox{(I)} - b \, g(S)|,
\end{aligned}
\]
where (I) and (II) are defined as in \eqref{eq:MSE.proof.begin}. This proves \eqref{eq:MAE} after applying the estimates on (I) and (II) found in \eqref{eq:I.estimate} and \eqref{eq:II.estimate}, respectively. The bound \eqref{eq:MAE.bound} is a direct consequence of \eqref{eq:MAE} and the trivial bound $\EE(|Z - u|) \leq \smash{\sqrt{2/\pi}} + |u|$. This concludes the proof.

\section{Technical lemmas}\label{sec:tech.lemmas}

The first lemma presents a bound on the supremum norm of the Wishart kernel in \eqref{eq:Wishart.kernel}. For a primer on matrix calculus, see, e.g., \citet{Petersen_Pedersen_2012_online}.

\begin{lemma}\label{lem:Wishart.kernel.global.bound}\addcontentsline{toc}{subsection}{Lemma~\ref{lem:Wishart.kernel.global.bound}}
Let $S\in \mathcal{S}_{++}^d$ be given. Then, as $b\downarrow 0$, one has
\[
\|K_{\nu(b,d), b S}\|_{\infty} \leq \frac{b^{-r(d)/2} |S|^{-(d + 1)/2}}{2^{r(d)/2 + d/2} \pi^{r(d)/2}} \, \{1 + \OO(b)\}.
\]
\end{lemma}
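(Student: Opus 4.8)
The plan is to bound the Wishart kernel $K_{\nu(b,d),bS}(X)$ uniformly in $X\in\mathcal{S}_{++}^d$ by first using \eqref{eq:Wishart.KDE.mode}, namely that the map $X\mapsto K_{\nu(b,d),bS}(X)$ attains its maximum at the mode $X=S$ (this is the content of \eqref{eq:Wishart.KDE.mode}, also proved in this same lemma). Thus $\|K_{\nu(b,d),bS}\|_\infty = K_{\nu(b,d),bS}(S)$, and it suffices to evaluate the density at $X=S$ and expand as $b\downarrow 0$. Plugging $X=S$, $\nu=\nu(b,d)=1/b+d+1$, and $\Sigma=bS$ into \eqref{eq:Wishart.kernel} gives, after cancellation of $|S|$-powers,
\[
K_{\nu(b,d),bS}(S)=\frac{|S|^{\nu/2-(d+1)/2}\,\etr(-(bS)^{-1}S/2)}{|2bS|^{\nu/2}\,\Gamma_d(\nu/2)}
=\frac{b^{-r(d)}\,|S|^{-(d+1)/2}\,\etr(-\tfrac{1}{2b}I_d)}{2^{d\nu/2}\,\Gamma_d(\nu/2)},
\]
where I have used $|2bS|^{\nu/2}=2^{d\nu/2}b^{d\nu/2}|S|^{\nu/2}$ and $\nu/2-(d+1)/2-d\nu/2 = -(d-1)\nu/2-(d+1)/2$; one then checks that the net power of $b$ coming from $b^{-d\nu/2}$ together with the $\Gamma_d$ asymptotics below collapses to $b^{-r(d)/2}$, with $\etr(-\tfrac{1}{2b}I_d)=e^{-d/(2b)}$ supplying the decaying exponential that is cancelled by the growth of $\Gamma_d(\nu/2)$.

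The heart of the argument is therefore an asymptotic analysis, as $b\downarrow 0$ (equivalently $\nu\to\infty$), of the ratio
\[
\frac{b^{-d\nu/2}\,e^{-d/(2b)}}{2^{d\nu/2}\,\Gamma_d(\nu/2)}.
\]
Using the product formula $\Gamma_d(\nu/2)=\pi^{d(d-1)/4}\prod_{i=1}^d\Gamma\{\nu/2-(i-1)/2\}$ from \eqref{eq:multivariate.gamma}, I would apply Stirling's formula $\Gamma(z)=\sqrt{2\pi}\,z^{z-1/2}e^{-z}\{1+\OO(1/z)\}$ to each of the $d$ factors with $z=\nu/2-(i-1)/2$. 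Writing $\nu/2 = 1/(2b)+(d+1)/2$, each factor contributes a power of $1/(2b)$, a factor $e^{-1/(2b)}$ (so the product gives $e^{-d/(2b)}$, matching the numerator exactly), and a $\sqrt{2\pi\cdot(1/(2b))}$ correction. Tracking the powers of $b$: the $d$ Stirling factors $z^{z-1/2}$ produce, after the $e^{-d/(2b)}$ and the leading $b^{-d\nu/2}2^{-d\nu/2}$ cancel against $\Gamma_d$'s exponential growth, exactly $b^{-d/2}\cdot b^{-d(d-1)/4}\cdot(\text{constants}) = b^{-r(d)/2}\cdot(\text{const})$, since $d/2+d(d-1)/4 = d(d+1)/4 = r(d)/2$. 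Collecting the numerical constants $\pi^{d(d-1)/4}$, the $d$ copies of $\sqrt{2\pi}$, the $2^{d/2}$-type factors and the shifts $(i-1)/2$, one obtains the claimed constant $2^{-r(d)/2-d/2}\pi^{-r(d)/2}$, with a multiplicative $\{1+\OO(b)\}$ coming from the $\{1+\OO(1/z)\}$ Stirling remainders (note $1/z\asymp b$).

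The main obstacle is purely bookkeeping: carefully matching the exponents of $b$, of $2$, and of $\pi$ across the $d$ Stirling expansions and verifying that the half-integer shifts $(i-1)/2$ only affect lower-order terms (they enter $z^{z-1/2}$ as $(1/(2b))^{-(i-1)/2}\{1+\OO(b)\}$ after factoring, contributing to the $\{1+\OO(b)\}$ and to the constant). A clean way to organize this is to write $\log\Gamma\{\nu/2-(i-1)/2\}$ via $\log\Gamma(z)=(z-\tfrac12)\log z - z + \tfrac12\log(2\pi)+\OO(1/z)$, sum over $i\in[d]$, substitute $z_i = 1/(2b)+\{(d+1)-(i-1)\}/2$, Taylor-expand $\log z_i$ around $1/(2b)$, and exponentiate; the $-z_i$ terms sum to $-d/(2b)-\text{const}$ and cancel the exponential in the numerator, while the $(z_i-\tfrac12)\log z_i$ terms deliver the power $b^{-r(d)/2}$ and the constant. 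Since the inequality asserted is an upper bound with a $\{1+\OO(b)\}$ factor, it suffices to keep the leading term and bound the remainder, so no sharp constant in the error term is needed. This completes the plan.
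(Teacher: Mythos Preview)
Your approach is essentially the paper's: identify the mode at $X=S$, evaluate the density there, and expand $\Gamma_d(\nu/2)$ via Stirling. Two minor points worth fixing: (i) the mode identity \eqref{eq:Wishart.KDE.mode} is itself proved as part of this lemma (by differentiating $K_{\nu(b,d),bS}$ in $X$), so citing it is circular---include that short computation; (ii) your displayed intermediate formula is off: after cancelling the $|S|$-powers the $b$-exponent is $b^{-d\nu/2}$, not $b^{-r(d)}$ (your surrounding text already uses the correct $b^{-d\nu/2}$, so this is just a slip in the display). The only substantive difference from the paper is that it uses the one-sided inequality $\sqrt{2\pi}\,e^{-x}x^{x+1/2}\le\Gamma(x+1)$ rather than the two-sided Stirling asymptotic, which delivers the stated upper bound directly without having to argue that an asymptotic equality with $1+\OO(b)$ error implies the claimed inequality.
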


\begin{proof}[\bf Proof of Lemma~\ref{lem:Wishart.kernel.global.bound}]
Given that $|X| I_d = X \mathrm{adj}(X)$ and $X$ is symmetric, one has
\[
\frac{\partial}{\partial X} \, |X| = \mathrm{adj}(X) = |X| X^{-1};
\]
see, e.g., \citet[Eq.~(49)]{Petersen_Pedersen_2012_online}. Furthermore, for all $A\in \mathcal{S}^d$, one has
\[
\frac{\partial}{\partial X} \, \tr(A X) = A;
\]
see, e.g., \citet[Eq.~(100)]{Petersen_Pedersen_2012_online}. Hence, differentiating the expression for the Wishart density in \eqref{eq:Wishart.kernel} and applying the product rule yields
\[
\begin{aligned}
\frac{\partial}{\partial X} \, K_{\nu(b,d), b S}(X)
&= \frac{\{1/(2 b)\} |X|^{1/(2 b) - 1} |X| X^{-1} \etr\{-(b S)^{-1} X / 2\}}{|2 b S|^{1/(2 b) + (d + 1)/2} \Gamma_d\{1/(2 b) + (d + 1)/2\}} \\
&\quad+ \frac{|X|^{1/(2 b)} \etr\{-(b S)^{-1} X / 2\} \{-(b S)^{-1} / 2\}}{|2 b S|^{1/(2 b) + (d + 1)/2} \Gamma_d\{1/(2 b) + (d + 1)/2\}} \\
&= \left[\{1/(2 b)\} X^{-1} -(b S)^{-1} / 2\right] K_{\nu(b,d), b S}(X).
\end{aligned}
\]
Equating this last expression to the zero matrix, $0_{d\times d}$, and solving for $X$ shows that the mode of the $\mathrm{Wishart}_d\{\nu(b,d),b S\}$ distribution is equal to $S$.

Using the second representation of the multivariate gamma function in \eqref{eq:multivariate.gamma}, it follows that
\[
\|K_{\nu(b,d), b S}\|_{\infty}
= K_{\nu(b,d), b S}(S)
= \frac{\{1/(2 b)\}^{d/(2 b) + r(d)} |S|^{-(d + 1)/2} \exp\{-d / (2 b)\}}{\pi^{d(d-1)/4} \prod_{i=1}^d \Gamma\{1/(2 b) + (d - i)/2 + 1\}}.
\]
It is known that $\sqrt{2\pi} e^{-x} x^{x + 1/2} \leq \Gamma(x + 1)$ for every real $x \in (0,\infty)$; see, e.g., \citet[p.~58]{MR1483074}. One deduces that
\[
\begin{aligned}
\|K_{\nu(b,d), b S}\|_{\infty}
&\leq \frac{\{1/(2 b)\}^{d/(2 b) + r(d)} |S|^{-(d + 1)/2} e^{-d / (2 b)}}{\pi^{d(d-1)/4} (2\pi)^{d/2} \prod_{i=1}^d e^{-1/(2 b) - (d-i)/2} \{1/(2 b) + (d-i)/2\}^{1/(2 b) + (d-i+1)/2}} \\
&= \frac{\{1/(2 b)\}^{r(d)/2} |S|^{-(d + 1)/2}}{\pi^{d(d-1)/4} (2\pi)^{d/2} \prod_{i=1}^d e^{-(d-i)/2} \left[1 + \{(d-i)/2\} / \{1/(2 b)\}\right]^{1/(2 b) + (d-i+1)/2}} \\
&= \frac{\{1/(2 b)\}^{r(d)/2} |S|^{-(d + 1)/2}}{\pi^{d(d-1)/4} (2\pi)^{d/2}} \, \{1 + \OO(b)\},
\end{aligned}
\]
where the last equality is a consequence of the fact that $e^{-x} (1 + x/n)^n = 1 + \OO_x(n^{-1})$ as $n\to \infty$, for all $x\in \R$. The conclusion follows.
\end{proof}

The second lemma provides an upper bound on the absolute difference of two Wishart kernels with different scale matrices, pointwise and under expectation. The main trick is an interpolation combined with the fundamental theorem of calculus.

\begin{lemma}\label{lem:difference.Wishart.kernels}\addcontentsline{toc}{subsection}{Lemma~\ref{lem:difference.Wishart.kernels}}
Let $X,S,S'\in \mathcal{S}_{++}^d$ be given. Then, as $b\downarrow 0$, one has
\[
\begin{aligned}
\big|K_{\nu(b,d),b S'}(X) - K_{\nu(b,d), b S}(X)\big|
&\leq \frac{b^{-r(d)/2} \min(|S|,|S'|)^{-(d + 1)/2}}{2^{r(d)/2 + d/2} \pi^{r(d)/2}} \, \{1 + \OO(b)\} \\
&\quad\times \frac{\sqrt{d} \, [\nu(b,d) + b^{-1} \, \min\{\lambda_d(S),\lambda_d(S')\}^{-1} \lambda_1(X)]}{2 \min\{\lambda_d(S),\lambda_d(S')\}} \|S - S'\|_F.
\end{aligned}
\]
Moreover, if a random matrix $\mathfrak{X}$ has a bounded density $f$ supported on $\mathcal{S}_{++}^d$, then
\[
\EE\big\{|K_{\nu(b,d),b S'}(\mathfrak{X}) - K_{\nu(b,d), b S}(\mathfrak{X})|\big\} \leq \frac{\nu(b,d) \, (1 + 2 d) \sqrt{d} \, \|f\|_{\infty}}{2 \min\{\lambda_d(S),\lambda_d(S')\}} \|S - S'\|_F.
\]
\end{lemma}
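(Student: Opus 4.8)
The plan is to link the two kernels through the linear interpolation $S_\theta = (1-\theta)S + \theta S'$, $\theta\in[0,1]$, which stays in $\mathcal{S}_{++}^d$ because the cone is convex, and then to apply the fundamental theorem of calculus,
\[
K_{\nu(b,d),bS'}(X) - K_{\nu(b,d),bS}(X) = \int_0^1 \frac{\rd}{\rd\theta}\,K_{\nu(b,d),bS_\theta}(X)\,\rd\theta.
\]
Differentiating $\log K_{\nu(b,d),bS}(X)$ with respect to the scale matrix $S$, by the same matrix-calculus manipulations used for $\partial_X$ in the proof of Lemma~\ref{lem:Wishart.kernel.global.bound} but now invoking $\partial_S\log|S| = S^{-1}$ and $\partial_S\tr(S^{-1}X) = -S^{-1}XS^{-1}$, gives $\partial_S\log K_{\nu(b,d),bS}(X) = (2b)^{-1}S^{-1}XS^{-1} - \{\nu(b,d)/2\}S^{-1}$. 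Writing $D_\theta(X) = (2b)^{-1}S_\theta^{-1}XS_\theta^{-1} - \{\nu(b,d)/2\}S_\theta^{-1}$, the chain rule then yields
\[
\frac{\rd}{\rd\theta}\,K_{\nu(b,d),bS_\theta}(X) = \tr\!\big(D_\theta(X)\,(S'-S)\big)\,K_{\nu(b,d),bS_\theta}(X).
\]

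For the pointwise estimate, I would bound $|\tr(D_\theta(X)(S'-S))|\le\|D_\theta(X)\|_F\,\|S-S'\|_F$ by Cauchy--Schwarz for the Frobenius inner product. The triangle inequality, $\|M\|_F\le\sqrt d\,\|M\|_2$, $\|S_\theta^{-1}XS_\theta^{-1}\|_2\le\|S_\theta^{-1}\|_2^2\,\lambda_1(X)$, and $\|S_\theta^{-1}\|_2 = \lambda_d(S_\theta)^{-1}$ then give $\|D_\theta(X)\|_F\le\{\sqrt d/(2\lambda_d(S_\theta))\}\{\nu(b,d)+\lambda_1(X)/(b\lambda_d(S_\theta))\}$; since $\lambda_d$ is concave on $\mathcal{S}_{++}^d$, one has $\lambda_d(S_\theta)\ge\min\{\lambda_d(S),\lambda_d(S')\}$, which removes the $\theta$-dependence. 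For the kernel factor I would use Lemma~\ref{lem:Wishart.kernel.global.bound} to bound $K_{\nu(b,d),bS_\theta}(X)\le\|K_{\nu(b,d),bS_\theta}\|_\infty$, and concavity of $|\cdot|^{1/d}$ (Minkowski's determinant inequality) to get $|S_\theta|\ge\min(|S|,|S'|)$, hence $|S_\theta|^{-(d+1)/2}\le\min(|S|,|S'|)^{-(d+1)/2}$; the $\{1+\OO(b)\}$ factor there depends only on $d$, so it is uniform in $\theta$. Plugging these ($\theta$-free) bounds into the integral over $[0,1]$ yields the first displayed inequality.

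For the expectation bound I would integrate against $f$ before bounding. From $|K_{\nu(b,d),bS'}(\mathfrak{X})-K_{\nu(b,d),bS}(\mathfrak{X})|\le\int_0^1|\tr(D_\theta(\mathfrak{X})(S'-S))|\,K_{\nu(b,d),bS_\theta}(\mathfrak{X})\,\rd\theta$, taking expectations, using $f\le\|f\|_\infty$, and applying Tonelli's theorem,
\[
\EE\big\{|K_{\nu(b,d),bS'}(\mathfrak{X})-K_{\nu(b,d),bS}(\mathfrak{X})|\big\}\le\|f\|_\infty\int_0^1\EE\big\{|\tr(D_\theta(\mathfrak{W}_\theta)(S'-S))|\big\}\,\rd\theta,
\]
where $\mathfrak{W}_\theta\sim\mathrm{Wishart}_d(\nu(b,d),bS_\theta)$ (the point being that $K_{\nu(b,d),bS_\theta}$ integrates to one, which is why no $b^{-r(d)/2}$ factor appears here). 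After Cauchy--Schwarz it remains to control $\EE\|D_\theta(\mathfrak{W}_\theta)\|_F\le(2b)^{-1}\EE\|S_\theta^{-1}\mathfrak{W}_\theta S_\theta^{-1}\|_F+\{\nu(b,d)/2\}\|S_\theta^{-1}\|_F$. The crucial step is to bound the Frobenius norm of the positive semidefinite matrix $S_\theta^{-1}\mathfrak{W}_\theta S_\theta^{-1}$ by its trace, $\tr(S_\theta^{-2}\mathfrak{W}_\theta)$; the Wishart mean formula $\EE(\mathfrak{W}_\theta)=\nu(b,d)\,b\,S_\theta$ then gives $\EE\tr(S_\theta^{-2}\mathfrak{W}_\theta)=\nu(b,d)\,b\,\tr(S_\theta^{-1})\le\nu(b,d)\,b\,d/\lambda_d(S_\theta)$, while $\|S_\theta^{-1}\|_F\le\sqrt d/\lambda_d(S_\theta)$. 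Combining these, using $\lambda_d(S_\theta)\ge\min\{\lambda_d(S),\lambda_d(S')\}$ and $d+\sqrt d\le(1+2d)\sqrt d$, and integrating over $\theta\in[0,1]$ gives the second inequality (with no asymptotic restriction on $b$).

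The main obstacle is the bookkeeping in the expectation bound: one must avoid splitting $\|S_\theta^{-1}\mathfrak{W}_\theta S_\theta^{-1}\|_F$ submultiplicatively (which would leave an uncontrolled $\|S_\theta\|_F$) and instead keep the congruence intact so that its trace, together with the exact Wishart mean, produces a bound depending on $S,S'$ only through $\min\{\lambda_d(S),\lambda_d(S')\}$. A subsidiary, routine, concern is the symmetric-matrix differentiation convention: it must be applied consistently so that the chain rule genuinely delivers $\tr(D_\theta(X)(S'-S))$ with no stray factors of two on the off-diagonal coordinates.
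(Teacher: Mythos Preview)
Your proposal is correct and follows essentially the same approach as the paper: linear interpolation of the scale parameter, the fundamental theorem of calculus, Cauchy--Schwarz for the Frobenius inner product, eigenvalue/determinant concavity along the segment, and Lemma~\ref{lem:Wishart.kernel.global.bound} for the kernel sup-norm.

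The one noteworthy difference lies in the expectation bound. The paper factors $\nu\Sigma_t^{-1}-\Sigma_t^{-1}X\Sigma_t^{-1}=(\nu I_d-\Sigma_t^{-1}X)\Sigma_t^{-1}$, bounds $\|\Sigma_t^{-1}X\|_2\le\sqrt{\tr(\Sigma_t^{-1}X\Sigma_t^{-1}X)}$, reduces by congruence to $\mathfrak{Y}\sim\mathrm{Wishart}_d(\nu,I_d)$, and then invokes Jensen together with the second-moment identity $\EE\{\tr(\mathfrak{Y}^2)\}=\nu(\nu+1)d+\nu d^2$ to obtain $\EE\sqrt{\tr(\mathfrak{Y}^2)}\le 2\nu d$. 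Your route keeps the two terms of $D_\theta$ separate, bounds $\|S_\theta^{-1}\mathfrak{W}_\theta S_\theta^{-1}\|_F\le\tr(S_\theta^{-2}\mathfrak{W}_\theta)$ via positive semidefiniteness, and uses only the first moment $\EE(\mathfrak{W}_\theta)=\nu(b,d)\,b\,S_\theta$ to get $\nu(b,d)\,b\,\tr(S_\theta^{-1})\le \nu(b,d)\,b\,d/\lambda_d(S_\theta)$. This is slightly more elementary (no second-moment formula, no Jensen) and still lands exactly on the stated constant via $d+\sqrt d\le(1+2d)\sqrt d$. Both are valid; your argument buys a cleaner dependence on Wishart moments, while the paper's factorization makes the intermediate trace bound reusable for both claims.
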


\begin{proof}[\bf Proof of Lemma~\ref{lem:difference.Wishart.kernels}]
Let $\nu\in (d + 1,\infty)$ and $X,\Sigma,\Sigma'\in \mathcal{S}_{++}^d$ be given. Define $\Sigma_t = t \, \Sigma + (1 - t) \Sigma'$ for every real $t \in [0,1]$. Using \citet[Eq.~(46)]{Petersen_Pedersen_2012_online}, one has
\[
\begin{aligned}
\frac{ \, \rd}{ \, \rd t} |\Sigma_t|^{-\nu/2}
&= - \frac{\nu}{2} |\Sigma_t|^{-\nu/2 -1} \frac{ \, \rd}{ \, \rd t} \, |\Sigma_t| = - \frac{\nu}{2} |\Sigma_t|^{-\nu/2 -1} |\Sigma_t| \, \tr\left(\Sigma_t^{-1} \, \frac{ \, \rd}{ \, \rd t} \, \Sigma_t\right) \\
&= - \frac{1}{2} |\Sigma_t|^{-\nu/2} \, \tr\left\{\nu \Sigma_t^{-1} (\Sigma - \Sigma')\right\}.
\end{aligned}
\]
Also, using \citet[Eqs.~(63),~(137)]{Petersen_Pedersen_2012_online}, one finds that
\[
\frac{ \, \rd}{ \, \rd t} \, \left\{-\frac{1}{2} \tr\left(\Sigma_t^{-1} X\right)\right\}
= \frac{1}{2} \tr\left(\Sigma_t^{-1} X \Sigma_t^{-1} \frac{ \, \rd}{ \, \rd t} \, \Sigma_t\right)
= \frac{1}{2} \tr\left\{\Sigma_t^{-1} X \Sigma_t^{-1} (\Sigma - \Sigma')\right\}.
\]
Combining the last two equations yields
\[
\begin{aligned}
\frac{ \, \rd}{ \, \rd t} K_{\nu,\Sigma_t}(X)
&= \left[- \frac{1}{2} \, \tr\left\{\nu \Sigma_t^{-1} (\Sigma - \Sigma')\right\} + \frac{1}{2} \tr\left\{\Sigma_t^{-1} X \Sigma_t^{-1} (\Sigma - \Sigma')\right\}\right] K_{\nu,\Sigma_t}(X) \\
&= - \frac{1}{2} \tr\left\{(\nu \Sigma_t^{-1} - \Sigma_t^{-1} X \Sigma_t^{-1}) (\Sigma - \Sigma')\right\} K_{\nu,\Sigma_t}(X).
\end{aligned}
\]
By the fundamental theorem of calculus and the triangle inequality for integrals, one deduces that
\begin{equation}\label{eq:lem:differences.of.Wishart.densities.eq.begin}
\begin{aligned}
\big|K_{\nu,\Sigma'}(X) - K_{\nu,\Sigma}(X)\big|
&\leq \int_0^1 \left|\frac{ \, \rd}{ \, \rd t} K_{\nu,\Sigma_t}(X)\right| \, \rd t \\
&= \frac{1}{2} \int_0^1 \left|\tr\left\{(\nu \Sigma_t^{-1} - \Sigma_t^{-1} X \Sigma_t^{-1}) (\Sigma - \Sigma')\right\}\right| K_{\nu,\Sigma_t}(X) \, \rd t.
\end{aligned}
\end{equation}
Considering that control over the supremum norm of $K_{\nu,\Sigma_t}$ has already been established by Lemma~\ref{lem:Wishart.kernel.global.bound}, it remains to bound the trace factor in the integral. By Cauchy--Schwarz for the Frobenius norm, the inequality $\|\cdot\|_F \leq \sqrt{d} \, \|\cdot\|_2$, and the submultiplicativity and triangle inequality for the spectral norm $\|\cdot\|_2$, one has, for every real $t \in [0,1]$,
\[
\begin{aligned}
\left|\tr\left\{(\nu \Sigma_t^{-1} - \Sigma_t^{-1} X \Sigma_t^{-1}) (\Sigma - \Sigma')\right\}\right|
&\leq \|\nu \Sigma_t^{-1} - \Sigma_t^{-1} X \Sigma_t^{-1}\|_F \|\Sigma - \Sigma'\|_F \\
&\leq \sqrt{d} \, \|\nu I_d - \Sigma_t^{-1} X\|_2 \|\Sigma_t^{-1}\|_2 \|\Sigma - \Sigma'\|_F \\
&\leq \sqrt{d} \, (\nu \|I_d\|_2 + \|\Sigma_t^{-1} X\|_2) \|\Sigma_t^{-1}\|_2 \|\Sigma - \Sigma'\|_F.
\end{aligned}
\]
Using Weyl's inequality for the smallest eigenvalue of a sum of two symmetric matrices, $\lambda_d(A + B) \geq \lambda_d(A) + \lambda_d(B)$, one has
\[
\begin{aligned}
\|\Sigma_t^{-1}\|_2^{-1}
&= \lambda_d(\Sigma_t) \geq \lambda_d(t \, \Sigma) + \lambda_d\{(1 - t) \, \Sigma'\} \\
&= t \, \lambda_d(\Sigma) + (1 - t) \, \lambda_d(\Sigma') \\
&\geq \min\{\lambda_d(\Sigma),\lambda_d(\Sigma')\}.
\end{aligned}
\]
It follows from the last two displays that
\begin{equation}\label{eq:lem:differences.of.Wishart.densities.eq.begin.next}
\left|\tr\left\{(\nu \Sigma_t^{-1} - \Sigma_t^{-1} X \Sigma_t^{-1}) (\Sigma - \Sigma')\right\}\right|
\leq \frac{\sqrt{d} \, (\nu + \|\Sigma_t^{-1} X\|_2)}{\min\{\lambda_d(\Sigma),\lambda_d(\Sigma')\}} \|\Sigma - \Sigma'\|_F.
\end{equation}
Moreover, given that
\[
\|\Sigma_t^{-1} X\|_2 \leq \|\Sigma_t^{-1}\|_2 \|X\|_2 \leq \min\{\lambda_d(\Sigma),\lambda_d(\Sigma')\}^{-1} \lambda_1(X),
\]
one deduces from \eqref{eq:lem:differences.of.Wishart.densities.eq.begin} that
\[
\big|K_{\nu,\Sigma'}(X) - K_{\nu,\Sigma}(X)\big|
\leq \sup_{t\in [0,1]} \|K_{\nu,\Sigma_t}\|_{\infty} \times \frac{\sqrt{d} \, [\nu + \min\{\lambda_d(\Sigma),\lambda_d(\Sigma')\}^{-1} \lambda_1(X)]}{2 \min\{\lambda_d(\Sigma),\lambda_d(\Sigma')\}} \|\Sigma - \Sigma'\|_F.
\]
Now let $\nu = \nu(b,d)$, $\Sigma = b S$, and $\Sigma' = b S'$. Using the bound on the Wishart kernel in Lemma~\ref{lem:Wishart.kernel.global.bound}, with the log-determinant concavity on $\mathcal{S}_{++}^d$, $|t S + (1-t) S'| \geq |S|^t |S'|^{1-t} \geq \min(|S|,|S'|)$, one has
\[
\begin{aligned}
\big|K_{\nu(b,d),b S'}(X) - K_{\nu(b,d), b S}(X)\big|
&\leq \frac{b^{-r(d)/2} \min(|S|,|S'|)^{-(d + 1)/2}}{2^{r(d)/2 + d/2} \pi^{r(d)/2}} \, \{1 + \OO(b)\} \\
&\quad\times \frac{\sqrt{d} \, [\nu(b,d) + b^{-1} \min\{\lambda_d(S),\lambda_d(S')\}^{-1} \lambda_1(X)]}{2 \min\{\lambda_d(S),\lambda_d(S')\}} \|S - S'\|_F,
\end{aligned}
\]
which proves the first claim of the lemma.

To prove the second claim, return to \eqref{eq:lem:differences.of.Wishart.densities.eq.begin}. For every scalar $t \in [0,1]$, write
\[
Y_t(X) = \Sigma_t^{-1/2} X \Sigma_t^{-1/2}.
\]
Then, by cyclic symmetry of the trace and the Cauchy--Schwarz inequality for the Frobenius norm,
\[
\begin{aligned}
\left|\tr\left\{(\nu \Sigma_t^{-1} - \Sigma_t^{-1} X \Sigma_t^{-1}) (\Sigma - \Sigma')\right\}\right|
&= \left|\tr\left[\{\nu I_d - Y_t(X)\} \Sigma_t^{-1/2} (\Sigma - \Sigma') \Sigma_t^{-1/2}\right]\right| \\
&\leq \|\nu I_d - Y_t(X)\|_F \|\Sigma_t^{-1/2} (\Sigma - \Sigma') \Sigma_t^{-1/2}\|_F \\
&\leq \frac{\sqrt{d} \, \{\nu + \sqrt{\tr[Y_t(X)^2]}\}}{\min\{\lambda_d(\Sigma),\lambda_d(\Sigma')\}} \|\Sigma - \Sigma'\|_F .
\end{aligned}
\]
Under the assumption that the density of $\mathfrak{X}$ is bounded, i.e., $\|f\|_{\infty} < \infty$, one gets
\[
\begin{aligned}
\EE\big\{|K_{\nu,\Sigma'}(\mathfrak{X}) - K_{\nu,\Sigma}(\mathfrak{X})|\big\}
&\leq \frac{1}{2} \int_0^1 \int_{\mathcal{S}_{++}^d} \left|\tr\left\{(\nu \Sigma_t^{-1} - \Sigma_t^{-1} X \Sigma_t^{-1}) (\Sigma - \Sigma')\right\}\right| K_{\nu,\Sigma_t}(X) f(X) \, \rd X \, \rd t \\
&\leq \frac{\sqrt{d} \, \|f\|_{\infty} \|\Sigma - \Sigma'\|_F}{2 \min\{\lambda_d(\Sigma),\lambda_d(\Sigma')\}}
\int_0^1 \int_{\mathcal{S}_{++}^d} \left[\nu + \sqrt{\tr\{Y_t(X)^2\}}\right] K_{\nu,\Sigma_t}(X) \, \rd X \, \rd t \\
&= \frac{\sqrt{d} \, \|f\|_{\infty} \|\Sigma - \Sigma'\|_F}{2 \min\{\lambda_d(\Sigma),\lambda_d(\Sigma')\}}
\int_0^1 \left[\nu + \EE\left\{\sqrt{\tr(\mathfrak{Y}^2)}\right\}\right] \, \rd t,
\end{aligned}
\]
where $\mathfrak{Y}\sim \mathrm{Wishart}_d(\nu,I_d)$. By applying Jensen's inequality and the second moment formula for $\EE(\mathfrak{Y}^2)$ in \citet[p.~99]{MR1738933}, one obtains
\[
\EE\left\{\sqrt{\tr(\mathfrak{Y}^2)}\right\}
\leq \sqrt{\EE\left\{\tr(\mathfrak{Y}^2)\right\}}
= \sqrt{\tr\left\{\EE(\mathfrak{Y}^2)\right\}}
= \sqrt{\nu (\nu + 1) \tr(I_d^2) + \nu \, \tr(I_d)^2}
\leq 2 \nu d,
\]
which shows that
\[
\EE\big\{|K_{\nu,\Sigma'}(\mathfrak{X}) - K_{\nu,\Sigma}(\mathfrak{X})|\big\} \leq \frac{\nu \, (1 + 2 d) \sqrt{d} \, \|f\|_{\infty}}{2 \min\{\lambda_d(\Sigma),\lambda_d(\Sigma')\}} \|\Sigma - \Sigma'\|_F.
\]
Letting $\nu = \nu(b,d)$, $\Sigma = b S$, and $\Sigma' = b S'$ proves the second claim of the lemma.
\end{proof}

The third lemma studies the asymptotics of the $L^q$ norm of the Wishart kernel for all $q \in (1,\infty)$.

\begin{lemma}\label{lem:L.q.norm.Wishart.kernel}\addcontentsline{toc}{subsection}{Lemma~\ref{lem:L.q.norm.Wishart.kernel}}
Let $p, q\in (1,\infty)$ and $S\in \mathcal{S}_{++}^d$ be given such that $1/p + 1/q = 1$. Then, as $b\downarrow 0$, one has
\[
\|K_{\nu(b,d), b S}\|_q^2 = \frac{b^{-r(d)/p} |S|^{-(d + 1)/p}}{2^{r(d)/p + d/p} q^{r(d)/q} \pi^{r(d)/p}} \, \{1 + \OO_q(b)\}.
\]
In particular, for $p = q = 2$, the result simplifies to
\begin{equation}\label{eq:L.2.norm.Wishart.kernel}
\|K_{\nu(b,d), b S}\|_2^2 = \frac{b^{-r(d)/2} |S|^{-(d + 1)/2}}{2^{r(d) + d/2} \pi^{r(d)/2}} \, \{1 + \OO(b)\}.
\end{equation}
\end{lemma}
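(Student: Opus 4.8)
The plan is to evaluate $\|K_{\nu(b,d),bS}\|_q^q = \int_{\mathcal{S}_{++}^d} K_{\nu(b,d),bS}(X)^q\,\rd X$ in closed form and then read off its small-$b$ asymptotics from Stirling's formula. The starting point is the elementary identity $\int_{\mathcal{S}_{++}^d} |X|^{\alpha - (d+1)/2}\,\etr(-\Lambda X)\,\rd X = \Gamma_d(\alpha)\,|\Lambda|^{-\alpha}$, valid for every $\alpha > (d-1)/2$ and every $\Lambda\in \mathcal{S}_{++}^d$, which follows from \eqref{eq:multivariate.gamma} after the substitution $X\mapsto \Lambda^{-1/2} X \Lambda^{-1/2}$ (Jacobian $|\Lambda|^{-(d+1)/2}$). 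Raising the Wishart density \eqref{eq:Wishart.kernel} to the power $q$ and applying this identity with $\Lambda = q\Sigma^{-1}/2$ and $\alpha = q\{\nu - (d+1)\}/2 + (d+1)/2$ (which exceeds $(d-1)/2$ since $\nu > d+1$), one obtains the exact formula $\int_{\mathcal{S}_{++}^d} K_{\nu,\Sigma}(X)^q\,\rd X = |2\Sigma/q|^{\nu'/2}\,\Gamma_d(\nu'/2)\,\big/\,\{|2\Sigma|^{q\nu/2}\,\Gamma_d(\nu/2)^q\}$, where $\nu' = q\{\nu - (d+1)\} + (d+1)$.

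Next I would specialize to the KDE parameters $\nu = \nu(b,d) = 1/b + d + 1$ and $\Sigma = bS$. The algebra is driven by two simplifications: $\nu' = q/b + d + 1$ and $\nu' - q\nu = (1-q)(d+1)$, so that $d(\nu'-q\nu)/2 = (1-q)r(d)$. Collecting all explicit powers of $b$, $|S|$, $2$, and $q$, the integral reduces to $(2b)^{(1-q)r(d)}\,q^{-dq/(2b)}\,q^{-r(d)}\,|S|^{(1-q)(d+1)/2}$ times the ratio $\Gamma_d(\nu'/2)/\Gamma_d(\nu/2)^q$, with $\nu'/2 = q/(2b) + (d+1)/2$ and $\nu/2 = 1/(2b) + (d+1)/2$.

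The substantive step is the asymptotics of $\Gamma_d(\nu'/2)/\Gamma_d(\nu/2)^q$ as $b\downarrow 0$. Using the product representation in \eqref{eq:multivariate.gamma}, this ratio equals $\pi^{d(d-1)(1-q)/4}\prod_{i=1}^d \Gamma(qm + c_i)/\Gamma(m+c_i)^q$, with $m = 1/(2b)\to\infty$ and $c_i = (d-i+2)/2$. Stirling's expansion $\log\Gamma(x) = (x - \tfrac12)\log x - x + \tfrac12\log(2\pi) + \OO(1/x)$ gives, for each fixed $c$, $\Gamma(qm+c)/\Gamma(m+c)^q = q^{qm}\,q^{c-1/2}\,m^{(1-q)(c-1/2)}\,(2\pi)^{(1-q)/2}\{1 + \OO_q(1/m)\}$; here the divergent $qm\log m$ contributions cancel between numerator and denominator, a convenient internal check. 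Multiplying over $i$ and using $\sum_{i=1}^d (c_i - \tfrac12) = r(d)/2$ yields $\Gamma_d(\nu'/2)/\Gamma_d(\nu/2)^q = \pi^{d(d-1)(1-q)/4}\,q^{dq/(2b)}\,q^{r(d)/2}\,(2b)^{(q-1)r(d)/2}\,(2\pi)^{d(1-q)/2}\{1+\OO_q(b)\}$. The factor $q^{dq/(2b)}$ cancels the $q^{-dq/(2b)}$ from the previous step, so no divergent term survives; after consolidating the remaining powers of $2$, $\pi$, $b$, $q$, and $|S|$, one is left with $\|K_{\nu(b,d),bS}\|_q^q = b^{(1-q)r(d)/2}\,2^{(1-q)(r(d)+d)/2}\,\pi^{(1-q)r(d)/2}\,q^{-r(d)/2}\,|S|^{(1-q)(d+1)/2}\{1+\OO_q(b)\}$. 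Raising to the power $2/q$ and substituting $(1-q)/q = -1/p$ (which holds since $1/p + 1/q = 1$) delivers the stated formula, and the case $p = q = 2$ is immediate.

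I expect the main obstacle to be purely a matter of bookkeeping: pushing the Stirling expansion to exactly the order needed, confirming that the $q^{qm}$- and $m^{qm}$-type divergent contributions cancel cleanly between $\Gamma_d(\nu'/2)$ and $\Gamma_d(\nu/2)^q$, and checking that the polynomial-correction exponents $(1-q)(c_i - \tfrac12)$ sum to $(1-q)r(d)/2$, with no conceptual difficulty beyond careful accounting of constants. One could instead attempt to recycle the computation behind Lemma~\ref{lem:Wishart.kernel.global.bound}, but since that lemma records only an upper bound, the direct Stirling route is the cleaner way to obtain the required two-sided estimate.
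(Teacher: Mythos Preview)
Your proposal is correct and follows essentially the same route as the paper's proof: the paper writes the key identity as $K_{\nu(b,d),bS}^q(\cdot) = A_{b,q}(S)\,K_{q/b+d+1,\,bS/q}(\cdot)$ and integrates, which is exactly your gamma-integral evaluation $\int K_{\nu,\Sigma}^q = |2\Sigma/q|^{\nu'/2}\Gamma_d(\nu'/2)\big/\{|2\Sigma|^{q\nu/2}\Gamma_d(\nu/2)^q\}$ phrased as ``$K^q$ is a constant times another Wishart density''. The only cosmetic difference is that the paper carries out the Stirling step via the $(1+x/n)^n = e^x\{1+\OO(1/n)\}$ form rather than your log-expansion, but the computation and the resulting constants are identical.
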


\begin{remark}
For an alternative proof of \eqref{eq:L.2.norm.Wishart.kernel} using a local symmetric matrix-variate normal approximation to the Wishart density, see \citet[p.~7]{MR4358612}.
\end{remark}

\begin{proof}[\bf Proof of Lemma~\ref{lem:L.q.norm.Wishart.kernel}]
For all $q\in (1,\infty)$, $b\in (0,1/(d + 1))$, and $S\in \mathcal{S}_{++}^d$, define
\[
\begin{aligned}
A_{b,q}(S)
&= \frac{K_{\nu(b,d), b S}^q(\cdot)}{K_{q/b + d + 1,b S/q}(\cdot)}
= \frac{|2 b \, S/q|^{q/(2 b) + (d + 1)/2} \, \Gamma_d\{q/(2 b) + (d + 1)/2\}}{|2 b \, S|^{q/(2 b) + q(d + 1)/2} \, \Gamma_d^q\{1/(2 b) + (d + 1)/2\}} \\
&= |2 b \, S|^{-(q-1)(d + 1)/2} \, q^{-q d/(2 b) - r(d)} \, \pi^{-(q-1)d(d-1)/4} \prod_{i=1}^d \frac{\Gamma\{q/(2 b) + (d-i)/2 + 1\}}{\Gamma^q\{1/(2 b) + (d-i)/2 + 1\}}.
\end{aligned}
\]
Note that
\[
\|K_{\nu(b,d), b S}\|_q^2 = \left\{\int_{\mathcal{S}_{++}^d} K_{\nu(b,d), b S}^q(X) \, \rd X\right\}^{2/q} = \{A_{b,q}(S)\}^{2/q}.
\]
Now, by Stirling's approximation, one has, as $x\to \infty$,
\[
\frac{\sqrt{2\pi} e^{-x} x^{x + 1/2}}{\Gamma(x + 1)} = 1 + \OO(x^{-1}).
\]
Therefore,
\begin{align*}
A_{b,q}(S)
&= |2 b \, S|^{-(q-1)(d + 1)/2} \, q^{(q/2 - 1) r(d)} \, \pi^{-(q-1)d(d-1)/4} (2\pi)^{-(q-1)d/2} \\
&\quad\times \prod_{i=1}^d \frac{e^{(q-1)(d-i)/2} \{q/(2 b) + (d-i)/2\}^{q/(2 b) + (d-i+1)/2}}{q^{q/(2 b) + q (d-i+1)/2} \{1/(2 b) + (d-i)/2\}^{q/(2 b) + q (d-i+1)/2}} \, \{1 + \OO_q(b)\} \\
&= |2 b \, S|^{-(q-1)(d + 1)/2} \, q^{(q/2 - 1) r(d)} \, \pi^{-(q-1)d(d-1)/4} (2\pi)^{-(q-1)d/2} \{q/(2 b)\}^{-(q-1) r(d)/2} \\
&\quad\times \prod_{i=1}^d e^{(q-1)(d-i)/2} \left\{1 - \frac{(q-1) (d-i)/2}{q/(2 b) + q (d-i)/2}\right\}^{q/(2 b) + (d-i+1)/2} \, \{1 + \OO_q(b)\} \\
&= |2 b \, S|^{-(q-1)(d + 1)/2} \, q^{(q/2 - 1) r(d)} \, \pi^{-(q-1)d(d-1)/4} (2\pi)^{-(q-1)d/2} \{q/(2 b)\}^{-(q-1) r(d)/2} \{1 + \OO_q(b)\} \\
&= \frac{b^{-(q-1) r(d)/2} |S|^{-(q-1)(d + 1)/2}}{2^{(q-1) r(d)/2 + (q-1) d/2} q^{r(d)/2} \pi^{(q-1)r(d)/2}} \, \{1 + \OO_q(b)\},
\end{align*}
where the last equality follows from the fact that $e^x (1 - x/n)^n = 1 + \OO_x(n^{-1})$ for all $x\in \R$. Taking this last expression to the power $2/q$, and noticing that $(q-1)/q = 1/p$, yields the conclusion.
\end{proof}

The fourth lemma presents exponential bounds on the probabilities that the largest eigenvalue of a Wishart random matrix is excessively large and the smallest eigenvalue is excessively small.

\begin{lemma}\label{lem:eigenvalue.exponential.bounds}\addcontentsline{toc}{subsection}{Lemma~\ref{lem:eigenvalue.exponential.bounds}}
Let $\nu\in (d-1,\infty)$ and $\Sigma\in \mathcal{S}_{++}^d$ be given, and assume that $\mathfrak{X}\sim \mathrm{Wishart}_d(\nu,\Sigma)$. Then, for every scalar $\delta \in (0, 1/\{6 \nu d \, \lambda_1(\Sigma)\})$, one has
\[
\PP\{\lambda_1(\mathfrak{X}) \geq \delta^{-1}\} \leq \exp\{-\delta^{-1} \lambda_1^{-1}(\Sigma)/4\}.
\]
Moreover, for all $\nu \geq d + 1$ and all $\delta\in (0,\infty)$, one has
\[
\PP\{\lambda_d(\mathfrak{X}) \leq \delta\} \leq 1 - \exp\{- \delta \, \tr(\Sigma^{-1})/2\}.
\]
\end{lemma}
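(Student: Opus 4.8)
The plan is to prove the two bounds separately; the first is a Chernoff estimate on the trace, and the second is an exact change of variables in the Wishart integral.

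For the largest eigenvalue, note that $\mathfrak{X}\succ 0$ almost surely, so $\lambda_1(\mathfrak{X})\le\tr(\mathfrak{X})$ and it suffices to bound $\PP\{\tr(\mathfrak{X})\ge\delta^{-1}\}$. First I would recall the Wishart moment generating function $\EE[\etr(t\,\mathfrak{X})]=|I_d-2t\Sigma|^{-\nu/2}$, valid for $0<t<1/\{2\lambda_1(\Sigma)\}$, and apply the Chernoff--Markov inequality to $e^{t\tr(\mathfrak{X})}$, giving $\PP\{\tr(\mathfrak{X})\ge\delta^{-1}\}\le e^{-t\delta^{-1}}|I_d-2t\Sigma|^{-\nu/2}$. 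Bounding the determinant factor eigenvalue by eigenvalue, $|I_d-2t\Sigma|^{-\nu/2}=\prod_{i=1}^d\{1-2t\lambda_i(\Sigma)\}^{-\nu/2}\le\{1-2t\lambda_1(\Sigma)\}^{-\nu d/2}$, and then taking $t=3/\{8\lambda_1(\Sigma)\}$ (admissible since $3/8<1/2$) yields $\PP\{\lambda_1(\mathfrak{X})\ge\delta^{-1}\}\le 2^{\nu d}\exp\{-3\delta^{-1}\lambda_1^{-1}(\Sigma)/8\}$. The hypothesis $\delta<1/\{6\nu d\,\lambda_1(\Sigma)\}$ gives $\delta^{-1}>6\nu d\,\lambda_1(\Sigma)>8\nu d\,\lambda_1(\Sigma)\log 2$, hence $2^{\nu d}\le\exp\{\delta^{-1}\lambda_1^{-1}(\Sigma)/8\}$, and the two exponentials combine to give the claimed $\exp\{-\delta^{-1}\lambda_1^{-1}(\Sigma)/4\}$.

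For the smallest eigenvalue, the plan is to write $\PP\{\lambda_d(\mathfrak{X})\le\delta\}=1-\PP\{\mathfrak{X}\succ\delta I_d\}$ and to lower-bound $\PP\{\mathfrak{X}\succ\delta I_d\}=\int_{\{X\succ\delta I_d\}}K_{\nu,\Sigma}(X)\,\rd X$ with $K_{\nu,\Sigma}$ as in \eqref{eq:Wishart.kernel}. Substituting $X=\delta I_d+Z$ with $Z\in\mathcal{S}_{++}^d$ (a translation, so $\rd X=\rd Z$ and the domain becomes all of $\mathcal{S}_{++}^d$) and factoring $\etr\{-\Sigma^{-1}(\delta I_d+Z)/2\}=\etr(-\delta\Sigma^{-1}/2)\,\etr(-\Sigma^{-1}Z/2)$, one is left with $\etr(-\delta\Sigma^{-1}/2)$ times $\int_{\mathcal{S}_{++}^d}|\delta I_d+Z|^{\nu/2-(d+1)/2}\etr(-\Sigma^{-1}Z/2)\,|2\Sigma|^{-\nu/2}\Gamma_d(\nu/2)^{-1}\,\rd Z$. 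Since $\delta I_d+Z\succeq Z\succ 0$, monotonicity of the determinant on the cone gives $|\delta I_d+Z|\ge|Z|$, and because $\nu\ge d+1$ the exponent $\nu/2-(d+1)/2$ is nonnegative, so $|\delta I_d+Z|^{\nu/2-(d+1)/2}\ge|Z|^{\nu/2-(d+1)/2}$; the remaining integral is then the total mass of the $\mathrm{Wishart}_d(\nu,\Sigma)$ density, namely $1$. This gives $\PP\{\mathfrak{X}\succ\delta I_d\}\ge\etr(-\delta\Sigma^{-1}/2)=\exp\{-\delta\,\tr(\Sigma^{-1})/2\}$, which rearranges to the stated bound.

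The only delicate point is the first bound, where the numerical constants must be tracked carefully so that the $2^{\nu d}$ prefactor created by the crude eigenvalue bound on the determinant is absorbed by the slack in the exponent; this is exactly what the (slightly generous) threshold $\delta<1/\{6\nu d\,\lambda_1(\Sigma)\}$ delivers, since $8\log 2<6$. The second bound is an identity together with one monotonicity inequality, and presents no real obstacle, beyond observing that $X\mapsto X-\delta I_d$ is a measure-preserving bijection from $\{X\succ\delta I_d\}$ onto $\mathcal{S}_{++}^d$ and that the condition $\nu\ge d+1$ is precisely what prevents the determinant power from reversing the inequality.
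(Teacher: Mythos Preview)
Your proof is correct and follows essentially the same approach as the paper. The second bound is identical (translate $X\mapsto X-\delta I_d$ in the Wishart integral, factor out $\etr(-\delta\Sigma^{-1}/2)$, and use determinant monotonicity with $\nu\ge d+1$); the first bound is the same Chernoff estimate on $\tr(\mathfrak{X})\ge\lambda_1(\mathfrak{X})$, only packaged differently---the paper first reduces to a $\chi^2_{\nu d}$ tail via $\tr(\mathfrak{X})\le\lambda_1(\Sigma)\,\tr(\mathfrak{Y})$ with $\mathfrak{Y}\sim\mathrm{Wishart}_d(\nu,I_d)$ and invokes the inequality $x-1-\log x\ge x/2$ for $x>6$, whereas you apply the Wishart MGF directly with the fixed choice $t=3/\{8\lambda_1(\Sigma)\}$ and absorb the resulting $2^{\nu d}$ prefactor via $8\log 2<6$, which is the same numerical fact in disguise.
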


\begin{proof}[\bf Proof of Lemma~\ref{lem:eigenvalue.exponential.bounds}]
Let $\mathfrak{Y}\sim \mathrm{Wishart}_d(\nu,I_d)$ and note that $\mathfrak{X} \stackrel{\mathrm{law}}{=} \Sigma^{1/2} \mathfrak{Y} \Sigma^{1/2}$. Let $\bb{v}_1,\ldots,\bb{v}_d$ be an orthonormal eigenbasis of $\Sigma$, so that $\Sigma = \sum_{i=1}^d \lambda_i(\Sigma) \bb{v}_i \bb{v}_i^{\top}$. Then
\[
\tr(\mathfrak{X}) = \tr(\Sigma \, \mathfrak{Y}) = \sum_{i=1}^d \lambda_i(\Sigma) \bb{v}_i^{\top} \mathfrak{Y} \bb{v}_i \stackrel{\mathrm{law}}{=} \sum_{i=1}^d \lambda_i(\Sigma) K_i,
\]
where $K_1, \dots, K_d\stackrel{\mathrm{iid}}{\sim} \chi_{\nu}^2$; see \citet[Theorem~3.2.5]{MR652932} for a justification of the last equality. Given that $\sum_{i=1}^d K_i\sim \chi_{\nu d}^2$, one deduces that
\[
\PP\{\lambda_1(\mathfrak{X}) \geq \delta^{-1}\} \leq \PP\{\tr(\mathfrak{X}) \geq \delta^{-1}\} \leq \PP\{\chi_{\nu d}^2 \geq \delta^{-1} \lambda_1^{-1}(\Sigma)\}.
\]
By applying a Chernoff bound, Theorem~1 of \citet{MR4228660} showed that, for any $a > p > 0$,
\[
\PP(\chi_p^2 > a) \leq \exp\left[-\frac{p}{2} \left\{\frac{a}{p} - 1 - \log\left(\frac{a}{p}\right)\right\}\right].
\]
Since $x - 1 - \log(x) \geq x/2$ for all $x\in (6,\infty)$, it follows that, for $\delta^{-1} \lambda_1^{-1}(\Sigma) > 6 \nu d$,
\[
\PP\{\chi_{\nu d}^2 \geq \delta^{-1} \lambda_1^{-1}(\Sigma)\} \leq \exp\{-\delta^{-1} \lambda_1^{-1}(\Sigma)/4\},
\]
which proves the first claim of the lemma.

To prove the second claim of the lemma, note that
\[
\PP\{\lambda_d(\mathfrak{X}) \leq \delta\} = 1 - \PP\{\lambda_d(\mathfrak{X}) > \delta\} = 1 - \PP(\mathfrak{X} - \delta I_d\in \mathcal{S}_{++}^d).
\]
Denote by $C_d(\nu,\Sigma)$ the normalizing constant of the $\mathrm{Wishart}_d(\nu,\Sigma)$ distribution. Then
\[
\begin{aligned}
\PP(\mathfrak{X} - \delta I_d\in \mathcal{S}_{++}^d)
&= C_d(\nu,\Sigma) \int_{X - \delta I_d\in \mathcal{S}_{++}^d} |X|^{\nu/2 - (d + 1)/2} \exp\{-\tr(\Sigma^{-1}X)/2\} \, \rd X \\
&= C_d(\nu,\Sigma) \int_{\mathcal{S}_{++}^d} |Y + \delta I_d|^{\nu/2 - (d + 1)/2} \exp[-\tr\{\Sigma^{-1} (Y + \delta I_d)\}/2] \, \rd Y \\
&= \exp\{-\delta \, \tr(\Sigma^{-1})/2\} \, C_d(\nu,\Sigma) \int_{\mathcal{S}_{++}^d} |Y + \delta I_d|^{\nu/2 - (d + 1)/2} \exp\{- \tr(\Sigma^{-1}Y)/2\} \, \rd Y.
\end{aligned}
\]
Given that $\nu\geq d + 1$ and $\delta\in (0,\infty)$, one has $|Y + \delta I_d|^{\nu/2 - (d + 1)/2} \geq |Y|^{\nu/2 - (d + 1)/2}$, and thus
\[
\begin{aligned}
\PP(\mathfrak{X} - \delta I_d\in \mathcal{S}_{++}^d)
&\geq \exp\{-\delta \, \tr(\Sigma^{-1})/2\} \, C_d(\nu,\Sigma) \int_{\mathcal{S}_{++}^d} |Y|^{\nu/2 - (d + 1)/2} \exp\{- \tr(\Sigma^{-1}Y)/2\} \, \rd Y \\
&= \exp\{-\delta \, \tr(\Sigma^{-1})/2\}.
\end{aligned}
\]
Therefore
\[
\PP\{\lambda_d(\mathfrak{X}) < \delta\} \leq 1 - \exp\{- \delta \, \tr(\Sigma^{-1})/2\}.
\]
This concludes the proof.
\end{proof}

The fifth lemma contains explicit expressions for the integrals of the squared Wishart KDE over $\mathcal{S}_{++}^d$ and the squared Gaussian kernel estimator over $\mathcal{S}^d$ for matrix log-observations. These expressions are utilized in Appendix~\ref{app:bandwidth.selection} to efficiently select bandwidths for the simulations.

\begin{lemma}\label{lem:integrals.squared.estimators}\addcontentsline{toc}{subsection}{Lemma~\ref{lem:integrals.squared.estimators}}
Let $d\in \N$, $b\in (0,1)$, $S\in \mathcal{S}_{++}^d$, and $Y\in \mathcal{S}^d$ be given. Assume that the observations $\mathfrak{X}_1,\ldots,\mathfrak{X}_n$ have a common marginal density $f$. Recall that
\[
\hat{f}_{n,b}^{\,\mathrm{W}}(S) = \frac{1}{n} \sum_{t=1}^n K_{1/b + d + 1, b S}(\mathfrak{X}_t), \quad
\hat{g}_{n,b}(Y) = \frac{1}{n} \sum_{t=1}^n G_{\log(\mathfrak{X}_t),b}(Y),
\]
where the definitions of $K$ and $G$ are stated at the beginning of Appendix~\ref{app:simulations}. Then
\[
\begin{aligned}
I_1
&\equiv \int_{\mathcal{S}_{++}^d} \{\hat{f}_{n,b}^{\,\mathrm{W}}(S)\}^2 \, \rd S
= \frac{1}{n^2} \sum_{s,t=1}^n \frac{\Gamma_d\{1/b + (d + 1)/2\}}{(2 b)^{r(d)} [\Gamma_d\{1/(2 b) + (d + 1)/2\}]^2} \frac{|\mathfrak{X}_s \mathfrak{X}_t|^{1/(2 b)}}{\left|\mathfrak{X}_s + \mathfrak{X}_t\right|^{1/b + (d + 1)/2}}, \\
I_2
&\equiv \int_{\mathcal{S}^d} \{\hat{g}_{n,b}(Y)\}^2 \, \rd Y
= \frac{1}{n^2} \sum_{s,t=1}^n \frac{\etr\big([-\{\log(\mathfrak{X}_s)\}^2 - \{\log(\mathfrak{X}_t)\}^2 + \{\log(\mathfrak{X}_s) + \log(\mathfrak{X}_t)\}^2/2] / (2 b)\big)}{(2\pi b)^{r(d)/2} \, 2^{d/2}}.
\end{aligned}
\]
\end{lemma}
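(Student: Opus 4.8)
The plan is to exploit the bilinear structure of the squared estimators: expanding the square turns each integral into a double sum of pairwise integrals, and each such integral is a single Wishart normalization integral (for $I_1$) or Gaussian normalization integral (for $I_2$) that can be read off from \eqref{eq:Wishart.kernel} and from the fact that $G_{M,b}$ integrates to one.

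\emph{The integral $I_1$.} First I would write $\{\hat f^{\,\mathrm W}_{n,b}(S)\}^2 = n^{-2}\sum_{s,t=1}^n K_{\nu(b,d),bS}(\mathfrak X_s)\,K_{\nu(b,d),bS}(\mathfrak X_t)$ and exchange the finite sum with the integral, so it suffices to evaluate $J(A,B) := \int_{\mathcal{S}_{++}^d} K_{\nu(b,d),bS}(A)\,K_{\nu(b,d),bS}(B)\,\rd S$ for fixed $A,B\in\mathcal{S}_{++}^d$. Substituting the explicit density \eqref{eq:Wishart.kernel}, and using $\nu(b,d)/2 = 1/(2b)+(d+1)/2$ together with $(bS)^{-1}=b^{-1}S^{-1}$ and $|2bS| = (2b)^d|S|$, the integrand factors into an $S$-free prefactor (involving $|A|^{1/(2b)}|B|^{1/(2b)}$, a power of $2b$, and $[\Gamma_d\{1/(2b)+(d+1)/2\}]^2$) times the $S$-dependent part $|S|^{-2\{1/(2b)+(d+1)/2\}}\,\etr\{-b^{-1}S^{-1}(A+B)/2\}$. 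The change of variables $T=S^{-1}$ on $\mathcal{S}_{++}^d$, with Jacobian $\rd S = |T|^{-(d+1)}\,\rd T$, turns the remaining integral into $\int_{\mathcal{S}_{++}^d} |T|^{1/b}\,\etr\{-T(A+B)/(2b)\}\,\rd T$; since $1/b = \{1/b+(d+1)/2\}-(d+1)/2$, this is exactly the normalizing constant of a $\mathrm{Wishart}_d\{1/b+d+1,\,b(A+B)^{-1}\}$ density, hence equals $\Gamma_d\{1/b+(d+1)/2\}\,(2b)^{d\{1/b+(d+1)/2\}}\,|A+B|^{-\{1/b+(d+1)/2\}}$ by \eqref{eq:Wishart.kernel}. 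Collecting the powers of $2b$ (they combine to $(2b)^{-r(d)}$ via $r(d)=d(d+1)/2$), setting $A=\mathfrak X_s$, $B=\mathfrak X_t$ and $|\mathfrak X_s|^{1/(2b)}|\mathfrak X_t|^{1/(2b)} = |\mathfrak X_s\mathfrak X_t|^{1/(2b)}$, and summing over $s,t$ yields the stated formula for $I_1$.

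\emph{The integral $I_2$.} The same expansion reduces the problem to computing $\int_{\mathcal{S}^d} G_{M,b}(Y)\,G_{M',b}(Y)\,\rd Y$ with $M=\log(\mathfrak X_s)$, $M'=\log(\mathfrak X_t)$. Multiplying the two exponentials adds the traces, and the purely algebraic identity $\tr\{(Y-M)^2\}+\tr\{(Y-M')^2\} = 2\,\tr\{(Y-C)^2\}+\tfrac12\,\tr\{(M-M')^2\}$ with $C=(M+M')/2$ (which uses only $\tr(UV)=\tr(VU)$) lets me pull the $Y$-free factor $\etr\{-(M-M')^2/(4b)\}$ out and recognize the remaining integrand $\etr\{-(Y-C)^2/b\}$ as an unnormalized $G_{C,b/2}$; its integral is therefore the normalizing constant $(\pi b)^{r(d)/2}2^{-d(d-1)/4}$ (equivalently, one integrates over the $d$ diagonal and $d(d-1)/2$ off-diagonal entries, whose Gaussian variances are $b/2$). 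Dividing by the square of the normalizing constant of $G_{M,b}$ and simplifying the powers of $2$, $\pi$, and $b$ (again with $r(d)=d(d+1)/2$) gives the single term $\etr([-M^2-M'^2+(M+M')^2/2]/(2b))\big/\{(2\pi b)^{r(d)/2}2^{d/2}\}$, after noting that $-\tfrac12\,\tr\{(M-M')^2\} = \tr\{-M^2-M'^2+(M+M')^2/2\}$; summing over $s,t$ produces $I_2$.

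\emph{Main obstacle.} There is no conceptual difficulty here; the work is entirely bookkeeping. For $I_1$ the delicate part is correctly combining the Jacobian factor $|T|^{-(d+1)}$ of the inversion map with the several exponents of $2b$ so that they collapse exactly to $(2b)^{-r(d)}$ and the multivariate gamma factors match; for $I_2$ it is the completion of the square in the matrix exponent and then the correct accounting of the $2^{d/2}$ and $\pi^{r(d)/2}$ factors arising from the anisotropic Gaussian (variance $b$ along the diagonal, $b/2$ off the diagonal). Both are routine but error-prone, so I would sanity-check each identity in the cases $d=1$ and $d=2$.
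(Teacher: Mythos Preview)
Your proposal is correct and follows essentially the same route as the paper: expand the square, interchange sum and integral, and evaluate each pairwise integral via a Wishart/inverse-Wishart normalization for $I_1$ and a completion-of-the-square Gaussian normalization for $I_2$. The only cosmetic difference is that you perform the substitution $T=S^{-1}$ explicitly to reach a Wishart integral, whereas the paper quotes the inverse-Wishart normalizing constant directly; the bookkeeping and final expressions coincide.
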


\begin{proof}[\bf Proof of Lemma~\ref{lem:integrals.squared.estimators}]
For the Wishart KDE, expand the square of the estimator, and integrate term by term using the normalization constant of the inverse Wishart distribution. One obtains
\begin{align*}
I_1
&= \frac{1}{n^2} \sum_{s,t=1}^n \int_{\mathcal{S}_{++}^d} K_{1/b + d + 1, b S}(\mathfrak{X}_s) K_{1/b + d + 1, b S}(\mathfrak{X}_t) \, \rd S \\
&= \frac{1}{n^2} \sum_{s,t=1}^n \!\frac{|\mathfrak{X}_s|^{1/(2 b)} (2 b)^{-d/(2 b) - r(d)}}{\Gamma_d\{1/(2 b) + (d + 1)/2\}} \frac{|\mathfrak{X}_t|^{1/(2 b)} (2 b)^{-d/(2 b) - r(d)}}{\Gamma_d\{1/(2 b) + (d + 1)/2\}} \int_{\mathcal{S}_{++}^d} \hspace{-3mm}\frac{\etr\{-S^{-1} (\mathfrak{X}_s + \mathfrak{X}_t) / (2 b)\}}{|S|^{1/b + (d + 1)}} \, \rd S \\
&= \frac{1}{n^2} \sum_{s,t=1}^n \!\frac{|\mathfrak{X}_s|^{1/(2 b)} (2 b)^{-d/(2 b) - r(d)}}{\Gamma_d\{1/(2 b) + (d + 1)/2\}} \frac{|\mathfrak{X}_t|^{1/(2 b)} (2 b)^{-d/(2 b) - r(d)}}{\Gamma_d\{1/(2 b) + (d + 1)/2\}} \times \frac{\Gamma_d\{1/b + (d + 1)/2\}}{\left|(\mathfrak{X}_s + \mathfrak{X}_t)/ (2 b)\right|^{1/b + (d + 1)/2}} \\
&= \frac{1}{n^2} \sum_{s,t=1}^n \frac{\Gamma_d\{1/b + (d + 1)/2\}}{(2 b)^{r(d)} [\Gamma_d\{1/(2 b) + (d + 1)/2\}]^2} \frac{|\mathfrak{X}_s \mathfrak{X}_t|^{1/(2 b)}}{\left|\mathfrak{X}_s + \mathfrak{X}_t\right|^{1/b + (d + 1)/2}}.
\end{align*}

Similarly, for the Gaussian kernel estimator, expand the square of the estimator, and integrate term by term using the normalization constant of the symmetric matrix-variate normal distribution. One obtains
\[
\begin{aligned}
I_2
&= \frac{1}{n^2} \sum_{s,t=1}^n \int_{\mathcal{S}^d} G_{\log(\mathfrak{X}_s),b}(Y) G_{\log(\mathfrak{X}_t),b}(Y) \, \rd Y \\
&= \frac{1}{n^2} \sum_{s,t=1}^n \left\{\frac{1}{(2\pi b)^{r(d)/2} \, 2^{-d(d - 1)/4}}\right\}^2 \int_{\mathcal{S}^d} \etr\left[-\frac{\{Y - \log(\mathfrak{X}_s)\}^2}{2 b}\right] \etr\left[-\frac{\{Y - \log(\mathfrak{X}_t)\}^2}{2 b}\right] \, \rd Y \\
&= \frac{1}{n^2} \sum_{s,t=1}^n \etr\left[-\frac{\{\log(\mathfrak{X}_s)\}^2}{2 b}\right] \etr\left[-\frac{\{\log(\mathfrak{X}_t)\}^2}{2 b}\right] \frac{\etr[\{\log(\mathfrak{X}_s) + \log(\mathfrak{X}_t)\}^2 / (4 b)]}{(2\pi b)^{r(d)/2} \, 2^{-d(d - 1)/4} \, 2^{r(d)/2}} \\
&\hspace{20mm}\times \int_{\mathcal{S}^d} \frac{\etr\big(-[Y - \{\log(\mathfrak{X}_s) + \log(\mathfrak{X}_t)\}/2]^2 / b\big)}{(2\pi b/2)^{r(d)/2} \, 2^{-d(d - 1)/4}} \, \rd Y \\
&= \frac{1}{n^2} \sum_{s,t=1}^n \frac{\etr\big([-\{\log(\mathfrak{X}_s)\}^2 - \{\log(\mathfrak{X}_t)\}^2 + \{\log(\mathfrak{X}_s) + \log(\mathfrak{X}_t)\}^2/2] / (2 b)\big)}{(2\pi b)^{r(d)/2} \, 2^{d/2}} \times 1.
\end{aligned}
\]
This concludes the proof.
\end{proof}

\section{Future directions}\label{sec:future.directions}

Several open questions remain for advancing the Wishart KDE in both theory and practice; the most salient are outlined below.

First, extending the two-sample test of \citet{arXiv:2603.13935v1} to change-point analysis for strongly mixing $\mathcal{S}_{++}^d$-valued time series would be a natural continuation of the present work. Given observations $\mathfrak{X}_1,\ldots,\mathfrak{X}_n$, one could compute Wishart KDEs on the two subsamples separated by a candidate split point and use the resulting $L^2$ discrepancy as a scan statistic over admissible split points. The main theoretical challenges would be to derive the null distribution of the supremum-type statistic under temporal dependence, to establish consistency and localization rates under fixed or local alternatives, and to handle data-driven bandwidths and multiple change points. Such a procedure would provide a fully nonparametric tool for detecting distributional changes in matrix-valued volatility dynamics.

Second, extending the present treatment of temporal dependence to spatial dependence (e.g., random fields or lattice data) would be a natural next step, with applications to settings where $\mathcal{S}_{++}^d$-valued observations vary across space, such as diffusion tensor tractography \citep{doi:10.1016/C2018-0-02520-7} and kriging prediction in geostatistics \citep{MR3459942}.

Third, the application of the Wishart KDE to DTI segmentation and the use of the corresponding Wishart-kernel Nadaraya--Watson regression estimator for DTI restoration \citep{Wang2004phd} are promising avenues. Because many DTI workflows rely on log-Euclidean or affine-invariant frameworks, incorporating Wishart kernel methods could offer a computationally attractive and potentially less biased alternative. Analyzing the theoretical properties of the regression estimator is also a compelling direction in its own right.

Fourth, the uniform strong consistency result in Theorem~\ref{thm:unif.conv} currently exhibits a logarithmic gap relative to the optimal rate. A more careful analysis of the proof, combined with a refined bandwidth choice, could sharpen the convergence rate.

Fifth, motivated by the minimax analyses of \citet{MR4544604} for the Dirichlet KDE on the simplex and \citet{arXiv:2602.10103v2} for the gamma KDE on $(0,\infty)$, it would be natural, albeit technically very challenging, to develop a corresponding minimax theory for the Wishart KDE on $\mathcal{S}_{++}^d$ under $L^p$ loss and $\beta$-H\"older smoothness of the target. Establishing matching upper and lower risk bounds would identify the ranges of smoothness $\beta$ and loss index $p$ for which the Wishart KDE achieves the minimax rate, and clarify whether there are regimes where it is suboptimal.

Sixth, existing deconvolution results \citep{inria-00632882v1,MR2838725,MR3012414} treat the iid case under the affine-invariant metric. Extending these techniques to dependent positive definite matrix data, particularly time series and spatial fields, remains an open problem. For time dependence, \citet{MR2430252} offers a potential starting point in the univariate setting by quantifying how short- and long-range dependence influence rates and bandwidth selection.

Seventh, high-dimensional covariance matrices arise routinely in finance. Developing efficient implementations (e.g., parallelization, low-rank approximations) and robust procedures (e.g., for incomplete or noisy observations) would broaden the practical scope of the Wishart KDE.

By tackling these research questions, the Wishart KDE can be refined and extended to a wider range of statistical applications. Although the innovative solution described in this paper already provides a practical, boundary-aware estimator on $\mathcal{S}_{++}^d$ under the Euclidean metric, with favorable finite-sample performance relative to the log-Gaussian KDE and the naive Gaussian KDE, further methodological and theoretical developments are needed to fully exploit its potential in high-dimensional, spatially or temporally correlated settings, including DTI and financial econometrics.

\begin{appendices}

\section{Simulations}\label{app:simulations}

This appendix describes the results of a Monte Carlo study which was conducted to compare the performance of the Wishart KDE, defined in \eqref{eq:Wishart.KDE}, against the Gaussian and log-Gaussian KDEs, respectively defined, for every $S\in \mathcal{S}_{++}^d$, by
\[
\hat{f}_{n,b}^{\,\mathrm{G}}(S) = \frac{1}{n} \sum_{t=1}^n G_{\mathfrak{X}_t,b}(S), \quad
\hat{f}_{n,b}^{\,\mathrm{LG}}(S) = \hat{g}_{n,b}\{\log(S)\} \mathcal{J}(S),
\]
where, for every $b \in (0,\infty)$ and $M, Y\in \mathcal{S}^d$,
\[
\hat{g}_{n,b}(Y) = \frac{1}{n} \sum_{t=1}^n G_{\log(\mathfrak{X}_t),b}(Y), \quad
G_{M\!,b}(Y) = \frac{\etr\{-(Y - M)^2 / (2 b)\}}{(2\pi b)^{d(d + 1)/4} \, 2^{-d(d - 1)/4}},
\]
and where the Jacobian determinant of the matrix-log transformation $S\mapsto \log(S)$ is equal to
\[
\mathcal{J}(S) = \frac{1}{|S|} \prod_{1 \leq i < j \leq d} \left[\left|\frac{\log\{\lambda_i(S)\} - \log\{\lambda_j(S)\}}{\lambda_i(S) - \lambda_j(S)}\right| \ind_{\{\lambda_i(S) \neq \lambda_j(S)\}} + \frac{1}{\lambda_i(S)} \ind_{\{\lambda_i(S) = \lambda_j(S)\}}\right];
\]
see \citet[Proposition~1]{MR3580425}. The estimator $\smash{\hat{f}_{n,b}^{\,\mathrm{LG}}}$ was originally proposed in the more general setting of non-compact Riemannian symmetric spaces; see Section~4.1 of \citet{arXiv:2009.01983v3}.

\begin{remark}
The log-Gaussian estimator $\smash{\hat{f}_{n,b}^{\,\mathrm{LG}}}$ is the most natural competitor to $\smash{\hat{f}_{n,b}^{\,\mathrm{W}}}$ as it avoids spill-over issues near the boundary, unlike the Gaussian estimator $\smash{\hat{f}_{n,b}^{\,\mathrm{G}}}$. In fact, it is the only other kernel estimator with this property that could be identified in the literature. Given the structural complexity of the boundary of $\mathcal{S}_{++}^d$, implementing boundary kernels in this setting is notably challenging and left open for future research.
\end{remark}

\begin{remark}
Bandwidth estimation and kernel smoothing for higher-dimensional matrices, say $d \le 5$, pose no particular challenge and are straightforward. However, Monte Carlo experiments studying the performance of the estimators are computationally prohibitive, as they require high-dimensional numerical integration over eigenvalues and rotations. Even when $d = 3$ (which involves six integration variables), several integration routines were found to yield results that differ by several orders of magnitude depending on the adaptation scheme and fineness of the grid, owing to the curse of dimensionality. Monte Carlo estimates of the integrals are also too noisy to be of much practical use unless the number of Monte Carlo draws is astronomical. As such, the simulations presented herein are restricted to the case $d = 2$.
\end{remark}

\subsection{Models}\label{app:models}

For an integer $\kappa \geq d$, consider the $\mathrm{WAR}(1)$ process introduced by \citet{MR1132135}, and studied by \citet{MR2535514}, which consists of the sequence of random positive definite matrices
\[
\mathfrak{X}_t = \sum_{k=1}^{\kappa} \mathfrak{Z}_{k,t} \mathfrak{Z}_{k,t}^{\top}, \quad t\in \N,
\]
where the processes $(\mathfrak{Z}_{1,t})_{t\in \N},\ldots,(\mathfrak{Z}_{\kappa,t})_{t\in \N}$ are independent stationary Gaussian vector autoregressive processes with the same autoregressive coefficient matrix $M\in \smash{\R^{d\times d}}$ and innovation covariance matrix $\Sigma\in \smash{\mathcal{S}_{++}^d}$, with the spectral radius of $M$ smaller than $1$, i.e., for every $k\in \{1,\ldots,\kappa\}$,
\vspace{-2.5mm}
\[
\mathfrak{Z}_{k,t} = M \mathfrak{Z}_{k,t-1} + \e_{k,t}, \quad \e_{k,1},\e_{k,2},\ldots\stackrel{\mathrm{iid}}{\sim} \mathcal{N}_d(\bb{0}_d, \Sigma).
\]
For the process $(\mathfrak{X}_t)_{t=1}^{\infty}$, the stationary marginal density is known to be the Wishart density $K_{\kappa,\Sigma_{\infty}}$ defined in \eqref{eq:Wishart.kernel}, where $\Sigma_{\infty}$ is the solution of the discrete time Lyapunov equation given by $\Sigma_{\infty} = M \Sigma_{\infty} M^{\top} + \Sigma$.

For the simulation, consider the autoregressive coefficient matrices
\[
M_1 =
\begin{pmatrix}
0.9 & 0 \\
1 & 0
\end{pmatrix}, \quad
M_2 =
\begin{pmatrix}
0.3 & -0.3 \\
-0.3 & 0.3
\end{pmatrix}, \quad
M_3 =
\begin{pmatrix}
0.5 & 0 \\
0 & 0.5
\end{pmatrix},
\]
and the innovation covariance matrices
\[
\Sigma_1 =
\begin{pmatrix}
1 & 0.9 \\
0.9 & 1
\end{pmatrix}, \quad
\Sigma_2 =
\begin{pmatrix}
1 & 0.95 \\
0.95 & 1
\end{pmatrix}, \quad
\Sigma_3 =
\begin{pmatrix}
1 & 0.99 \\
0.99 & 1
\end{pmatrix},
\]
corresponding to stronger and stronger degrees of correlation between the coordinates of the innovations. In particular, note that the matrices $\Sigma_1$, $\Sigma_2$, and $\Sigma_3$ are closer and closer to being singular, hence closer and closer to the boundary of $\mathcal{S}_{++}^d$. The matrices $M_1$, $M_2$, and $M_3$ calibrate the persistence of the process; they are chosen as in \citet[p.~170]{MR2535514}. The target densities in the simulation study are the stationary marginal densities for the nine combinations of parameters $(M_i, \Sigma_j)_{i,j=1}^3$, with $\kappa = 4$ degrees of freedom.

\subsection{Bandwidth selection}\label{app:bandwidth.selection}

The first bandwidth selection procedure investigated here is $h$-lag (or $h$-block) least-squares cross-validation (LSCV). The general idea is to choose the bandwidth $b_n\in (0,\infty)$ that minimizes the non-constant part of (an estimate of) the mean integrated squared error. Specifically, for the Wishart kernel, one has
\[
b_n(\mathrm{W}_{\mathrm{lscv}}) = \mathop{\argmin}_{b\in (0,\infty)} \left[\int_{\mathcal{S}_{++}^d} \{\hat{f}_{n,b}^{\,\mathrm{W}}(S)\}^2 \, \rd S - 2 \, \EE\{\hat{f}_{n,b}^{\,\mathrm{W}}(\mathfrak{X}_1)\}\right].
\]
For the log-Gaussian kernel, an explicit expression for the first integral is only possible by working with the estimate of the mean integrated squared error of the corresponding Gaussian KDE $\hat{g}_{n,b}$ on $\mathcal{S}^d$ using the matrix-log observations, viz.
\[
b_n(\mathrm{LG}_{\mathrm{lscv}}) = \mathop{\argmin}_{b\in (0,\infty)} \left(\int_{\mathcal{S}^d} \{\hat{g}_{n,b}(Y)\}^2 \, \rd Y - 2 \, \EE[\hat{g}_{n,b}\{\log(\mathfrak{X}_1)\}]\right).
\]
For the Gaussian kernel, the LSCV bandwidth is obtained from the analogous Euclidean criterion on $\mathcal{S}^d$, using the original observations $\mathfrak{X}_t$ instead of the matrix-log observations $\log(\mathfrak{X}_t)$. In all cases, the expectation term in the $\mathop{\argmin}$ can be approximated by Monte Carlo using the random sample at hand. The integrals of the squared estimators admit explicit expressions (see Lemma~\ref{lem:integrals.squared.estimators}), yielding the following finite-sample approximations:
\[
\begin{aligned}
b_n(\mathrm{W}_{\mathrm{lscv}})
&\approx \mathop{\argmin}_{b\in (0,\infty)} \Bigg[\frac{1}{n^2} \sum_{s,t=1}^n \frac{\Gamma_d\{1/b + (d + 1)/2\}}{(2 b)^{r(d)} [\Gamma_d\{1/(2 b) + (d + 1)/2\}]^2} \frac{|\mathfrak{X}_s \mathfrak{X}_t|^{1/(2 b)}}{\left|\mathfrak{X}_s + \mathfrak{X}_t\right|^{1/b + (d + 1)/2}}\Bigg. \\[-2mm]
&\quad\hspace{84mm}\Bigg.- \frac{2}{n} \sum_{s=1}^n \frac{1}{n_{h,s}} \sum_{\substack{t=1 \\ |s-t|\geq h}}^n K_{\nu(b,d),b \mathfrak{X}_s}(\mathfrak{X}_t)\Bigg], \\
b_n(\mathrm{LG}_{\mathrm{lscv}})
&\approx \mathop{\argmin}_{b\in (0,\infty)} \Bigg[\frac{1}{n^2} \sum_{s,t=1}^n \frac{\etr\big([-\{\log(\mathfrak{X}_s)\}^2 - \{\log(\mathfrak{X}_t)\}^2 + \{\log(\mathfrak{X}_s) + \log(\mathfrak{X}_t)\}^2/2] / (2 b)\big)}{(2\pi b)^{r(d)/2} \, 2^{d/2}}\Bigg. \\[-3mm]
&\quad\hspace{76.5mm}\Bigg.- \frac{2}{n} \sum_{s=1}^n \frac{1}{n_{h,s}} \sum_{\substack{t=1 \\ |s-t|\geq h}}^n G_{\log(\mathfrak{X}_t),b}\{\log(\mathfrak{X}_s)\}\Bigg], \\
b_n(\mathrm{G}_{\mathrm{lscv}})
&\approx \mathop{\argmin}_{b\in (0,\infty)} \Bigg[\frac{1}{n^2} \sum_{s,t=1}^n \frac{\etr\big[\{-\mathfrak{X}_s^2 - \mathfrak{X}_t^2 + (\mathfrak{X}_s + \mathfrak{X}_t)^2/2\} / (2 b)\big]}{(2\pi b)^{r(d)/2} \, 2^{d/2}} - \frac{2}{n} \sum_{s=1}^n \frac{1}{n_{h,s}} \!\! \sum_{\substack{t=1 \\ |s-t|\geq h}}^n \!\! G_{\mathfrak{X}_t,b}(\mathfrak{X}_s)\Bigg],
\end{aligned}
\]
where, for every $s\in [n]$, $n_{h,s}$ is the number of indices $t\in [n]$ such that $|s - t|\geq h$. In line with \citet{MR2756423}, the results are reported with the lag parameter $h = \lceil n^{1/4} \rceil$.

The second bandwidth selection procedure included in the study maximizes the leave-one-out likelihood cross-validation (LCV) criterion for the kernels $\mathrm{K}\in\{\mathrm{W},\mathrm{LG},\mathrm{G}\}$:
\[
b_n(\mathrm{K}_{\mathrm{lcv}}) = \mathop{\argmax}_{b\in (0,\infty)} \frac{1}{n}\sum_{t=1}^{n} \log\big\{\hat{f}^{\,\mathrm{K}}_{-t,b}(\mathfrak{X}_t)\big\},
\]
where the subscript ``$-t$'' indicates that the $t$th observation is removed to compute the estimate.

\begin{remark}
To evaluate the $h$-lag LSCV and LCV criteria while avoiding numerical overflow, sums of positive terms are computed on a log-scale, as well as all normalizing constants. More specifically, for a generic sum $x_1 + \cdots + x_B$ with $x_k \in (0,\infty)$, let $\ell_{\max} = \max \{ \log(x_1), \dots, \log(x_B) \} $, then one computes
\[
\log\left( \sum_{k=1}^B x_k\right) = \ell_{\max} + \log \left[\sum_{k=1}^B \exp\{\log(x_k) - \ell_{\max}\}\right];
\]
see, e.g., Lemma~5.1 of \citet{MR3089615}.
\end{remark}

\subsection{Results}\label{app:results}

Let $R = 1024$ be the number of replications. For each sample size $n\in \{100,200,300\}$, each method $\mathrm{m}\in \{\mathrm{W}_{\mathrm{lscv}},\mathrm{W}_{\mathrm{lcv}},\mathrm{LG}_{\mathrm{lscv}},\mathrm{LG}_{\mathrm{lcv}},\mathrm{G}_{\mathrm{lscv}},\mathrm{G}_{\mathrm{lcv}}\}$, and each $\mathrm{WAR}(1)$ model $\smash{(M_i,\Sigma_j)_{i,j=1}^3}$ with $\kappa = 4$ degrees of freedom, the root integrated squared error (RISE) is calculated for each random sample $\smash{\mathfrak{X}_1^{(r)},\ldots,\mathfrak{X}_n^{(r)}}$ with $r \in \{1, \dots, R\}$, namely
\[
\mathrm{RISE}_{b_n\hspace{-0.3mm}(\mathrm{m})}^{i,j,\mathrm{m}}(r) = \sqrt{\int_{\mathcal{S}_{++}^d} \big\{\hat{f}_{n,b_n\hspace{-0.3mm}(\mathrm{m})}^{\,\mathrm{m}}(S) - f_{i,j}(S)\big\}^2 \, \rd S},
\]
where $f_{i,j}$ is the marginal density corresponding to the $(M_i,\Sigma_j)$-model.

Let $O(d)$ denote the orthogonal group of order $d$, $ \, \rd H$ the Haar probability measure on $O(d)$, and $\Lambda = \mathrm{diag}(\bb{\lambda})$. Weyl's integration formula (see, e.g., Theorem~3.2.17 of \citet{MR652932}) shows that the above integral is equal to
\[
\frac{\pi^{d^2/2}}{\Gamma_d(d/2) d!} \int_{[0,\infty)^d} \int_{O(d)} \big|\hat{f}_{n,b_n\hspace{-0.3mm}(\mathrm{m})}^{\,\mathrm{m}}(H \Lambda H^{\top}) - f_{i,j}(H \Lambda H^{\top})\big|^2 \, \prod_{1 \leq k < \ell \leq d} |\lambda_k - \lambda_{\ell}| \, \rd H \, \rd \bb{\lambda},
\]
where the components of the vector $\bb{\lambda} = (\lambda_1,\ldots,\lambda_d)$ are assumed to be unordered for simplicity. In the case $d=2$, the expression simplifies to
\[
\frac{1}{4} \int_{0}^{\infty} \int_{0}^{\infty} \int_0^{2\pi} \big|\hat{f}_{n,b_n\hspace{-0.3mm}(\mathrm{m})}^{\,\mathrm{m}}\{R(\theta) \Lambda R(\theta)^{\top}\} - f_{i,j}\{R(\theta) \Lambda R(\theta)^{\top}\}\big|^2 \, |\lambda_1 - \lambda_2| \, \rd \theta \, \rd \lambda_1 \, \rd \lambda_2,
\]
where $\Lambda = \mathrm{diag}(\lambda_1,\lambda_2)$, and $R(\theta)$ denotes the rotation matrix
\[
R(\theta) =
\begin{pmatrix}
\cos\theta & -\sin\theta \\
\sin\theta & \cos\theta
\end{pmatrix}, \quad \theta\in [0,2\pi).
\]

The simulation results, presented in Figure~\ref{fig:RISE.boxplots} and Table~\ref{tab:summary}, are briefly summarized below.
\begin{enumerate}
\item The method $\mathrm{W}_{\mathrm{lscv}}$ gives the most favorable overall performance among the six methods. For every pair $(M_i, \Sigma_j)$ and every $n\in \{100,200,300\}$, it has either the smallest median $\mathrm{RISE}$ or the second smallest one. In the latter cases, the smallest median is attained by $\mathrm{LG}_{\mathrm{lcv}}$, but the difference is very modest, whereas the interquartile range (IQR) of $\mathrm{LG}_{\mathrm{lcv}}$ is substantially larger than that of $\mathrm{W}_{\mathrm{lscv}}$. Thus $\mathrm{W}_{\mathrm{lscv}}$ remains preferable in those cases as well, and offers the best compromise between accuracy and stability. The method $\mathrm{LG}_{\mathrm{lcv}}$ is the closest competitor in terms of median $\mathrm{RISE}$, whereas $\mathrm{LG}_{\mathrm{lscv}}$ is markedly more variable and more prone to large $\mathrm{RISE}$ values.

\item The classical Gaussian KDEs are clearly the least competitive class of methods in terms of median $\mathrm{RISE}$ when compared with the boundary-aware $\mathrm{W}$ and $\mathrm{LG}$ competitors, especially as $\Sigma_j$ gets closer to being singular. This deterioration becomes more pronounced from $\Sigma_1$ to $\Sigma_3$, which is consistent with the spill-over and boundary-bias issues of classical symmetric kernels on constrained supports. Indeed, the stationary marginal density is $K_{\kappa,\Sigma_{\infty}}$, where $\Sigma_{\infty}$ is determined by $M_i$ and $\Sigma_j$ through the Lyapunov equation, so increasingly ill-conditioned innovation covariance matrices lead to target densities that are increasingly concentrated near the boundary of $\mathcal{S}_{++}^d$. Although the Gaussian KDEs sometimes have small IQRs, especially for $\Sigma_3$, these small IQRs are accompanied by large median errors and therefore reflect consistently inaccurate estimates rather than good performance.

\item For each fixed $M_i$ and sample size, the median $\mathrm{RISE}$ values increase as the correlation in $\Sigma_j$ increases from $\Sigma_1$ to $\Sigma_3$. The same pattern is visible for the IQRs of the boundary-aware methods, with $\mathrm{W}_{\mathrm{lscv}}$ consistently being the most stable among them. This is again consistent with the fact that the induced stationary scale matrix $\Sigma_{\infty}$ becomes increasingly ill-conditioned, making the target density more concentrated near the boundary of $\mathcal{S}_{++}^d$. The choice of $M_i$ also affects the absolute level of the errors through $\Sigma_{\infty}$, but the relative ranking of the methods is broadly stable across the three autoregressive matrices.

\item As the sample size $n$ increases, the medians and IQRs generally decrease for the boundary-aware methods, as expected. This decrease is clearest for $\mathrm{W}_{\mathrm{lscv}}$, $\mathrm{W}_{\mathrm{lcv}}$, and $\mathrm{LG}_{\mathrm{lcv}}$. For the Gaussian KDEs, the effect of increasing $n$ is less systematic near the boundary: the medians can remain nearly unchanged or even increase slightly, and the IQRs can be misleadingly small when the estimator remains affected by boundary bias. This is particularly visible for the case $(M_1,\Sigma_3)$, where the Gaussian KDEs have large median $\mathrm{RISE}$ values despite very small IQRs.

\item As shown in Table~\ref{tab:mean.time}, mean bandwidth computation times (based on $100$ replications) are relatively stable across the nine models for a fixed sample size, with $n$ being the main driver of computing cost. The fastest method depends on the model and sample size, while $\mathrm{W}_{\mathrm{lscv}}$ is usually among the slowest methods. Even so, after averaging over the nine models, the slowest mean bandwidth computation time is less than about three times the fastest one for each sample size, and the largest mean times are close to one second when $n = 300$, so bandwidth selection is not a computational bottleneck in practice.
\end{enumerate}

\begin{table}[H]
\caption{Mean bandwidth computation time (in milliseconds) for each method, model and sample size. Bold indicates the smallest mean among the six competitors for the corresponding model and sample size.}
\label{tab:mean.time}
\footnotesize
\renewcommand{\arraystretch}{0.8}
\setlength{\tabcolsep}{1.5pt}
\centering
\begin{tabular}{c|cccccc|cccccc|cccccc}
\toprule
 & \multicolumn{6}{c|}{$n = 100$} & \multicolumn{6}{c|}{$n = 200$} & \multicolumn{6}{c}{$n = 300$} \\
\midrule
Model & $W_{\text{\hspace{-0.4mm}lscv}}$ & $W_{\text{\hspace{-0.4mm}lcv}}$ & $\mathrm{LG}_{\text{lscv}}$ & $\mathrm{LG}_{\text{lcv}}$ & $\mathrm{G}_{\text{lscv}}$ & $\mathrm{G}_{\text{lcv}}$ & $W_{\text{\hspace{-0.4mm}lscv}}$ & $W_{\text{\hspace{-0.4mm}lcv}}$ & $\mathrm{LG}_{\text{lscv}}$ & $\mathrm{LG}_{\text{lcv}}$ & $\mathrm{G}_{\text{lscv}}$ & $\mathrm{G}_{\text{lcv}}$ & $W_{\text{\hspace{-0.4mm}lscv}}$ & $W_{\text{\hspace{-0.4mm}lcv}}$ & $\mathrm{LG}_{\text{lscv}}$ & $\mathrm{LG}_{\text{lcv}}$ & $\mathrm{G}_{\text{lscv}}$ & $\mathrm{G}_{\text{lcv}}$ \\
\midrule
$(M_{1}, \Sigma_{1})$ & 110 & \textbf{46} & 73 & 48 & 102 & 120 & 444 & 189 & 295 & \textbf{187} & 190 & 504 & 1029 & 435 & 680 & \textbf{434} & 440 & 1064 \\
$(M_{1}, \Sigma_{2})$ & 109 & \textbf{46} & 74 & 48 & 49 & 107 & 455 & 190 & 296 & \textbf{189} & 191 & 332 & 996 & 423 & 673 & \textbf{422} & 440 & 728 \\
$(M_{1}, \Sigma_{3})$ & 108 & \textbf{46} & 79 & 51 & 49 & 65 & 443 & 185 & 314 & 198 & \textbf{183} & 202 & 1011 & 431 & 714 & 445 & \textbf{402} & 433 \\
$(M_{2}, \Sigma_{1})$ & 114 & 47 & 75 & 48 & 57 & \textbf{31} & 447 & 188 & 288 & 187 & 233 & \textbf{133} & 1017 & 429 & 671 & 426 & 547 & \textbf{311} \\
$(M_{2}, \Sigma_{2})$ & 108 & 45 & 73 & 47 & 58 & \textbf{32} & 442 & 187 & 292 & 188 & 242 & \textbf{136} & 1026 & 427 & 671 & 423 & 568 & \textbf{319} \\
$(M_{2}, \Sigma_{3})$ & 108 & 45 & 73 & 48 & 68 & \textbf{38} & 441 & 186 & 299 & 187 & 281 & \textbf{164} & 1008 & 427 & 670 & 436 & 647 & \textbf{377} \\
$(M_{3}, \Sigma_{1})$ & 109 & 45 & 71 & 47 & 55 & \textbf{32} & 445 & 187 & 292 & 190 & 233 & \textbf{132} & 1024 & 431 & 673 & 420 & 533 & \textbf{303} \\
$(M_{3}, \Sigma_{2})$ & 109 & 45 & 73 & 48 & 58 & \textbf{33} & 459 & 191 & 302 & 195 & 249 & \textbf{142} & 1025 & 433 & 675 & 430 & 564 & \textbf{315} \\
$(M_{3}, \Sigma_{3})$ & 107 & 46 & 74 & 48 & 66 & \textbf{36} & 464 & 196 & 309 & 196 & 282 & \textbf{162} & 1021 & 429 & 680 & 431 & 620 & \textbf{368} \\
\bottomrule
\end{tabular}
\end{table}

\begin{figure}[H]
\centering
\includegraphics[trim={0cm 1cm 0cm 0.4cm}, clip, width=0.96\textwidth]{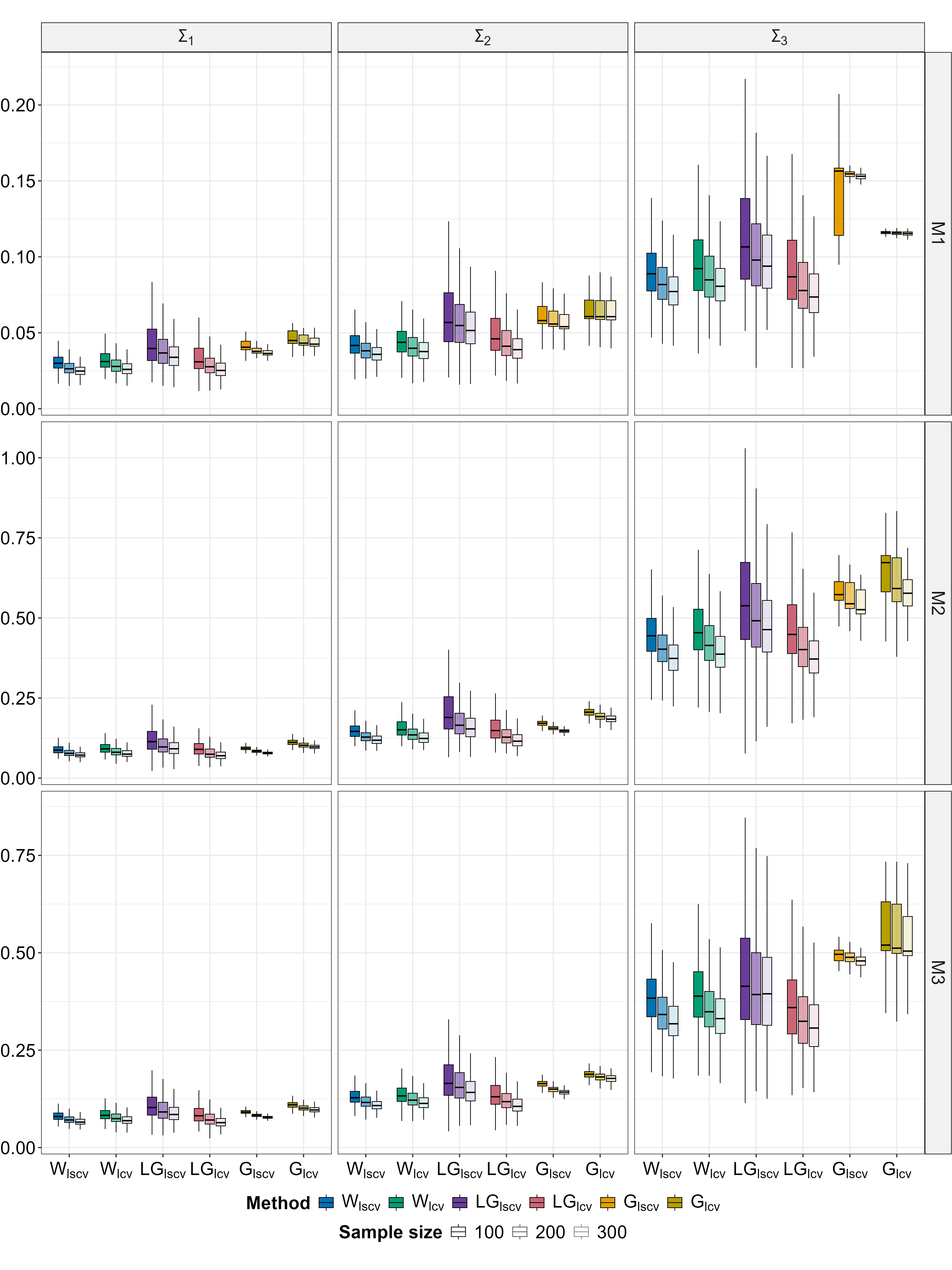}
\caption{Boxplots of the $1024$ $\smash{\mathrm{RISE}}$ values for different configurations of autoregressive coefficient matrix ($M_i, \, i\in \{1,2,3\}$), covariance matrix ($\Sigma_j, \, j\in \{1,2,3\}$), method $\mathrm{m}\in \{\mathrm{W}_{\mathrm{lscv}},\mathrm{W}_{\mathrm{lcv}},\mathrm{LG}_{\mathrm{lscv}},\mathrm{LG}_{\mathrm{lcv}},\mathrm{G}_{\mathrm{lscv}},\mathrm{G}_{\mathrm{lcv}}\}$, and sample sizes $n\in \{100,200,300\}$.}
\label{fig:RISE.boxplots}
\end{figure}

\begin{table}[!ht]
\caption{Comparison of the Wishart (W), log-Gaussian (LG), and Gaussian (G) methods in terms of the median and interquartile range (IQR) of $1024$ RISE values ($\times 10^5$) for different configurations of autoregressive coefficient matrices ($M_i, \, i\in \{1,2,3\}$), innovation covariance matrices ($\Sigma_j, \, j\in \{1,2,3\}$), and sample sizes $n \in \{100, 200, 300\}$. Dark gray cells indicate the smallest value among the six competitors for the corresponding model, sample size, and statistic; light gray cells indicate the second smallest.}
\label{tab:summary}

\footnotesize\renewcommand{\arraystretch}{0.67}
\setlength{\tabcolsep}{8pt}
\centering
{
\begin{tabular}{cc|ccc|ccc}
\toprule
\multicolumn{2}{c|}{} & $n = 100$ & $n = 200$ & $n = 300$ & $n = 100$ & $n = 200$ & $n = 300$ \\
\cmidrule(r){1-2}\cmidrule(lr){3-5}\cmidrule(l){6-8}
Model & Method & \multicolumn{3}{c|}{Median RISE} & \multicolumn{3}{c}{IQR RISE} \\
\midrule
$(M_1,\Sigma_1)$ & $\mathrm{W}_{\mathrm{lscv}}$ & \cellcolor{black!30}2995 & \cellcolor{black!30}2630 & \cellcolor{black!30}2479 & \cellcolor{black!10}719 & \cellcolor{black!10}621 & \cellcolor{black!10}482 \\ 
 & $\mathrm{W}_{\mathrm{lcv}}$ & 3104 & 2785 & 2590 & 889 & 745 & 650 \\ 
 & $\mathrm{LG}_{\mathrm{lscv}}$ & 3964 & 3670 & 3386 & 2069 & 1586 & 1230 \\ 
 & $\mathrm{LG}_{\mathrm{lcv}}$ & \cellcolor{black!10}3087 & \cellcolor{black!10}2763 & \cellcolor{black!10}2520 & 1343 & 960 & 822 \\ 
 & $\mathrm{G}_{\mathrm{lscv}}$ & 4050 & 3773 & 3641 & \cellcolor{black!30}559 & \cellcolor{black!30}325 & \cellcolor{black!30}285 \\ 
 & $\mathrm{G}_{\mathrm{lcv}}$ & 4490 & 4318 & 4253 & 831 & 675 & 534 \\ 
\midrule
$(M_1,\Sigma_2)$ & $\mathrm{W}_{\mathrm{lscv}}$ & \cellcolor{black!30}4167 & \cellcolor{black!30}3813 & \cellcolor{black!30}3580 & \cellcolor{black!10}1154 & \cellcolor{black!30}966 & \cellcolor{black!30}828 \\ 
 & $\mathrm{W}_{\mathrm{lcv}}$ & \cellcolor{black!10}4371 & \cellcolor{black!10}3979 & \cellcolor{black!10}3764 & 1355 & 1225 & 1058 \\ 
 & $\mathrm{LG}_{\mathrm{lscv}}$ & 5681 & 5477 & 5150 & 3222 & 2500 & 2086 \\ 
 & $\mathrm{LG}_{\mathrm{lcv}}$ & 4602 & 4116 & 3890 & 2105 & 1652 & 1282 \\ 
 & $\mathrm{G}_{\mathrm{lscv}}$ & 5805 & 5578 & 5402 & \cellcolor{black!30}1133 & \cellcolor{black!10}1009 & \cellcolor{black!10}940 \\ 
 & $\mathrm{G}_{\mathrm{lcv}}$ & 6069 & 6061 & 6064 & 1225 & 1248 & 1262 \\ 
\midrule
$(M_1,\Sigma_3)$ & $\mathrm{W}_{\mathrm{lscv}}$ & \cellcolor{black!10}8886 & \cellcolor{black!10}8176 & \cellcolor{black!10}7715 & \cellcolor{black!10}2487 & 2104 & 1847 \\ 
 & $\mathrm{W}_{\mathrm{lcv}}$ & 9225 & 8484 & 8062 & 3335 & 2696 & 2136 \\ 
 & $\mathrm{LG}_{\mathrm{lscv}}$ & 10655 & 9790 & 9386 & 5310 & 4083 & 3502 \\ 
 & $\mathrm{LG}_{\mathrm{lcv}}$ & \cellcolor{black!30}8685 & \cellcolor{black!30}7782 & \cellcolor{black!30}7356 & 3892 & 3024 & 2549 \\ 
 & $\mathrm{G}_{\mathrm{lscv}}$ & 15648 & 15463 & 15305 & 4428 & \cellcolor{black!10}313 & \cellcolor{black!10}295 \\ 
 & $\mathrm{G}_{\mathrm{lcv}}$ & 11608 & 11569 & 11540 & \cellcolor{black!30}140 & \cellcolor{black!30}181 & \cellcolor{black!30}208 \\ 
\midrule
$(M_2,\Sigma_1)$ & $\mathrm{W}_{\mathrm{lscv}}$ & \cellcolor{black!30}8745 & \cellcolor{black!10}7746 & \cellcolor{black!10}7180 & 1885 & 1626 & 1281 \\ 
 & $\mathrm{W}_{\mathrm{lcv}}$ & 9137 & 8078 & 7497 & 2383 & 2027 & 1789 \\ 
 & $\mathrm{LG}_{\mathrm{lscv}}$ & 11379 & 9743 & 9157 & 5563 & 4080 & 3401 \\ 
 & $\mathrm{LG}_{\mathrm{lcv}}$ & \cellcolor{black!10}8992 & \cellcolor{black!30}7473 & \cellcolor{black!30}6984 & 3266 & 2572 & 2029 \\ 
 & $\mathrm{G}_{\mathrm{lscv}}$ & 9313 & 8422 & 7845 & \cellcolor{black!30}842 & \cellcolor{black!30}699 & \cellcolor{black!30}562 \\ 
 & $\mathrm{G}_{\mathrm{lcv}}$ & 11166 & 10267 & 9750 & \cellcolor{black!10}1327 & \cellcolor{black!10}1223 & \cellcolor{black!10}1011 \\ 
\midrule
$(M_2,\Sigma_2)$ & $\mathrm{W}_{\mathrm{lscv}}$ & \cellcolor{black!30}14596 & \cellcolor{black!10}12817 & \cellcolor{black!10}11835 & 3268 & 2508 & 2317 \\ 
 & $\mathrm{W}_{\mathrm{lcv}}$ & 15083 & 13504 & 12382 & 4153 & 3243 & 2953 \\ 
 & $\mathrm{LG}_{\mathrm{lscv}}$ & 18952 & 16513 & 15364 & 10064 & 6417 & 5759 \\ 
 & $\mathrm{LG}_{\mathrm{lcv}}$ & \cellcolor{black!10}14854 & \cellcolor{black!30}12790 & \cellcolor{black!30}11536 & 5586 & 4153 & 3483 \\ 
 & $\mathrm{G}_{\mathrm{lscv}}$ & 17141 & 15597 & 14727 & \cellcolor{black!30}1229 & \cellcolor{black!30}995 & \cellcolor{black!30}797 \\ 
 & $\mathrm{G}_{\mathrm{lcv}}$ & 20592 & 19213 & 18408 & \cellcolor{black!10}1814 & \cellcolor{black!10}1869 & \cellcolor{black!10}1781 \\ 
\midrule
$(M_2,\Sigma_3)$ & $\mathrm{W}_{\mathrm{lscv}}$ & \cellcolor{black!30}44424 & \cellcolor{black!10}40248 & \cellcolor{black!10}37370 & \cellcolor{black!10}10264 & \cellcolor{black!10}8305 & \cellcolor{black!10}7923 \\ 
 & $\mathrm{W}_{\mathrm{lcv}}$ & 45386 & 41421 & 38707 & 12667 & 10888 & 9632 \\ 
 & $\mathrm{LG}_{\mathrm{lscv}}$ & 53801 & 49119 & 46359 & 24064 & 19871 & 16117 \\ 
 & $\mathrm{LG}_{\mathrm{lcv}}$ & \cellcolor{black!10}44852 & \cellcolor{black!30}40124 & \cellcolor{black!30}37179 & 15237 & 12274 & 10034 \\ 
 & $\mathrm{G}_{\mathrm{lscv}}$ & 57287 & 54429 & 52591 & \cellcolor{black!30}5839 & \cellcolor{black!30}8119 & \cellcolor{black!30}7500 \\ 
 & $\mathrm{G}_{\mathrm{lcv}}$ & 67270 & 59194 & 57714 & 11334 & 13722 & 8212 \\ 
\midrule
$(M_3,\Sigma_1)$ & $\mathrm{W}_{\mathrm{lscv}}$ & \cellcolor{black!30}7955 & \cellcolor{black!10}7079 & \cellcolor{black!10}6533 & 1639 & 1405 & 1269 \\ 
 & $\mathrm{W}_{\mathrm{lcv}}$ & 8270 & 7420 & 6890 & 2102 & 1935 & 1677 \\ 
 & $\mathrm{LG}_{\mathrm{lscv}}$ & 10277 & 9138 & 8481 & 4614 & 4086 & 3160 \\ 
 & $\mathrm{LG}_{\mathrm{lcv}}$ & \cellcolor{black!10}8163 & \cellcolor{black!30}7072 & \cellcolor{black!30}6391 & 3190 & 2560 & 1903 \\ 
 & $\mathrm{G}_{\mathrm{lscv}}$ & 9118 & 8255 & 7748 & \cellcolor{black!30}688 & \cellcolor{black!30}592 & \cellcolor{black!30}475 \\ 
 & $\mathrm{G}_{\mathrm{lcv}}$ & 10972 & 10112 & 9651 & \cellcolor{black!10}1177 & \cellcolor{black!10}1065 & \cellcolor{black!10}1081 \\ 
\midrule
$(M_3,\Sigma_2)$ & $\mathrm{W}_{\mathrm{lscv}}$ & \cellcolor{black!30}12789 & \cellcolor{black!30}11588 & \cellcolor{black!10}10776 & 2769 & 2379 & 1876 \\ 
 & $\mathrm{W}_{\mathrm{lcv}}$ & 13255 & 12168 & 11325 & 3430 & 3018 & 2533 \\ 
 & $\mathrm{LG}_{\mathrm{lscv}}$ & 16473 & 15435 & 14156 & 7839 & 6504 & 5006 \\ 
 & $\mathrm{LG}_{\mathrm{lcv}}$ & \cellcolor{black!10}13032 & \cellcolor{black!10}11782 & \cellcolor{black!30}10568 & 4841 & 3632 & 3059 \\ 
 & $\mathrm{G}_{\mathrm{lscv}}$ & 16418 & 15002 & 14233 & \cellcolor{black!30}1191 & \cellcolor{black!30}1091 & \cellcolor{black!30}918 \\ 
 & $\mathrm{G}_{\mathrm{lcv}}$ & 18806 & 18122 & 17746 & \cellcolor{black!10}1406 & \cellcolor{black!10}1507 & \cellcolor{black!10}1464 \\ 
\midrule
$(M_3,\Sigma_3)$ & $\mathrm{W}_{\mathrm{lscv}}$ & \cellcolor{black!10}38348 & \cellcolor{black!10}34153 & \cellcolor{black!10}31769 & \cellcolor{black!10}9646 & \cellcolor{black!10}8164 & \cellcolor{black!10}7548 \\ 
 & $\mathrm{W}_{\mathrm{lcv}}$ & 38867 & 34849 & 33095 & 11652 & 9076 & 8911 \\ 
 & $\mathrm{LG}_{\mathrm{lscv}}$ & 41400 & 39278 & 39464 & 20892 & 18455 & 17452 \\ 
 & $\mathrm{LG}_{\mathrm{lcv}}$ & \cellcolor{black!30}35924 & \cellcolor{black!30}32412 & \cellcolor{black!30}30671 & 13841 & 11980 & 10726 \\ 
 & $\mathrm{G}_{\mathrm{lscv}}$ & 49593 & 48822 & 47898 & \cellcolor{black!30}2736 & \cellcolor{black!30}2240 & \cellcolor{black!30}2111 \\ 
 & $\mathrm{G}_{\mathrm{lcv}}$ & 51964 & 51181 & 50429 & 12542 & 12665 & 10038 \\ 
\bottomrule
\end{tabular}}
\end{table}

\clearpage

\section{Real-data application}\label{app:application}

This section illustrates the use of the Wishart KDE for estimating the marginal density of realized covariance matrices from financial time series data, constructed from intraday returns.

More specifically, using a Bloomberg terminal, 5-minute intraday price data (the last trading price for each of the 192 5-minute intervals from 4 AM to 8 PM ET) on trading days were retrieved for the period from September 13, 2023, to September 12, 2024 (one calendar year) for both the AMZN stock and the SPY exchange-traded fund. For each trading day, the $2 \times 2$ realized covariance matrix of the return levels was computed from the 5-minute intraday prices of AMZN and SPY, where the component in position $(1,1)$ denotes the realized variance of AMZN and the component in position $(2,2)$ denotes the realized variance of SPY. This resulted in a time series of $n = 250$ covariance matrices, whose time evolution is shown in Figure~\ref{fig:s11.s12.s22.processes} through the realized variances and correlation series. There are visible spikes in volatility around April and August 2024.

\begin{figure}[b!]
\centering
\includegraphics[trim={1cm 0.3cm 0.4cm 1cm}, clip, width=0.96\textwidth]{}
\vspace{-5mm}
\caption{Time series of the realized variances of AMZN and SPY return levels (left) and of the realized correlation between AMZN and SPY (right).}
\label{fig:s11.s12.s22.processes}
\end{figure}

To estimate the marginal density of this time series, the Wishart KDE, $\smash{\hat{f}_{n,b}^{\,\mathrm{W}}}$, was used. The bandwidth $b$ was selected using the $h$-lag least-squares cross-validation method described in Appendix~\ref{app:bandwidth.selection}, with $h = \lceil n^{1/4} \rceil=4$. The left panel of Figure~\ref{fig:app1.CV} shows the negative logarithm of the least-squares cross-validation criterion. The optimal bandwidth under this criterion is $b_n^{\star} \approx 0.021$.

Contour plots of $S\mapsto \smash{\log\{\hat{f}_{n,b_n^{\star}}^{\,\mathrm{W}}(S)\}}$ are displayed in Figure~\ref{fig:app1.contour} for fixed correlation levels of $0$, $0.25$, $0.5$ and $0.75$; the figures are rescaled so that the logarithm of the density equals zero at the maximum value. High positive correlation is compatible with a wider range of combinations of intraday variances. The right panel of Figure~\ref{fig:app1.CV} shows the estimated correlation as a function of the realized variance quantile levels of AMZN and SPY. According to the kernel density estimate, the correlation is generally higher in periods of high volatility, but this relationship is asymmetric, as seen from the shape of the contour plots in the right panel of Figure~\ref{fig:app1.contour}. Higher realized variances of the SPY index return levels translate into high correlations with the Amazon stock, but not the converse. This aligns with AMZN's composition of over 4\% of the S\&P 500 index (in 2023--2024) and its well-established correlation with other technology companies (Microsoft, Apple, etc.), which constitute a large portion of the index.

\begin{figure}[ht!]
\centering
\includegraphics[trim={0.4cm 0.4cm 0.2cm 1cm}, clip, width=0.98\textwidth]{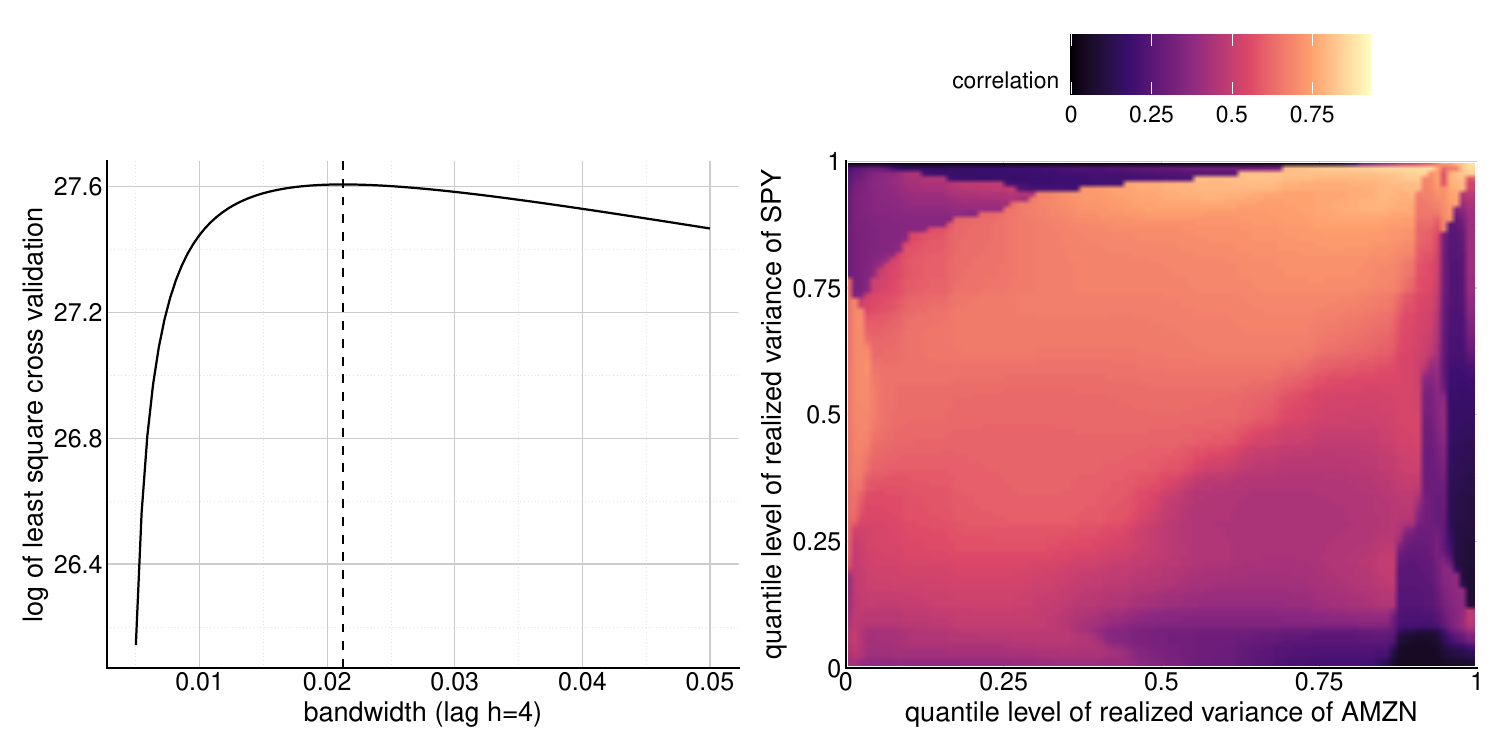}
\caption{Negative logarithm of the least-squares cross-validation criterion as a function of the bandwidth $b$ (left), and estimated correlation as a function of the realized variance quantile levels (right).}
\label{fig:app1.CV}
\end{figure}

\begin{figure}[H]
\vspace{-4mm}
\centering
\includegraphics[trim={0.3cm 0.3cm 0.5cm 0cm}, clip, width=0.495\textwidth]{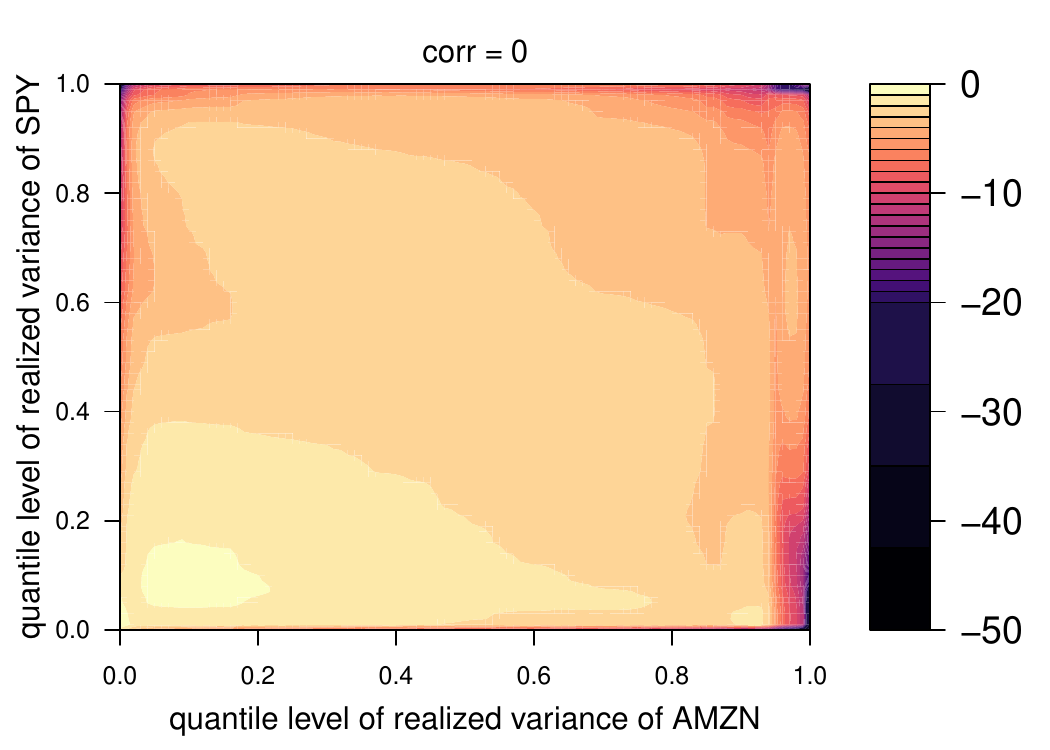}\hfill
\includegraphics[trim={0.3cm 0.3cm 0.5cm 0cm}, clip, width=0.495\textwidth]{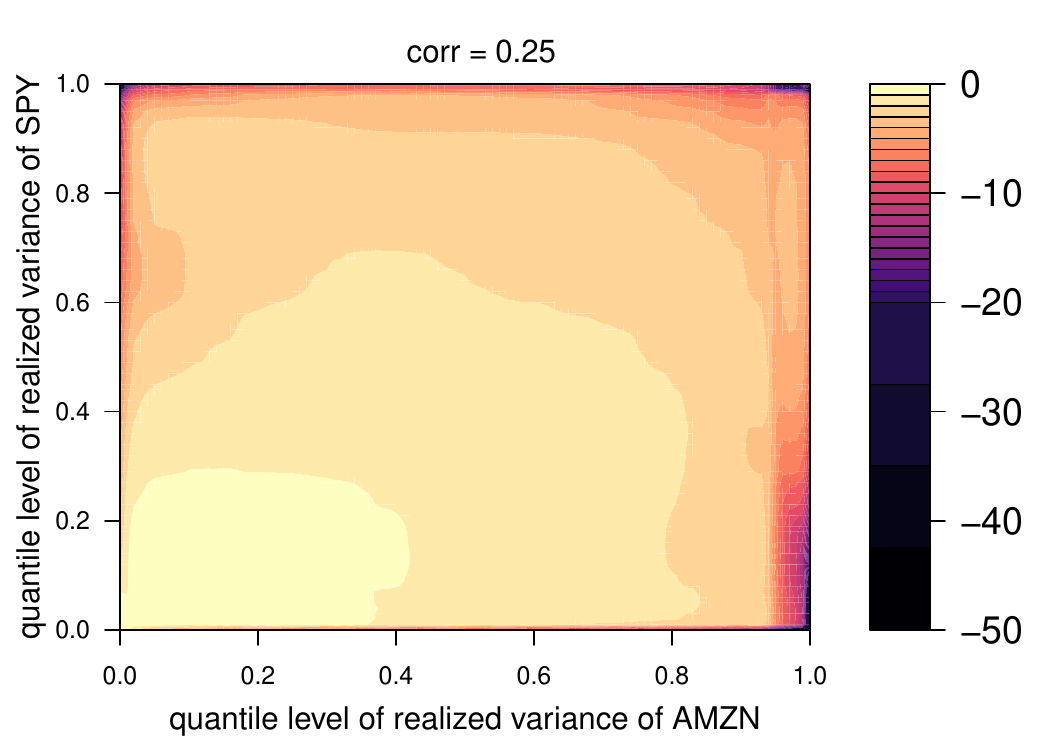}

\includegraphics[trim={0.3cm 0.3cm 0.5cm 0cm}, clip, width=0.495\textwidth]{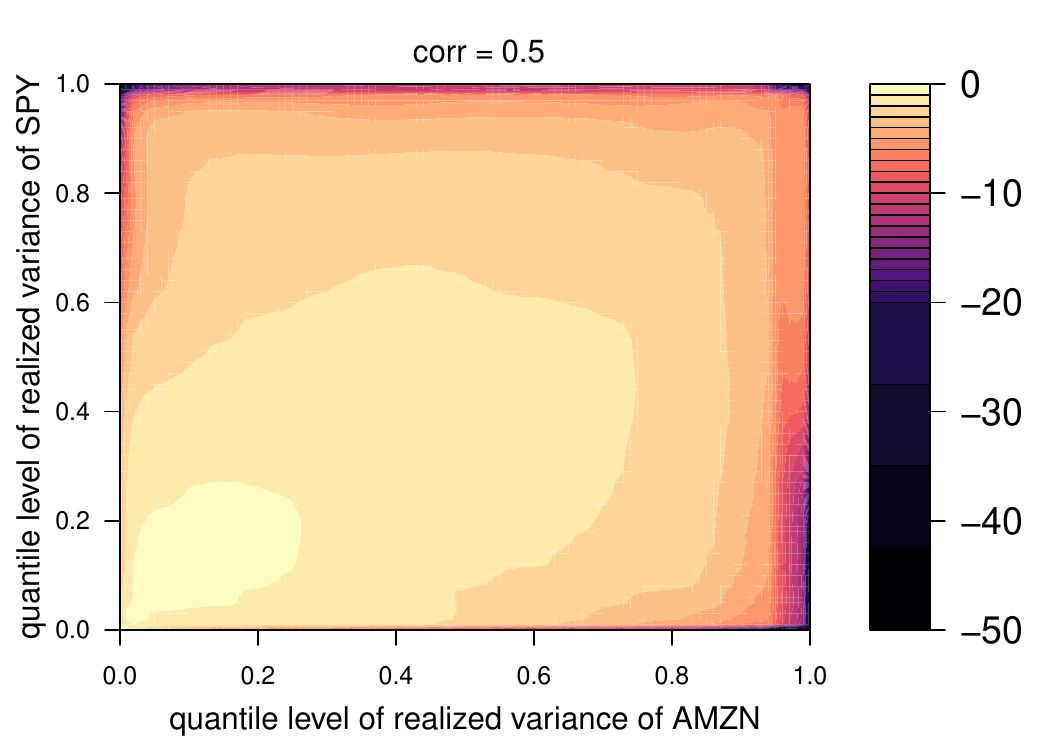}\hfill
\includegraphics[trim={0.3cm 0.3cm 0.5cm 0cm}, clip, width=0.495\textwidth]{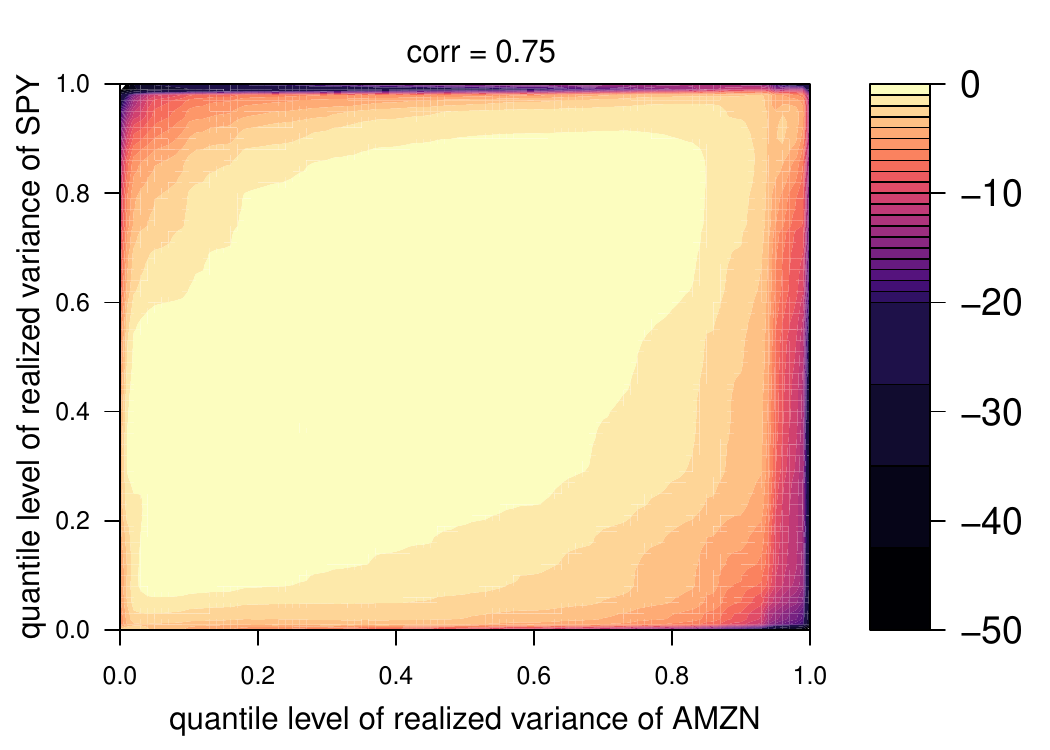}
\caption{Contour plots of $S\mapsto \smash{\log\{\hat{f}_{n,b_n^{\star}}^{\,\mathrm{W}}(S)\}}$ for cases of realized correlation of $0$, $0.25$, $0.5$, and $0.75$, rescaled so that the logarithm of the density equals zero at the maximum value.}
\label{fig:app1.contour}
\end{figure}

\section{List of abbreviations}\label{app:abbreviations}

\begin{tabular}{llll}
& a.s. &\hspace{20mm}& almost surely \\
& DTI &\hspace{20mm}& diffusion tensor imaging \\
& iid &\hspace{20mm}& independent and identically distributed \\
& IQR &\hspace{20mm}& interquartile range \\
& KDE &\hspace{20mm}& kernel density estimator/estimation \\
& LCV &\hspace{20mm}& likelihood cross-validation \\
& LSCV &\hspace{20mm}& least-squares cross-validation \\
& MAE &\hspace{20mm}& mean absolute error \\
& MSE &\hspace{20mm}& mean squared error \\
& RISE &\hspace{20mm}& root integrated squared error \\
& WAR &\hspace{20mm}& Wishart autoregressive (process) \\
\end{tabular}

\section{Reproducibility}\label{app:reproducibility}

The \textsf{R} package \texttt{ksm}, which is available from CRAN \citep{ksm}, implements the Wishart KDE and all functions needed to reproduce the simulation study and the real-data application presented herein. The code required to generate all figures and tables can be accessed through the GitHub repository at \href{https://github.com/FredericOuimetMcGill/WishartKDE}{https://github.com/FredericOuimetMcGill/WishartKDE}.

\end{appendices}

\section*{Acknowledgments}
\addcontentsline{toc}{section}{Acknowledgments}

Simulations were carried out using the computational resources supplied by Calcul Qu\'ebec and the Digital Research Alliance of Canada. The dataset featured in the real-data application in Appendix~\ref{app:application} was kindly provided by Prof.\ Anne MacKay (Universit\'e de Sherbrooke).

\section*{Funding}
\addcontentsline{toc}{section}{Funding}

Belzile acknowledges funding from the Natural Sciences and Engineering Research Council of Canada (NSERC) through Discovery Grant RGPIN-2022-05001. Genest's work was funded by NSERC Grant RGPIN-2024-04088 and the Canada Research Chairs Program (Grant 950-231937). Ouimet acknowledges funding from the NSERC through Discovery Grant RGPIN-2026-04471 and Discovery Launch Supplement DGECR-2026-00449. An early version of this work was completed while Ouimet was a Research Associate at McGill University and a postdoctoral fellow at the Université de Sherbrooke. These positions were funded through the research grants of Christian Genest and Anne MacKay, respectively.

\section*{References}
\addcontentsline{toc}{section}{References}

\setlength{\bibsep}{0pt plus 0ex}

\bibliographystyle{plainnat}
\bibliography{bib_clean}

\end{document}